\begin{document}
\begin{CJK*}{GBK}{song}

\title[Global existence of solutions for quantum hydrodynamics]
{Global existence and semiclassical limit for quantum hydrodynamic equations with viscosity and heat conduction}
\author[X. Pu \& B. Guo]{Xueke Pu and Boling Guo}  

\address{Xueke Pu \newline
Department of Mathematics, Chongqing University, Chongqing 401331, P.R.China} \email{ xuekepu@cqu.edu.cn}

\address{Boling Guo \newline
Institute of Applied Physics and Computational Mathematics, P.O. Box 8009,  Beijing,  China,  100088} \email{gbl@iapcm.ac.cn}

\thanks{This work is supported in part by NSFC (11471057) and Natural Science Foundation Project of CQ CSTC (cstc2014jcyjA50020).}
\subjclass[2000]{35M20; 35Q35} \keywords{}

\begin{abstract}
The hydrodynamic equations with quantum effects are studied in this paper. First we establish the global existence of smooth solutions with small initial data and then in the second part, we establish the convergence of the solutions of the quantum hydrodynamic equations to those of the classical hydrodynamic equations. The energy equation is considered in this paper, which added new difficulties to the energy estimates, especially to the selection of the appropriate Sobolev spaces.
\end{abstract}

\maketitle \numberwithin{equation}{section}
\newtheorem{proposition}{Proposition}[section]
\newtheorem{theorem}{Theorem}[section]
\newtheorem{lemma}[theorem]{Lemma}
\newtheorem{remark}[theorem]{Remark}
\newtheorem{hypothesis}[theorem]{Hypothesis}
\newtheorem{definition}{Definition}[section]
\newtheorem{corollary}{Corollary}[section]
\newtheorem{assumption}{Assumption}[section]

\section{Introduction}
The hydrodynamic equations and related models with quantum effects are extensively studied in recent two decades. In these models, the quantum effects is included into the classical hydrodynamic equations by incorporating the first quantum corrections of $O(\hbar^2)$, where $\hbar$ is the Planck constant. One of the main applications of the quantum hydrodynamic equations is as a simplified but not a simplistic approach for quantum plasmas. In particular, the nonlinear aspects of quantum plasmas of quantum plasmas are much more accessible using a fluid description, in comparasion with kinetic theory. One may see the recent monograph of Haas \cite{Haas11} for many physics backgrounds and mathematical derivation of many interesting models. Many other applications of the quantum hydrodynamic equations consisting of analyzing the flow the electrons in quantum semiconductor devices in nano-size \cite{Gardner94}, where quantum effects like particle tunnelling through potential barriers and built-up in quantum wells, can not be simulated by classical hydrodynamic model. Similar macroscopic quantum models are also used in many other physical fields such as superfluid and superconductivity \cite{Feynman72}.

Let us first consider the following classical hydrodynamic equations in conservation form, describing the motion of the electrons in plasmas by omitting the electric potential
\begin{subequations}\label{equ2}
\begin{numcases}{}
\frac{\partial n}{\partial t}+\frac1m\frac{\partial \Pi_i}{\partial x_i}=0,\label{e2n}\\
\frac{\partial \Pi_j}{\partial t}+\frac{\partial}{\partial x_i}(u_i\Pi_j-P_{ij})=0,\label{e2u}\\
\frac{\partial W}{\partial t}+\frac{\partial}{\partial x_i}(u_iW-u_jP_{ij}+q_i)=0,\label{e2t}
\end{numcases}
\end{subequations}
where $n$ is the density, $m$ is the effective electron mass, $u=(u_1,u_2,u_3)$ is the velocity, $\Pi_j$ is the momentum density, $P_{ij}$ is the stress tensor, $W$ is the energy density and $q$ is the heat flux. In this system, repeated indices are summed over under the Einstein convention. This system also emerges from descriptions of the motion of the electrons in semiconductor devices, with the electrical potential and the relaxation omitted.

As in the classical hydrodynamic equations, the quantum conservation laws have the same form as their classical counterparts. However, to close the moment expansion at the third order, we define the above quantities $\Pi_i,P_{ij}$ and $W$ in terms of the density $n$, the velocity $u$ and the temperature $T$. As usual, the heat flux is assumed to obey the Fourier law $q=-\kappa\nabla T$ and the momentum density is defined by $\Pi_i=mnu_i$, where $m$ is the electron mass and $u$ the velocity. The symmetric stress tensor $P_{ij}$ and the energy density $W$ are defined, with quantum corrections, by
\[
P_{ij}=-nT\delta_{ij}+\frac{\hbar^2n}{12m}\frac{\partial^2}{\partial x_i\partial x_j}\log n+O(\hbar^4)
\]
and
\[
W=\frac{3}{2}nT+\frac12mn|u|^2-\frac{\hbar^2n}{24m}\Delta\log n+O(\hbar^4),
\]
respectively, where $\hbar$ is the Planck constant, and is very small compared to macro quantities.

As far as the quantum corrections are concerned, the quantum correction to the energy density was first derived by Wigner \cite{Wigner32} for thermodynamic equilibrium, and the quantum correction to the stress tensor was proposed by Ancona and Tiersten \cite{AT87} and Ancona and Iafrate \cite{AI89} on the Wigner formalism. See also \cite{Gardner94} for derivation of the system \eqref{equ2}  by a moment expansion of the Wigner-Boltzmann equation and an expansion of the thermal equilibrium Wigner distribution function to $O(\hbar^2)$, leading to the expression for $\Pi$ and $W$ above. We also remark the quantum correction term is closely related to the quantum Bohm potential \cite{Bohm52}
\[
Q(n)=-\frac{\hbar^2}{2m}\frac{\Delta\sqrt{n}}{\sqrt{n}},
\]
where $n$ is the charge density. It relates to the quantum correction term in $P_{ij}$ with
\[
-n\nabla Q(n)=\frac{\hbar^2}{4m}\text{div}(n\nabla^2\log n)=\frac{\hbar^2}{4m}\Delta\nabla\rho -\frac{\hbar^2}{m}\text{div}(\nabla\sqrt{n}\otimes\sqrt{n}).
\]

For the system \eqref{equ2}, there is no dissipation in the second equation. Given $n$ and $T$, the second equation if hyperbolic, and generally we can not expect global smooth solutions for this system. In this paper, we consider the following viscous system by taking into account the stress tensor $\Bbb S$,
\begin{subequations}\label{equ1}
\begin{numcases}{}
\partial_tn+\nabla\cdot(n{u})=0,\\
\partial_t{u}+{u}\cdot\nabla{u}+\frac{1}{mn}\nabla(nT)- \frac{\hbar^2}{12m^2n}\text{div}\{n(\nabla\otimes\nabla)\log n\}=\frac{1}{mn}\text{div}\Bbb S,\\
\partial_tT+{u}\cdot\nabla T+\frac23T\nabla\cdot{u}-\frac2{3n}\nabla\cdot(\kappa\nabla T)+\frac{\hbar^2}{36mn}\nabla\cdot(n\Delta{u})=\frac{2}{3mn}\{\nabla\cdot({u}\Bbb S)-{u}\cdot\text{div}\Bbb S\}.\ \ \ \ \ \ \ \
\end{numcases}
\end{subequations}
Here, $\Bbb S$ is the stress tensor defined by
\[
\Bbb S=\mu(\nabla{u}+(\nabla{u})^T)+\lambda(\text{div}{u})\Bbb I,
\]
where $\Bbb I$ is the $d\times d$ identity matrix, $\mu>0$ and $\lambda$ are the primary coefficients of viscosity and the second coefficients of viscosity, respectively, satisfying $2\mu+3\lambda>0$. Without quantum corrections (i.e., setting $\hbar=0$), this system is exactly the classical hydrodynamic equations studied in the seminal paper of Matsumura and Nishida \cite{MN80}.

Although important, there is little result on the system \eqref{equ1} to the best of our knowledge. But there does exist a large amount of work for system very similar to \eqref{equ1}. These work comes from two main origins. The first one is from the quantum correction to various hydrodynamic equations, especially in semiconductors and in plasmas. Gardner \cite{Gardner94} derived the full 3D quantum hydrodynamic model by a moment expansion of the Wigner-Boltzmann equation. Hsiao and Li \cite{HL09} reviewed the recent progress on well-posedness, stability analysis, and small scaling limits for the (bi-polar) quantum hydrodynamic models, where the interested readers may find many useful references therein. Jungel \cite{Jungel10} proved global existence of weak solutions for the isentropic case. See also \cite{LL05,LM04,LM01}.

The other one, being equally important, emerges from the study of the compressible fluid models of Korteweg type, which are usually used to describe the motion of compressible fluids with capillarity effect of materials. See Korteweg \cite{Korteweg1901} and the pioneering work of Dunn and Serrin \cite{DS85}. The reference list can be very long, and we only mention a few of them. Hattori and Li \cite{HLi94,HLi96} considered the local and global existence of smooth solutions for for the fluid model of Korteweg type for small initial data. Wang and Tan \cite{WT11} studied the optimal decay for the compressible fluid model of Korteweg type. Recently, Bian, Yao and Zhu \cite{BYZ14} studied the global existence of small smooth solutions and the vanishing capillarity limit of this model. Jungel \emph{et al} \cite{JLW14} showed a combined incompressible and vanishing capillarity limit in the barotropic compressible Navier-Stokes equations for smooth solutions.

Almost all of the above mentioned results considered the isothermal case, studying only the continuity equation and the momentum equation, or with electric potential described by a Poisson equation. To the best of our knowledge, there is no mathematical studies for the full quantum hydrodynamic system \eqref{equ2}. The system \eqref{equ1} is itself interesting, since the energy equation also includes the quantum effects through the energy density $W$, which brings new features into this system. This makes it different from the previous known results, to be precisely stated in the following.

The aim of this paper is two fold. On one hand, we show the global existence of smooth solutions for \eqref{equ1} with fixed constant $\hbar>0$ when the initial data is small near the constant stationary solution $(n,u,T)=(1,0,1)$. To be precise, we denote the perturbation by $(\rho,u,\theta)=(n-1,u,T-1)$ and transform the \eqref{equ1} into \eqref{QF}. The result is then stated in Theorem \ref{thm1} for $\eqref{QF}$, where the estimates is stated in terms of the planck constant $\hbar$, and we can see clearly how the quantum corrections affect the estimates. On the other hand, since \eqref{equ1} modifies the classical hydrodynamic equations to a macro-micro level in the sense that it incorporates the (micro) quantum corrections, it is expected that as the Planck constant $\hbar\to0$, the solution of the system \eqref{equ1} converges to that of the classical hydrodynamic equations. This limit is rigorously studied in this paper and stated in Theorem \ref{thm2}. In particular, algebraic convergence rate is given in terms of $\hbar$.

Among others, one of the main novelties is the selection of the Sobolev space like $(\rho, u,\theta)\in H^{k+2}\times H^{k+1}\times H^k$. The underlying reason lies in the fact that the quantum effects in the energy density involves higher order derivatives of the velocity, and hence we cannot seek solutions in the same Sobolev spaces for $\theta$ and $u$.

This paper is organized as follows. In Sect. 2, we present some preliminaries. We translate the system \eqref{equ1} into a convenient form and state the main results in this paper. In Sect. 3, we give the \emph{a priori} estimates, and then prove Theorem \eqref{thm1} (existence result) at the end of this section. Finally, in Sect. 4, we prove Theorem \eqref{thm2}, by showing the convergence of the solutions of the quantum hydrodynamic equations \eqref{QF} to that of the classical hydrodynamic equations \eqref{zero}. The algebraic convergence rate is also given in terms of the Planck constant $\hbar$.

Notations. Throughout, $C$ denotes some generic constant independent of time $t>0$ and the Planck constant $\hbar>0$. Let $p\in[1,\infty]$, $L^p$ denotes the usual Lebesgue space with norm $\|\cdot\|_{L^p}$. When $p=2$, it is usually write $\|\cdot\|=\|\cdot\|_{L^p}$, omitting the subscript. Let $H^k$ denote the Sobolev space of the measurable functions whose generalized derivatives up to $k^{th}$ order belong to $L^2$ with norm $\|\cdot\|_{H^k}=(\sum_{j=0}^k\|D^{j}\cdot\|^2)^{1/2}$. $\dot H^{s,p}$ denotes the homogeneous Sobolev spaces and $[A,B]=AB-BA$ denotes the commutator of $A$ and $B$.

\section{Preliminaries and Main Results}
In this section, we reformulate the system \eqref{equ1} in convenient variables. First we take $(n,u,T)=(1,0,1)$ to be a constant solution to \eqref{equ1} and consider
\[
\rho=n-1,u=u,\theta=T-1.
\]
In these unknowns, with $m=1$, the \eqref{equ1} transforms into
\begin{subequations}\label{QF}
\begin{numcases}{}
\partial_t\rho+u\cdot\nabla\rho+(1+\rho)\text{div}u=0,\label{QF1}\\
\partial_tu-\frac{\mu}{\rho+1}\Delta u-\frac{\mu+\lambda}{\rho+1}\nabla\text{div}u =-u\cdot\nabla u -\nabla\theta-\frac{\theta+1}{\rho+1}\nabla\rho \nonumber\\
\ \ \ \ \ +\frac{\hbar^2}{12}\frac{\Delta\nabla\rho}{\rho+1} -\frac{\hbar^2}{3}\frac{\text{div} (\nabla{\sqrt{\rho+1}} \otimes\nabla{\sqrt{\rho+1}})}{\rho+1},\label{QF2}\\
\partial_t\theta-\frac{2\kappa}{3(1+\rho)}\Delta\theta =-u\cdot\nabla\theta -\frac23(\theta+1)\nabla\cdot u \nonumber\\
\ \ \ \ \ +\frac{\hbar^2}{36(1+\rho)}\text{div}((1+\rho) \Delta u) +\frac{2}{3(1+\rho)}\left\{\frac{\mu}{2}|\nabla u+(\nabla u)^T|^2+\lambda(\text{div} u)^2\right\},\label{QF3}\ \ \ \ \ \ \ \
\end{numcases}
\end{subequations}
with initial data
\[
(\rho,u,\theta)(0,x)=(\rho_0,u_0,\theta_0)(x)=(n_0-1,u_0,T_0-1)(x).
\]
We first state the local-in-time existence of smooth solutions to \eqref{QF}. To be precise, we first set
\begin{equation*}
\begin{split}
|||(\rho,u,\theta)|||_0^2:=&\|(\rho,u,\theta)\|^2_{L^2} +\|\nabla\rho\|^2_{L^2} +\|(\hbar\nabla\rho,\hbar\nabla u)\|^2_{L^2}+\|\hbar^2\Delta\rho\|^2_{L^2},\\
|||(\rho,u,\theta)|||_k^2:=&|||(\rho,u,\theta)|||_{k-1}^2 +|||(\nabla^k\rho,\nabla^ku,\nabla^k\theta)|||_{0}^2,
\end{split}
\end{equation*}
and
\begin{equation*}
\begin{split}
\mathcal E_k(0,T)=\left\{(\rho,u,\theta)(t,\cdot): \sup_{t\in[0,T]}|||(\rho,u,\theta)(t)|||_k^2<\infty\right\}.
\end{split}
\end{equation*}

\begin{theorem}[Local existence]
For any initial data such that $n_0\geq \delta>0$ is satisfied and $(\rho_0=n_0-1,u_0,\theta_0)\in H^{k+2}\times H^{k+1}\times H^{k}$ $(k\geq3)$, there exists some $T>0$ such that the Cauchy problem \eqref{QF} has a unique solution $(\rho,u,\theta)$ in $[0,T]$ such that $(\rho,u,\theta)\in \mathcal E_k(0,T)$ and
$$|||(\rho,u,\theta)(t)|||_k^2\leq C_k|||(\rho_0,u_0,\theta_0)|||_k^2.$$
\end{theorem}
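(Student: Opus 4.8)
The plan is to construct the solution by a linear iteration scheme, obtain uniform bounds in the norm $|||\cdot|||_k$, and then pass to the limit. Starting from the constant state, I would define a sequence $(\rho^{j},u^{j},\theta^{j})$ by freezing the variable coefficients and the nonlinear and quantum source terms at the previous iterate: given $(\rho^{j},u^{j},\theta^{j})$, solve the \emph{linear} transport equation \eqref{QF1} for $\rho^{j+1}$ with the given drift $u^{j}$, then the \emph{linear} parabolic equation \eqref{QF2} for $u^{j+1}$ (the operator $-\frac{\mu}{\rho^{j}+1}\Delta-\frac{\mu+\lambda}{\rho^{j}+1}\nabla\mathrm{div}$ is uniformly elliptic once $\rho^{j}+1\ge\delta/2$), and finally the linear parabolic equation \eqref{QF3} for $\theta^{j+1}$. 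Each linear problem is solvable on a common short interval $[0,T]$ by standard transport and parabolic theory, and the hypothesis $n_{0}\ge\delta>0$ guarantees that $\rho^{j}+1$ stays bounded below on $[0,T]$ for $T$ small, keeping the coefficients admissible.

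The core of the argument is a uniform-in-$j$ energy estimate in $|||\cdot|||_k$. For each multi-index $|\alpha|\le k$ I would apply $\nabla^{\alpha}$ to the three equations, pair with $\nabla^{\alpha}\rho$, $\nabla^{\alpha}u$, $\nabla^{\alpha}\theta$ respectively (together with the $\hbar$-weighted pairings dictated by the definition of $|||\cdot|||_0$), and integrate by parts. The viscosity and heat conduction supply the dissipative terms $\mu\|\nabla\nabla^{\alpha}u\|^2$, $(\mu+\lambda)\|\mathrm{div}\,\nabla^\alpha u\|^2$ and $\kappa\|\nabla\nabla^\alpha\theta\|^2$, which absorb lower-order errors; the variable coefficients $1/(\rho^j+1)$ and $(\theta^j+1)$ are handled by Moser-type product and composition estimates and commutators $[\nabla^\alpha,1/(\rho^j+1)]$. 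The crucial structural point is the \emph{skew-symmetric} coupling between the pressure and quantum terms in the momentum equation and the continuity equation. At leading order the term $\frac{\hbar^2}{12}\int \frac{\Delta\nabla\rho}{\rho+1}\cdot u\,dx$ integrates by parts into a multiple of $\int \Delta\rho\,\mathrm{div}\,u\,dx$, and substituting $\mathrm{div}\,u\approx-\partial_t\rho$ from \eqref{QF1} turns this into a multiple of $\frac{\hbar^2}{12}\frac{d}{dt}\|\nabla\rho\|^2$, exactly the weight $\|\hbar\nabla\rho\|^2$ appearing in $|||\cdot|||_0$; one further differentiation produces the $\|\hbar^2\Delta\rho\|^2$ term. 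Thus the dispersive quantum contributions are not dissipated but stored in the energy, which is precisely why the norm carries the $\hbar$-weighted pieces.

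The main obstacle is to show that the estimate closes with \emph{no net loss of derivatives} in the asymmetric space $H^{k+2}\times H^{k+1}\times H^{k}$. The quantum term $\frac{\hbar^2}{12}\Delta\nabla\rho/(\rho+1)$ in \eqref{QF2} is third order in $\rho$ but enters only the equation for $u$, and the term $\frac{\hbar^2}{36(1+\rho)}\mathrm{div}((1+\rho)\Delta u)$ in \eqref{QF3} is second order in $u$ but enters only the equation for $\theta$; these two regularity jumps are exactly compensated by letting $\rho$ carry two more derivatives than $u$ and $u$ one more than $\theta$. I would track the top-order terms carefully to verify that every derivative of $\rho$, $u$, $\theta$ produced by these cross-terms is either matched by the corresponding weighted norm in $|||\cdot|||_k$, or cancelled by the continuity/momentum skew-symmetry and absorbed by the parabolic dissipation, after which a Gr\"onwall argument and smallness of $T$ close the bound. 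Getting the bookkeeping of these higher-order quantum couplings to balance is the delicate step, and is the reason the conclusion is stated in this particular scale of spaces.

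Finally, the uniform bound $\sup_{[0,T]}|||(\rho^j,u^j,\theta^j)|||_k^2\le C_k|||(\rho_0,u_0,\theta_0)|||_k^2$ lets me extract a convergent subsequence. To identify the limit I would show that the iterates form a Cauchy sequence in the lower-order norm $|||\cdot|||_0$ by an energy estimate on the differences $(\rho^{j+1}-\rho^{j},u^{j+1}-u^{j},\theta^{j+1}-\theta^{j})$, which contracts because the high-order bound controls the coefficient differences; interpolating between $|||\cdot|||_0$-convergence and the uniform $|||\cdot|||_k$-bound yields convergence strong enough to pass to the limit in all nonlinear and quantum terms, producing a solution $(\rho,u,\theta)\in\mathcal E_k(0,T)$. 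Uniqueness follows from the same difference estimate applied to two solutions, and the stated bound is inherited from the \emph{a priori} estimate for $T$ chosen small.
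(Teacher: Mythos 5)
The paper gives no proof of this theorem; it only points to the ``dual argument and iteration techniques'' of Hattori and Li \cite{HLi94}, so the comparison must be against that method. Your discussion of the \emph{a priori} structure is sound and consistent with what the paper actually does in Section 3: the skew-symmetry between $\frac{\hbar^2}{12}\Delta\nabla\rho/(1+\rho)$ and the continuity equation is indeed what generates the stored energies $\|\hbar\nabla\rho\|^2$ and $\|\hbar^2\Delta\rho\|^2$, and the asymmetric scale $H^{k+2}\times H^{k+1}\times H^{k}$ is indeed forced by the regularity jumps of the two quantum terms.

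The gap is in the construction. The decoupled Picard scheme you propose --- transport for $\rho^{j+1}$ with drift $u^{j}$, then parabolic for $u^{j+1}$ with the quantum term \emph{frozen} as the source $\frac{\hbar^2}{12}\Delta\nabla\rho^{j}/(1+\rho^{j})$ --- destroys exactly the cancellation you identify as crucial, because the integration by parts $\int\Delta\nabla\rho\cdot u\leadsto\int\Delta\rho\,\mathrm{div}\,u\leadsto-\frac{d}{dt}\int\frac{|\nabla\rho|^2}{2(1+\rho)}$ requires the $\rho$ in the quantum term and the $\rho$ in \eqref{QF1} to be the \emph{same} unknown; with $\rho^{j}$ in the source and $\rho^{j+1}$ in the continuity equation the term $\int\Delta\rho^{j}\,\mathrm{div}\,u^{j+1}$ is not a time derivative of anything. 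Concretely, to bound $\hbar\nabla^{k+1}u^{j+1}$ by parabolic regularity you need the source $\hbar^{3}\Delta\nabla\rho^{j}\sim\hbar\cdot\hbar^{2}\nabla^{k+3}\rho^{j}$ in $L^2_tH^{k}$, whereas $|||\cdot|||_k$ controls $\hbar^{2}\rho$ only up to $k+2$ derivatives and the transport equation provides no smoothing; similarly, propagating $\hbar^{2}\nabla^{k+2}\rho^{j+1}$ through the transport equation demands $\nabla^{k+3}u^{j}$, which is one derivative beyond what the previous parabolic step supplies. So each iteration loses a derivative and the uniform bound in $\mathcal E_k(0,T)$ cannot be established for the sequence. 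The repair --- and the point of the ``dual argument'' the paper invokes --- is to keep the highest-order quantum term at the \emph{current} iterate, i.e.\ to linearize \eqref{QF1}--\eqref{QF2} as a single coupled linear system for $(\rho^{j+1},u^{j+1})$ with only the coefficients $1/(1+\rho^{j})$, $\theta^{j}$, etc.\ frozen, so that the antisymmetric cancellation survives at the linear level; solvability of that coupled degenerate dispersive--parabolic linear system is then obtained by duality/Galerkin rather than by composing off-the-shelf transport and parabolic solvers. With that modification the rest of your outline (uniform bound, contraction in the low norm, interpolation, uniqueness) goes through.
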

This theorem can be proved in a similar fashion as in \cite{HLi94} by the dual argument and iteration techniques, and hence omitted for brevity.

Now, we consider the global existence of smooth solutions. Let $T>0$, we set
\begin{equation}\label{smallness}
E=\sup_{0\leq t\leq T}{|||}(\rho,u,\theta)(t,\cdot)|||_3.
\end{equation}
One of the main purpose is to show the following
\begin{theorem}[Global existence]\label{thm1}
Suppose the initial data
\[
(\rho,u,\theta)(0)\in H^5\times H^4\times H^3
\]
and set $E_0:=|||(\rho,u,\theta)(0)|||_3<\infty$. There exists some $\hbar_0>0$, $\varepsilon_0>0$, $\nu_0>0$ and $C_0<\infty$, such that if $E_0<\varepsilon_0$ and $\hbar<\hbar_0$, then there exists a unique global in time solution $(\rho,u,\theta)\in \mathcal E_3(0,T)$ of the Cauchy problem \eqref{QF} for any $t\in (0,\infty)$, and the following estimates hold
\begin{equation*}
\begin{split}
{|||}(\rho,u,\theta)(t)|||_3^2 +\nu_0\int_0^t\sum_{k=1}^4\|\nabla^k(\rho,u,\theta, \hbar\nabla u,\hbar\nabla\rho)(s)\|^2ds
\leq C_0{|||}(\rho,u,\theta)(0)|||_3^2,
\end{split}
\end{equation*}
where the constants $\nu_0>0$ and $C_0>0$ are independent of time $t$ and $\hbar$.
\end{theorem}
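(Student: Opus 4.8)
The argument combines the local existence theorem with a uniform (in $t$ and $\hbar$) a priori estimate and the usual continuation principle. Fixing a time $T>0$ of existence and recalling $E$ from \eqref{smallness}, I would run a bootstrap under the smallness hypothesis $E\le\varepsilon_0$, $\hbar<\hbar_0$, whose goal is a differential inequality
\[
\frac{d}{dt}\mathcal{G}(t)+\nu_0\,\mathcal{D}(t)\le C\bigl(\sqrt{E}+\hbar\bigr)\mathcal{D}(t),
\]
where $\mathcal{G}(t)$ is an energy functional equivalent to ${|||}(\rho,u,\theta)(t)|||_3^2$ and $\mathcal{D}(t):=\sum_{k=1}^{4}\|\nabla^k(\rho,u,\theta,\hbar\nabla u,\hbar\nabla\rho)\|^2$ is the dissipation appearing in the statement. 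Choosing $\varepsilon_0,\hbar_0$ so small that $C(\sqrt{E}+\hbar)\le\frac12\nu_0$, integrating in time and invoking $\mathcal{G}\simeq{|||}\cdots|||_3^2$ yields the asserted bound, improves the a priori hypothesis, and lets the continuation principle promote the local solution to a global one; the resulting $\nu_0,C_0$ inherit independence of $t$ and $\hbar$ from the estimates.

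Building $\mathcal{G}$ and $\mathcal{D}$ requires four interlocking estimates, each also carried out after applying $\nabla^k$ for $k\le3$. (i) Testing the continuity equation \eqref{QF1} against $\nabla^k\rho$ controls $\|\nabla^k\rho\|^2$ in the energy (transport is conservative, the coupling $(1+\rho)\mathrm{div}\,u$ being paid for by the velocity dissipation). (ii) Testing \eqref{QF2} against $u$ and \eqref{QF3} against $\theta$ gives the physical energy $\|u\|^2+\|\theta\|^2$ with viscous and heat dissipation $\mu\|\nabla u\|^2+\frac{2\kappa}{3}\|\nabla\theta\|^2$; here the leading quantum term $\frac{\hbar^2}{12}\Delta\nabla\rho/(\rho+1)$, after integration by parts and elimination of $\mathrm{div}\,u$ via \eqref{QF1}, is $-\frac{\hbar^2}{24}\frac{d}{dt}\|\nabla\rho\|^2$, producing the quantum energy $\|\hbar\nabla\rho\|^2$. (iii) Testing \eqref{QF2} against $-\hbar^2\Delta u$ produces, up to constants, $\frac{d}{dt}(\hbar^2\|\nabla u\|^2)$ and---again using the dispersive term together with \eqref{QF1}---$\frac{d}{dt}(\hbar^4\|\Delta\rho\|^2)$, with the dissipation $\hbar^2\|\nabla^2u\|^2$; this is the source of the remaining $\hbar$-weighted terms $\|\hbar\nabla u\|^2$ and $\|\hbar^2\Delta\rho\|^2$ in the norm. (iv) Since \eqref{QF1} supplies no dissipation for $\rho$, I would add a small multiple of $\frac{d}{dt}\int u\cdot\nabla\rho$: the pressure $-\nabla\rho$ in \eqref{QF2} then furnishes $\|\nabla\rho\|^2$, and the same quantum term furnishes the extra $\hbar^2\|\nabla^2\rho\|^2$, completing the density part of $\mathcal{D}$. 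All variable coefficients are frozen at their background values, the remainders and convective terms being handled by commutator and Moser/Gagliardo--Nirenberg estimates and hence carrying a factor $\sqrt{E}$.

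The one genuinely delicate term, and the reason for the asymmetric spaces $\rho\in H^{5}$, $u\in H^{4}$, $\theta\in H^{3}$, is the quantum coupling $\frac{\hbar^2}{36(1+\rho)}\mathrm{div}((1+\rho)\Delta u)$ in \eqref{QF3}: unlike the momentum quantum term it injects extra derivatives of $u$ into the $\theta$-estimate and has no partner in \eqref{QF2} to cancel against. At the top level, tested against $\nabla^3\theta$, its leading part integrates by parts to $-\frac{\hbar^2}{36}\int\nabla^5u\cdot\nabla^4\theta$. Crucially, $\|\nabla^4\theta\|^2$ lives in $\mathcal{D}$ (heat conduction upgrades $\theta$ by one derivative) and $\hbar^2\|\nabla^5u\|^2$ lives in $\mathcal{D}$ (viscosity upgrades the top $\hbar$-weighted velocity bound $\hbar^2\|\nabla^4u\|^2$ by one derivative), so by Young's inequality the term is bounded by $\eta\|\nabla^4\theta\|^2+C_\eta\hbar^2\bigl(\hbar^2\|\nabla^5u\|^2\bigr)$; the explicit prefactor $\hbar^2$ is exactly what allows the velocity piece to be absorbed into $\mathcal{D}$ for $\hbar<\hbar_0$, while $\theta$ is never required at order four in the energy itself. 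Checking that this delicate balance persists through the variable coefficient $(1+\rho)$ and the commutators it spawns is the main technical point. Once (i)--(iv) are assembled with suitably small coupling constants into $\mathcal{G}\simeq{|||}\cdots|||_3^2$ and $\mathcal{D}$, the displayed inequality follows and the proof closes as described.
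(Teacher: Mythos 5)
Your proposal is correct and follows essentially the same route as the paper: an $\hbar$-weighted energy method with a cross term generating density dissipation, absorption of the quantum coupling $\frac{\hbar^2}{36(1+\rho)}\mathrm{div}((1+\rho)\Delta u)$ in the temperature equation via its $\hbar^2$ prefactor played against the $\hbar$-weighted velocity dissipation obtained by testing the momentum equation with $-\hbar^2\Delta\partial^{\alpha}u$, followed by a continuation argument. The only immaterial deviations are that the paper runs the zeroth-order estimate in the entropy variable $s=(1+\theta)/(1+\rho)^{2/3}-1$ with the Matsumura--Nishida functional $E^0$, obtains the $\|\partial^{\alpha}\rho\|^2$ energy from the pressure term symmetrized through the continuity equation rather than by testing the continuity equation directly, and at higher orders produces the density dissipation by testing with $-\beta\,\partial^{\gamma}\nabla\Delta\rho$ instead of your $\frac{d}{dt}\int u\cdot\nabla\rho$ device.
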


Formally, as $\hbar\to0$, \eqref{QF} tends to the following classical hydrodynamic equations for $(\rho^0,u^0,\theta^0)$ (studied in \cite{MN80})
\begin{subequations}\label{zero}
\begin{numcases}{}
\partial_t\rho^0+u^0\cdot\nabla\rho^0+(1+\rho^0)\text{div}u^0=0,\label{z1}\\
\partial_tu^0-\frac{\mu}{\rho^0+1}\Delta u^0-\frac{\mu+\lambda}{\rho^0+1}\nabla\text{div}u^0 =-u^0\cdot\nabla u^0 -\nabla\theta^0-\frac{\theta^0+1}{\rho^0+1}\nabla\rho^0,\ \ \ \ \label{z2}\\
\partial_t\theta^0-\frac{2\kappa}{3(1+\rho^0)}\Delta\theta^0 =-u^{0}\cdot\nabla\theta^0 -\frac23(\theta^0+1)\nabla\cdot u^0 \nonumber\\
\ \ \ \ \ \ \ \ \ \ \ \ \ \ \ \ \ \ \ \ \ \ \ \ \ \ \ \ \ \ \ \ +\frac{2}{3(1+\rho^0)}\left\{\frac{\mu}{2}|\nabla u^0+(\nabla u^0)^T|^2+\lambda(\text{div} u^0)^2\right\}.\ \ \ \ \label{z3}\ \ \ \ \ \ \ \
\end{numcases}
\end{subequations}

The convergence result is stated in the following
\begin{theorem}[Semiclassical limit]\label{thm2}
Let $(\rho^{\hbar},u^{\hbar},\theta^{\hbar})$ be the solution of \eqref{QF} and $(\rho^0,u^0,\theta^0)$ be the solution of \eqref{zero} with the same initial data $(\rho_0,u_0,\theta_0)\in H^5\times H^4\times H^3$. Then for all fixed time $T\in(0,\infty)$, we have the algebraic convergence
\begin{equation*}
\begin{split}
\sup_{t\in[0,T]}\|(\rho^{\hbar}-\rho^0,u^{\hbar}-u^0,\theta^{\hbar}-\theta^0)\|_{H^1}^2\leq \left[{c_2}e^{c_1T}/{c_1}\right]\hbar^4
\end{split}
\end{equation*}
and
\begin{equation*}
\begin{split}
\sup_{t\in[0,T]}\|(\rho^{\hbar}-\rho^0,u^{\hbar}-u^0,\theta^{\hbar}-\theta^0)\|_{H^2}^2\leq \left[{c_2}e^{c_1T}/{c_1}\right]\hbar^2,
\end{split}
\end{equation*}
for some constant positive constants $c_1$ and $c_2$, independent of $\hbar$ and $t$.
\end{theorem}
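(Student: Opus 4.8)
The plan is to carry out a weighted energy estimate for the difference $(\tilde\rho,\tilde u,\tilde\theta):=(\rho^\hbar-\rho^0,u^\hbar-u^0,\theta^\hbar-\theta^0)$, which by hypothesis has zero initial data. Subtracting \eqref{zero} from \eqref{QF} and writing every product difference in the telescoping form $f^\hbar g^\hbar-f^0g^0=\tilde f\,g^\hbar+f^0\tilde g$, I obtain a system in which all transport, pressure and difference-of-diffusion terms are \emph{linear} in $(\tilde\rho,\tilde u,\tilde\theta)$ with coefficients controlled by $(\rho^\hbar,u^\hbar,\theta^\hbar)$ and $(\rho^0,u^0,\theta^0)$ in suitable norms. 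The only genuinely $\hbar$-dependent forcing terms are the quantum corrections, which survive undifferenced: the momentum remainder $\hbar^2R_u$ with $R_u=\tfrac1{12}\tfrac{\Delta\nabla\rho^\hbar}{1+\rho^\hbar}-\tfrac13\tfrac{\text{div}(\nabla\sqrt{1+\rho^\hbar}\otimes\nabla\sqrt{1+\rho^\hbar})}{1+\rho^\hbar}$ and the temperature remainder $\hbar^2R_\theta$ with $R_\theta=\tfrac1{36(1+\rho^\hbar)}\text{div}((1+\rho^\hbar)\Delta u^\hbar)$. Crucially the continuity equation \eqref{QF1} carries no quantum correction, so $\tilde\rho$ is forced only through its coupling to $\tilde u$.

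Next I record the uniform-in-$\hbar$ bounds. Theorem \ref{thm1} controls $\|\rho^\hbar\|_{H^4},\|u^\hbar\|_{H^3},\|\theta^\hbar\|_{H^3}$ and, through the weighted norm $|||\cdot|||_3$, the quantities $\|\hbar\nabla^4u^\hbar\|$ and $\|\hbar^2\nabla^5\rho^\hbar\|$, all uniformly in $\hbar$ and $t$; the classical solution enjoys analogous bounds from \cite{MN80}. The leading derivative in $R_u$ is $\nabla^3\rho^\hbar$, so both $\|R_u\|$ and $\|\nabla R_u\|\sim\|\nabla^4\rho^\hbar\|$ are uniformly bounded. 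By contrast the leading derivative in $R_\theta$ is $\nabla^3u^\hbar$, hence $\|R_\theta\|$ is uniformly bounded but $\|\nabla R_\theta\|\sim\|\nabla^4u^\hbar\|\le C\hbar^{-1}$ only. This asymmetry — dictated exactly by the choice $u\in H^{k+1}$ versus $\theta\in H^k$ — is the source of the two different rates.

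For the $H^1$ bound I test the $\tilde\rho$-equation with $\tilde\rho$ and $\nabla\tilde\rho$, the $\tilde u$-equation with $\tilde u$ and $-\Delta\tilde u$, and the $\tilde\theta$-equation with $\tilde\theta$ and $-\Delta\tilde\theta$, summing with Matsumura--Nishida weights so that the $\nabla\tilde\rho$/$\text{div}\tilde u$ cross terms cancel. The quantum terms enter only as $-\hbar^2\int R_u\cdot\Delta\tilde u$ and $-\hbar^2\int R_\theta\,\Delta\tilde\theta$; since $\|R_u\|,\|R_\theta\|\le C$, Cauchy--Schwarz with $\varepsilon$-absorption into the parabolic dissipation gives a contribution $\le\varepsilon(\|\Delta\tilde u\|^2+\|\Delta\tilde\theta\|^2)+C\hbar^4$. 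All remaining terms are linear in the error, so I obtain $\frac{d}{dt}\|(\tilde\rho,\tilde u,\tilde\theta)\|_{H^1}^2+\nu_0(\text{dissipation})\le c_1\|(\tilde\rho,\tilde u,\tilde\theta)\|_{H^1}^2+c_2\hbar^4$, and Gronwall with vanishing initial data yields the first estimate.

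The $H^2$ bound proceeds identically at one higher order, testing the once-differentiated momentum and temperature equations with $-\Delta\nabla\tilde u$ and $-\Delta\nabla\tilde\theta$. After moving one derivative onto the remainder, the momentum contribution is controlled by $C\hbar^4\|\nabla R_u\|^2\le C\hbar^4$, still of order $\hbar^4$; but the temperature contribution is bounded only by $C\hbar^4\|\nabla R_\theta\|^2\le C\hbar^4\cdot\hbar^{-2}=C\hbar^2$. This is the crux and the expected main obstacle: the quantum correction in the energy equation carries third derivatives of $u^\hbar$, whose $H^1$ norm is uniformly bounded only after multiplication by $\hbar$, so nothing better than $\hbar^2$ is available at the $H^2$ level. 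The resulting inequality $\frac{d}{dt}\|(\tilde\rho,\tilde u,\tilde\theta)\|_{H^2}^2+\nu_0(\text{dissipation})\le c_1\|(\tilde\rho,\tilde u,\tilde\theta)\|_{H^2}^2+c_2\hbar^2$ and Gronwall give the second estimate. The remaining work is routine but necessary: verifying that every coefficient-difference term (e.g. $(\tfrac{1}{1+\rho^\hbar}-\tfrac1{1+\rho^0})\Delta u^0=-\tfrac{\tilde\rho}{(1+\rho^\hbar)(1+\rho^0)}\Delta u^0$) is genuinely linear in the error with $L^\infty$-bounded coefficient, and that the non-dissipative $\tilde\rho$-equation is closed through the standard cross-term cancellation.
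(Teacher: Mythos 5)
Your proposal is correct and follows essentially the same route as the paper: form the difference system, treat all coefficient differences as terms linear in the error, absorb the quantum remainders $\hbar^2R_u$ and $\hbar^2R_\theta$ into the dissipation, and apply Gronwall with zero initial data; you also identify exactly the paper's reason for the rate loss at the $H^2$ level, namely that the energy-equation correction carries $\nabla^4u^{\hbar}$, which is bounded only after multiplication by $\hbar$ (the paper's term $\|\hbar\nabla\mathrm{div}\,\Delta u^{\hbar}\|^2$ in \eqref{equ15}). The only cosmetic difference is your invocation of Matsumura--Nishida cross-term cancellation, which the paper does not need since no dissipation in $N$ is claimed.
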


The following three lemmas will be frequently used, and hence cited here for reader's convenience.
\begin{lemma}[Gagliardo-Nirenberg \cite{Nirenberg59}]\label{GN}
Let $p,q,r\in [1,\infty]$ and $0\leq i,j\leq l$ be integers, there exist some generic constants $\theta\in[0,1]$ and $C>0$, such that
\[
\|\nabla^ju\|_{L^p(\Bbb R^N)}\leq C\|\nabla^lu\|_{L^q(\Bbb R^N)}^{\theta}\|\nabla^iu\|_{L^r(\Bbb R^N)}^{1-\theta},
\]
where
\[
j-\frac{N}{p}=\theta(l-\frac{N}{q})+(1-\theta)(i-\frac{N}{r}).
\]
When $\theta=1$, $l-j\neq N/q$.
\end{lemma}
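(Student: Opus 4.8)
The statement is the classical Gagliardo–Nirenberg interpolation inequality, so the plan is not to invoke it but to derive it from elementary principles. The strategy has three stages: first a scaling argument that forces the dimensional balance relation and pins down $\theta$; then the proof of the genuinely one-dimensional interpolation estimate for a single pair of consecutive derivative orders, via the fundamental theorem of calculus together with an optimization in a free length parameter; and finally the passage to $\mathbb{R}^N$ and to arbitrary orders $i\le j\le l$ by slicing in each coordinate and inducting on the derivative gap.

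First I would record the necessity of the exponent condition by scaling. For $\lambda>0$ set $u_\lambda(x)=u(\lambda x)$; then $\|\nabla^m u_\lambda\|_{L^s(\mathbb{R}^N)}=\lambda^{m-N/s}\|\nabla^m u\|_{L^s(\mathbb{R}^N)}$ for every order $m$ and exponent $s$. Substituting $u_\lambda$ into the proposed inequality and demanding that the resulting power of $\lambda$ cancel on both sides yields exactly
\[
j-\frac{N}{p}=\theta\Big(l-\frac{N}{q}\Big)+(1-\theta)\Big(i-\frac{N}{r}\Big),
\]
which both shows the condition is forced and determines the admissible $\theta\in[0,1]$.

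The heart of the argument is the one-dimensional model case $N=1$, $i=0$, $j=1$, $l=2$, namely $\|u'\|_{L^p(\mathbb{R})}\le C\|u''\|_{L^q(\mathbb{R})}^{\theta}\|u\|_{L^r(\mathbb{R})}^{1-\theta}$. Here, for each $x$ and each $h>0$, the mean value theorem gives a point $\xi\in(x,x+h)$ with $u'(\xi)=(u(x+h)-u(x))/h$, and then $u'(x)=u'(\xi)-\int_x^{\xi}u''$, so that $|u'(x)|\le h^{-1}(|u(x+h)|+|u(x)|)+\int_x^{x+h}|u''|$. Taking $L^p$ norms in $x$, using Minkowski's integral inequality on the second term, and then optimizing the free parameter $h$ (balancing the two contributions, which is where the exponent $\theta$ from the scaling relation reappears) produces the claimed one-dimensional estimate. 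The same scheme handles one derivative interpolated between its neighbours of orders $m-1$ and $m+1$.

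Finally I would lift to $\mathbb{R}^N$ and to general $i,j,l$. Applying the one-dimensional estimate along lines parallel to each coordinate axis and reassembling with Hölder's and Minkowski's inequalities upgrades the consecutive-order estimate to all of $\mathbb{R}^N$; iterating it across the gap $l-i$ (equivalently, exploiting the log-convexity of $m\mapsto\log\|\nabla^m u\|$ that these consecutive estimates encode) yields the full interpolation between orders $i$ and $l$. The main obstacle I anticipate is bookkeeping the three free Lebesgue exponents $p,q,r$ simultaneously through the dimensional descent and the iteration while keeping the constant $C$ uniform, together with isolating the borderline case: when $\theta=1$ the estimate degenerates to a pure Sobolev embedding $\|\nabla^j u\|_{L^p}\le C\|\nabla^l u\|_{L^q}$, which fails precisely at $l-j=N/q$ (the $W^{1,N}\not\hookrightarrow L^\infty$ phenomenon), and this is exactly the exclusion imposed in the statement.
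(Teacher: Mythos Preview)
The paper does not prove this lemma at all: it is stated purely as a citation of the classical Gagliardo--Nirenberg inequality from \cite{Nirenberg59}, with no argument given. So there is nothing in the paper to compare your proposal against.

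That said, your sketch is the standard Nirenberg route (scaling to fix $\theta$, a one-dimensional pointwise estimate with an optimized free length, then slicing and induction on the derivative gap), and as an outline it is sound. The one place where the plan is thinner than the actual work required is the passage from the one-dimensional consecutive-order estimate to $\mathbb{R}^N$ with three independent Lebesgue exponents: the coordinate-wise slicing does not by itself produce mixed-norm control in a way that immediately recombines into a single $L^p$ norm, and the usual proofs (including Nirenberg's original) handle this via a more delicate intermediate step---either a mixed-norm H\"older argument in the spirit of the Loomis--Whitney/Sobolev embedding proof, or by first reducing to the case $i=0$ and then interpolating. You acknowledge this bookkeeping as the anticipated obstacle, which is fair, but be aware that it is the genuinely nontrivial part; the rest is routine. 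The exclusion $l-j\neq N/q$ at $\theta=1$ is correctly identified as the critical Sobolev failure.
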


\begin{lemma}\label{lem0}
Let $g(\rho)$ and $g(\rho,\theta)$ be smooth functions of $\rho$ and $(\rho,\theta)$, respectively, with bounded derivatives of any order, and $\|\rho\|_{L^{\infty}}<1$. Then for any integer $m\geq1$, we have
\[
\|\nabla^mg(\rho)\|_{L^p}\leq C\|\nabla^m\rho\|_{L^p},\ \|\nabla^mg(\rho,\theta)\|_{L^p}\leq C\|\nabla^m\rho,\nabla^m\rho\|_{L^p},\ \ \ \forall p\in[1,\infty],
\]
where $C$ may depend on $g$ and $m$. In particular,
\[
\|\partial_x^{\alpha}\left(\frac{\theta+1}{1+\rho}\right)\|_{L^p} \leq C\|(\rho,\theta)\|_{\dot H^{|\alpha|,p}},\ \ \ \forall p\in[1,\infty].
\]
\end{lemma}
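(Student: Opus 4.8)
The plan is to reduce everything to a Moser-type product estimate controlled by the Gagliardo--Nirenberg inequality (Lemma \ref{GN}). For the single-variable statement, I would first invoke the higher-order chain rule (Fa\`a di Bruno formula), which expresses $\nabla^m g(\rho)$ as a finite sum of terms of the schematic form
\[
g^{(l)}(\rho)\,\nabla^{m_1}\rho\,\nabla^{m_2}\rho\cdots\nabla^{m_l}\rho,\qquad m_1+\cdots+m_l=m,\ \ m_i\ge 1,\ \ 1\le l\le m.
\]
Because $\|\rho\|_{L^\infty}<1$ forces $\rho$ to take values in a fixed compact interval on which $g$ has bounded derivatives of every order, each coefficient satisfies $\|g^{(l)}(\rho)\|_{L^\infty}\le C_l$. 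The problem therefore collapses to estimating the $L^p$ norm of each product of derivatives of $\rho$.

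For such a product I would split the $L^p$ norm by H\"older's inequality with the exponents $p_i=pm/m_i$, which are admissible since $\sum_i 1/p_i=(1/p)\sum_i m_i/m=1/p$. Applying Lemma \ref{GN} to each factor with the data $(l,q;i,r)=(m,p;0,\infty)$ and interpolation parameter $\theta=m_i/m$ gives
\[
\|\nabla^{m_i}\rho\|_{L^{p_i}}\le C\,\|\nabla^m\rho\|_{L^p}^{\,m_i/m}\,\|\rho\|_{L^\infty}^{\,1-m_i/m},
\]
the scaling relation of Lemma \ref{GN} being verified directly from $m_i-N/p_i=(m_i/m)(m-N/p)$. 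Multiplying these bounds over $i=1,\dots,l$ and using $\sum_i m_i/m=1$ yields $\prod_i\|\nabla^{m_i}\rho\|_{L^{p_i}}\le C\|\nabla^m\rho\|_{L^p}\,\|\rho\|_{L^\infty}^{\,l-1}$; since $\|\rho\|_{L^\infty}<1$ the extra powers are bounded by $1$. Summing the finitely many Fa\`a di Bruno terms then gives $\|\nabla^m g(\rho)\|_{L^p}\le C\|\nabla^m\rho\|_{L^p}$.

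The two-variable estimate for $g(\rho,\theta)$ is identical in spirit: the multivariate Fa\`a di Bruno formula produces sums of mixed partials $\partial^l g(\rho,\theta)$ times products of derivatives of $\rho$ and $\theta$ whose orders add up to $m$, each such partial of $g$ is bounded on the compact range of $(\rho,\theta)$, and the same H\"older plus Gagliardo--Nirenberg argument controls each product by $\|\nabla^m\rho\|_{L^p}+\|\nabla^m\theta\|_{L^p}$. The final, particular claim follows by applying this with the smooth function $G(\rho,\theta)=(\theta+1)/(1+\rho)$ and $m=|\alpha|$: on the region $\{|\rho|<1\}$ one has $1+\rho>0$, so $G$ and all its partials are bounded, and the right-hand side is exactly $\|(\rho,\theta)\|_{\dot H^{|\alpha|,p}}$.

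The only genuinely delicate point is the bookkeeping in the Gagliardo--Nirenberg step, namely checking that the interpolation exponent $\theta=m_i/m$ and the integrability indices $p_i$ satisfy the scaling identity of Lemma \ref{GN}. The potential endpoint $\theta=1$, which would require the excluded coincidence $l-j=N/q$ to be avoided, occurs only when $l=1$, i.e.\ for the single factor $\nabla^m\rho$ with $p_1=p$; in that case no interpolation is needed and the bound $\|\nabla^m\rho\|_{L^p}$ is immediate, so the exceptional case never causes trouble. A secondary subtlety, relevant to the particular quotient, is that bounded derivatives of $1/(1+\rho)$ demand that $1+\rho$ stay uniformly positive, which is precisely what the hypothesis $\|\rho\|_{L^\infty}<1$ secures in the small-data regime where the lemma is invoked.
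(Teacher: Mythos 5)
Your proposal is correct and follows exactly the route the paper intends: the paper omits the proof, referring to \cite{WT11} and \cite{BYZ14} and to the Gagliardo--Nirenberg inequality, and the standard argument in those references is precisely your Fa\`a di Bruno decomposition combined with H\"older and the interpolation $\|\nabla^{m_i}\rho\|_{L^{pm/m_i}}\le C\|\nabla^m\rho\|_{L^p}^{m_i/m}\|\rho\|_{L^\infty}^{1-m_i/m}$. You also correctly identify the two delicate points (the $\theta=1$ endpoint of Lemma \ref{GN}, which only arises in the trivial single-factor term, and the need for a uniform lower bound on $1+\rho$, supplied in the paper by the smallness condition \eqref{equ12}), so nothing further is needed.
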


\begin{proof}
This can be proved in a similar fashion as in \cite{WT11} and \cite{BYZ14} making use of the Gagliardo-Nirenberg inequality, and hence omitted here for brevity.
\end{proof}

\begin{lemma}[Kato-Ponce \cite{KP88}]\label{Le-inequ}
Let $\alpha$ be any multi-index with $|\alpha|=k$ and $p\in (1,\infty)$. Then there exists some constant $C>0$ such that
\[
\|\partial_x^{\alpha}(fg)\|_{L^p}\leq  C\{\|f\|_{L^{p_1}}\|g\|_{\dot H^{s,p_2}}+\|f\|_{\dot H^{s,p_3}}\|g\|_{L^{p_4}}\},
\]
\[
\|[\partial_x^{\alpha},f]g\|_{L^p}\leq C\{\|\nabla f\|_{L^{p_1}}\|g\|_{\dot H^{k-1,p_2}}+\|f\|_{\dot H^{k,p_3}}\|g\|_{L^{p_4}}\},
\]
where $f,g\in \mathcal{S}$, the Schwartz class and $p_2,p_3\in (1,+\infty)$ such that
\[
\frac1p=\frac1{p_1}+\frac1{p_2}=\frac1{p_3}+\frac1{p_4}.
\]
\end{lemma}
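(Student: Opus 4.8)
The plan is to prove both inequalities via the homogeneous Littlewood-Paley decomposition together with Bony's paraproduct calculus. First I would reduce the integer statement to the fractional operator $D^s=(-\Delta)^{s/2}$ with $s=k$: since the Riesz transforms are bounded on $L^p$ for $1<p<\infty$, one has $\|\partial_x^{\alpha}h\|_{L^p}\lesssim\|D^kh\|_{L^p}$ and $\|D^kh\|_{L^p}\simeq\|h\|_{\dot H^{k,p}}$, so it suffices to bound $\|D^s(fg)\|_{L^p}$ and $\|[D^s,f]g\|_{L^p}$. Fix a homogeneous dyadic partition $\{\Delta_j\}_{j\in\mathbb Z}$ with low-frequency cutoffs $S_j=\sum_{i<j}\Delta_i$, and write $fg=T_fg+T_gf+R(f,g)$, where $T_fg=\sum_j(S_{j-2}f)\Delta_jg$ is the low-high paraproduct, $T_gf$ the symmetric high-low term, and $R(f,g)=\sum_{|i-j|\le1}\Delta_if\,\Delta_jg$ the resonant high-high term. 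Each of the two operators is then a bilinear Fourier multiplier, and the whole argument rests on the $L^{p_1}\times L^{p_2}\to L^p$ boundedness of Coifman-Meyer paraproducts.

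For the product estimate, view $D^s(fg)$ through its bilinear symbol $|\xi+\eta|^s$, where $\eta,\xi$ are the frequencies of $f,g$. On the low-high region $|\eta|\ll|\xi|$ one has $|\xi+\eta|^s\simeq|\xi|^s$, so $D^s$ essentially falls on $g$; rewriting the block sum as a Coifman-Meyer paraproduct and invoking its boundedness (a vector-valued Calderon-Zygmund / square-function argument valid precisely for $1<p<\infty$) yields $\|f\|_{L^{p_1}}\|D^sg\|_{L^{p_2}}$. By symmetry the high-low region contributes $\|D^sf\|_{L^{p_3}}\|g\|_{L^{p_4}}$. In the resonant region $|\eta|\simeq|\xi|$ the symbol is dominated by $\max(|\xi|,|\eta|)^s$ (here $s>0$ is used), so after summing the dyadic blocks geometrically the term is absorbed into either of the two right-hand sides.

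For the commutator estimate I would keep $[D^s,f]g$ as a single bilinear multiplier operator with symbol $m(\xi,\eta)=|\xi+\eta|^s-|\xi|^s$ and split by the same regions, the point being that cancellation is needed only in the low-high part. On $|\eta|\ll|\xi|$ the mean-value theorem gives $|m(\xi,\eta)|\lesssim|\eta|\,|\xi|^{s-1}$, which transfers one derivative onto the low-frequency factor $f$ and removes one from $g$, so the associated Coifman-Meyer operator is bounded by $\|\nabla f\|_{L^{p_1}}\|D^{s-1}g\|_{L^{p_2}}$. On the complementary region $|\eta|\gtrsim|\xi|$ both $|\xi+\eta|^s$ and $|\xi|^s$ are dominated by $|\eta|^s$, so the symbol is controlled by that of $D^s$ acting on $f$ and the term is bounded by $\|D^sf\|_{L^{p_3}}\|g\|_{L^{p_4}}$ without invoking any cancellation.

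The analytic heart, and the main obstacle, is justifying the low-high piece of the commutator rigorously: after the mean-value expansion one must verify that the remaining bilinear operator, whose symbol behaves like $(|\xi+\eta|^s-|\xi|^s)$ in the paraproduct regime, is a genuine admissible Coifman-Meyer multiplier so that the vector-valued $L^p$ bounds apply and the one-derivative gain is legitimate. This step, together with the fact that the Littlewood-Paley square-function estimates underlying the paraproduct boundedness hold only for $1<p<\infty$, is exactly what forces the hypotheses $p,p_2,p_3\in(1,\infty)$. Once these multiplier bounds are in place, the summation over dyadic blocks and the Holder bookkeeping $1/p=1/p_1+1/p_2=1/p_3+1/p_4$ are routine, and combining the three regions in each case produces the stated product and commutator inequalities.
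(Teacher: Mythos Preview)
Your proof sketch is mathematically sound and follows the now-standard route to Kato--Ponce estimates via Bony's paraproduct decomposition and Coifman--Meyer multiplier bounds. However, there is nothing to compare against: the paper does not supply its own proof of this lemma. It is stated as a citation from \cite{KP88} and used as a black box throughout the \emph{a priori} estimates in Section~3; no argument is given or even sketched.

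For what it is worth, your approach is not the one in the original Kato--Ponce paper either. Kato and Ponce worked with the operator $J^s=(1-\Delta)^{s/2}$ rather than $D^s$, and their proof proceeds through an explicit integral representation of $J^s(fg)-fJ^sg$ combined with pointwise kernel estimates and interpolation, not through Littlewood--Paley theory. The paraproduct route you outline is the modern reformulation (as in later work of Coifman--Meyer, Kenig--Ponce--Vega, Grafakos--Oh, and others), and it has the advantage of cleanly separating the three frequency interactions and making the role of the cancellation in the commutator transparent. Either approach suffices for the statement as used here; the paper simply takes the inequalities for granted.
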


\section{\emph{A priori} estimates}
In this section, we establish useful \emph{a priori} estimates of the solutions to \eqref{QF}. First of all, we let the Planck constant $\hbar<1$. To simplify the proof slightly, we assume that there exists a positive number $\varepsilon\ll1$ such that
\begin{equation}\label{equ9}
E=\sup_{t\in[0,T]}|||(\rho,u,\theta)(t)|||_3\leq \varepsilon,
\end{equation}
which together with Sobolev embedding, implies that
\begin{equation}\label{equ10}
\sup_{t\in[0,T]}\|(\rho,u,\theta),\nabla(\rho,u,\theta),\nabla^2\rho,\hbar\nabla^2(\rho,u), \hbar^2\nabla^3\rho\|_{L^{\infty}}\leq CE\leq C\varepsilon,
\end{equation}
and from \eqref{QF} the following
\begin{equation}\label{equ11}
\|\partial_t\rho\|_{L^{p}}\leq C\|\nabla\cdot(u(1+\rho))\|_{L^{p}}\leq CE\leq C\varepsilon,\ \ \forall 1\leq p\leq \infty,
\end{equation}
and
\begin{equation}\label{equ29}
\begin{split}
\|\partial_t\theta\|_{L^{p}}\leq & \frac{4\kappa}{3}\|\Delta\theta\|_{L^{p}} +\|u\cdot\nabla\theta\|_{L^{p}} +\|\text{div}u\|_{L^{p}} +\hbar^2\|\nabla((1+\rho)\Delta u)\|_{L^{p}}\\
& +C\|\nabla u\|_{L^{2p}}^2 \leq CE\leq C\varepsilon,\ \ \ \forall 1\leq p\leq 6.
\end{split}
\end{equation}
In particular, we choose $\varepsilon$ small enough such that
\begin{equation}\label{equ12}
\sup_{t\in[0,T]}\|(\rho,\theta)(t)\|_{L^{\infty}}\leq 1/2.
\end{equation}
\subsection{Basic estimates}\label{Sect3.1}
Now, we consider the zeroth order estimates for the system \eqref{QF}. As in \cite{MN80}, we set
\begin{equation}\label{equ3}
s=(1+\theta)/(1+\rho)^{2/3}-1,
\end{equation}
and define a function $E^0(\rho,u,s)$ for $\rho,u=(u^1,u^2,u^3)$ and $s$ by
\begin{equation}\label{equ4}
\begin{split}
E^0(\rho,u,s)=&\frac{3R(1+s)}{2}((1+\rho)^{5/3}-1-\frac{5\rho}{3}) +\frac{(1+\rho)}{2}|u|^2+Rs\rho+\frac{3R(1+\rho)s^2}{4}.
\end{split}
\end{equation}

The following lemma is proved in \cite{MN80}.
\begin{lemma}\label{lem1}
There exists constants $0<\rho_2\leq 1/2$ and $0<C_1\leq C_2<\infty$ such that $E^0$ is positive definite, i.e.,
\begin{equation*}
\rho^2+|u|^2+\theta^2\leq C_1E^0\leq C_2(\rho^2+|u|^2+\theta^2),\ \ \ \ for\ |\rho|\leq \rho_2.
\end{equation*}
\end{lemma}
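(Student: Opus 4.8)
The plan is to view $E^0$ as a smooth function of $(\rho,u,s)$ that vanishes together with its gradient at the equilibrium $(\rho,u,s)=(0,0,0)$, so that near equilibrium it is controlled from above and below by its Hessian there; the passage back to $\theta$ is then a harmless change of variables via \eqref{equ3}.

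First I would separate the kinetic part. Since $\rho_2\le 1/2$, we have $\tfrac14|u|^2\le \tfrac{1+\rho}{2}|u|^2\le \tfrac34|u|^2$, so this term is already comparable to $|u|^2$ and decouples from the rest. It then remains to study $F(\rho,s):=E^0-\tfrac{1+\rho}{2}|u|^2$, which is smooth for $\rho>-1$. Writing $\phi(\rho):=(1+\rho)^{5/3}-1-\tfrac53\rho$, a direct computation gives $\phi(0)=\phi'(0)=0$ and $\phi''(0)=\tfrac{10}{9}$, from which $F(0,0)=0$ and $\nabla F(0,0)=0$. Hence, by Taylor's formula with integral remainder, $F$ agrees to leading order with the quadratic form associated with
\[
\mathrm{Hess}\,F(0,0)=\begin{pmatrix} \tfrac{5R}{3} & R \\ R & \tfrac{3R}{2}\end{pmatrix}.
\]

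The one step that genuinely needs care is the positive-definiteness of this matrix, because the cross term $Rs\rho$ in \eqref{equ4} is sign-indefinite and cannot be dropped: one must verify the determinant $\tfrac{5R}{3}\cdot\tfrac{3R}{2}-R^2=\tfrac{3R^2}{2}>0$ together with the positive trace. By continuity of the second derivatives the Hessian stays positive definite on a small ball $\{|\rho|,|s|\le r_0\}$, so there $c_1(\rho^2+s^2)\le F(\rho,s)\le c_2(\rho^2+s^2)$. Finally, from \eqref{equ3} one has $s=\theta(1+\rho)^{-2/3}+g(\rho)$ with $g(\rho):=(1+\rho)^{-2/3}-1$; since $(1+\rho)^{\pm 2/3}$ is bounded above and below and $|g(\rho)|\le C|\rho|$ for $|\rho|\le 1/2$, the map $(\rho,\theta)\mapsto(\rho,s)$ is bi-Lipschitz near the origin, whence $\rho^2+s^2$ and $\rho^2+\theta^2$ are comparable. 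Choosing $\rho_2\le\min(1/2,r_0)$ small (and using the smallness of $\theta$ from \eqref{equ12} to keep $s$ inside the ball) and recombining with the kinetic bound yields the asserted two-sided estimate. The only real obstacle is thus the indefiniteness of the cross term $Rs\rho$, which is exactly what makes the diagonal terms insufficient and forces the determinant computation; everything else is routine Taylor expansion and a change of coordinates.
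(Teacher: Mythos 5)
Your argument is correct, and it is worth noting that the paper itself gives no proof here: it simply cites Matsumura--Nishida \cite{MN80}, where this positivity is established for the corresponding entropy-type energy. Your self-contained verification follows the standard route and the key computations check out: $\phi''(0)=\tfrac{10}{9}$ gives $F_{\rho\rho}(0,0)=\tfrac{5R}{3}$, $F_{ss}(0,0)=\tfrac{3R}{2}$, $F_{\rho s}(0,0)=R$, and the determinant $\tfrac{5R^2}{2}-R^2=\tfrac{3R^2}{2}>0$ together with the positive trace gives positive definiteness of the Hessian; the kinetic term $\tfrac{1+\rho}{2}|u|^2$ decouples cleanly for $|\rho|\le\tfrac12$, and the map $(\rho,\theta)\mapsto(\rho,s)$ from \eqref{equ3} is indeed bi-Lipschitz with $\rho^2+s^2$ comparable to $\rho^2+\theta^2$. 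The only point to flag is that your Taylor-with-integral-remainder step requires $(\rho,s)$ to stay in a ball where the Hessian is uniformly positive definite, since $F_{\rho s}=\tfrac{3R}{2}\phi'(\rho)+R+\tfrac{3R}{2}s$ and $F_{\rho\rho}=\tfrac{3R(1+s)}{2}\phi''(\rho)$ do depend on $s$; the lemma as literally stated restricts only $|\rho|$, not $s$ or $\theta$. You correctly cover this by invoking the smallness of $\theta$ from \eqref{equ12}, which is all the paper ever uses, but if one wanted the statement exactly as written one would instead exploit that $F$ is a quadratic polynomial in $s$ and check the discriminant condition directly (which also succeeds for small $|\rho|$). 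This is a cosmetic gap, not a substantive one.
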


We first prove the zeroth order estimates in the following
\begin{proposition}\label{prop1}
There exists a constant $\varepsilon_0>0$ such that if $E\leq \varepsilon\leq \varepsilon_0$, then the following a priori estimates holds for all $t\in[0,T]$,
\begin{equation}\label{equ42}
\begin{split}
{|||}(\rho,u,\theta)(t)|||_0^2 +\nu_0\int_0^t\|D(\rho,u,\theta)(s),(\hbar\Delta u,\hbar\Delta\rho)(s)\|^2ds \leq C{|||}(\rho,u,\theta)(0)|||_0^2,
\end{split}
\end{equation}
where $\nu_0>0$, $C=C(\varepsilon_0)$ are independent of $t$.
\end{proposition}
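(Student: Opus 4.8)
The plan is to run a Matsumura--Nishida type energy argument adapted to the quantum corrections. I would start from the positive-definite energy $E^0$ of \eqref{equ4}, which by Lemma \ref{lem1} is equivalent to $\|(\rho,u,\theta)\|^2$, and compute its time evolution. Multiplying the continuity equation \eqref{QF1}, the momentum equation \eqref{QF2}, and the temperature equation \eqref{QF3} by the corresponding Matsumura--Nishida multipliers (roughly, \eqref{QF2} against $(1+\rho)u$ and \eqref{QF3} against the entropy variable $s$ of \eqref{equ3}) and integrating over $\mathbb{R}^3$, the dissipative structure produces, on the left-hand side, $\frac{d}{dt}\int E^0\,dx$ together with the natural parabolic dissipation $\mu\|\nabla u\|^2 + c\kappa\|\nabla\theta\|^2$. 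All the genuinely nonlinear contributions generated here are at least cubic and are therefore absorbed using the smallness bounds \eqref{equ10}--\eqref{equ12} together with Lemma \ref{lem0}.

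The first quantum effect to account for is the Bohm force $\frac{\hbar^2}{12}\Delta\nabla\rho/(1+\rho)$ in \eqref{QF2}. Tested against $(1+\rho)u$ and integrated by parts, it becomes $-\frac{\hbar^2}{12}\int\Delta\rho\,\mathrm{div}\,u\,dx$ up to lower-order factors of $1/(1+\rho)$; substituting $\mathrm{div}\,u$ from the continuity equation \eqref{QF1} converts the leading part into $-\frac{d}{dt}\big(\frac{\hbar^2}{24}\|\nabla\rho\|^2\big)$, which is exactly the quantum energy $\|\hbar\nabla\rho\|^2$ appearing in $|||(\rho,u,\theta)|||_0^2$. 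The remaining quantum force $-\frac{\hbar^2}{3}\mathrm{div}(\nabla\sqrt{1+\rho}\otimes\nabla\sqrt{1+\rho})/(1+\rho)$ is quadratic in $\nabla\rho$, hence cubic overall, and is controlled by \eqref{equ10} and Lemma \ref{lem0}.

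Since \eqref{QF1} provides no dissipation for $\rho$, I would recover $\|\nabla\rho\|^2$ through the Matsumura--Nishida cross functional $\int(1+\rho)u\cdot\nabla\rho\,dx$. Differentiating it in time and inserting \eqref{QF2}, the pressure term $-\frac{\theta+1}{1+\rho}\nabla\rho$ yields the negative-definite $-\|\nabla\rho\|^2$, while the same Bohm force now contributes $\frac{\hbar^2}{12}\int\nabla\Delta\rho\cdot\nabla\rho = -\frac{\hbar^2}{12}\|\Delta\rho\|^2$, supplying the weighted dissipation $\|\hbar\Delta\rho\|^2$; the leftover terms are bounded by $\|\nabla u\|^2$ (already available) and cubic errors. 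To capture the remaining energy/dissipation pair $\|\hbar\nabla u\|^2$ and $\|\hbar\Delta u\|^2$, I would perform an $\hbar^2$-weighted estimate by testing \eqref{QF2} against $-\hbar^2\Delta u$: the viscosity gives $\hbar^2\|\Delta u\|^2$, the time derivative gives $\frac{\hbar^2}{2}\frac{d}{dt}\|\nabla u\|^2$, and the Bohm force, after one integration by parts and the continuity substitution as above, produces precisely $-\frac{d}{dt}\big(\frac{\hbar^4}{24}\|\Delta\rho\|^2\big)$, i.e. the top quantum energy $\|\hbar^2\Delta\rho\|^2$. The quantum term $\frac{\hbar^2}{36(1+\rho)}\mathrm{div}((1+\rho)\Delta u)$ of \eqref{QF3} enters these estimates only as a coupling of the form $\hbar^2\int\nabla\mathrm{div}\,u\cdot\nabla\theta$, which is absorbed by the $\hbar^2\|\Delta u\|^2$ and $\|\nabla\theta\|^2$ dissipations with a small constant.

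Finally, I would assemble a Lyapunov functional $\mathcal{L}=\int E^0\,dx+\frac{\hbar^2}{24}\|\nabla\rho\|^2+(\hbar\text{-weighted terms})+\delta\int(1+\rho)u\cdot\nabla\rho\,dx$, choosing $\delta$ small so that $\mathcal{L}$ is equivalent to $|||(\rho,u,\theta)|||_0^2$ and the total dissipation dominates $\nu_0\|D(\rho,u,\theta),(\hbar\Delta u,\hbar\Delta\rho)\|^2$; the cubic errors are absorbed once $\varepsilon_0$ and $\hbar_0$ are small, and integrating in $t$ gives \eqref{equ42}. The main obstacle is that both quantum corrections are third order, so each integration by parts threatens to leave an uncontrolled top-order term; the delicate point is to verify that the Bohm force contributions from the plain energy estimate, the cross estimate, and the $\hbar^2$-weighted estimate telescope consistently into the three quantum energies $\|\hbar\nabla\rho\|^2,\ \|\hbar\nabla u\|^2,\ \|\hbar^2\Delta\rho\|^2$ and the two quantum dissipations, uniformly in $\hbar<1$, with the variable coefficients $1/(1+\rho)$ kept harmless by Lemma \ref{lem0}.
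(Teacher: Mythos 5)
Your proposal is correct and follows essentially the same route as the paper: the Matsumura--Nishida energy $E^0$ in the entropy variable $s$, the Bohm force tested against $(1+\rho)u$ combined with the continuity equation to produce the $\|\hbar\nabla\rho\|^2$ energy, the cross functional $\nabla\rho\cdot u$ to recover $\|\nabla\rho\|^2$ and the $\|\hbar\Delta\rho\|^2$ dissipation, the $-\hbar^2\Delta u$ multiplier to generate $\|\hbar\nabla u\|^2$, $\|\hbar^2\Delta\rho\|^2$ and $\|\hbar\Delta u\|^2$, and finally absorption of the $\hbar^4\|\Delta u\|^2$ remainder from the quantum heat term by the $\hbar^2\|\Delta u\|^2$ dissipation for $\hbar\leq\hbar_0$ small. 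This is precisely the content of Lemmas \ref{lem2} and \ref{lem3} and their combination in the paper's proof of Proposition \ref{prop1}.
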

The proof if postponed to the end of Section \ref{Sect3.1}.

\begin{lemma}\label{lem2}
There exists $0<\varepsilon_0<1$ and $h_0>0$ such that if $E\leq \varepsilon\leq \varepsilon_0$ and $\hbar\leq h_0$, then for a suitable $\beta>0$, there holds
\begin{equation}\label{equ26}
\begin{split}
\|(\rho, u, s)(t)\|^2& +\beta\|\nabla\rho(t)\|^2 +\nu_0\int_0^t\|\nabla(\rho,u,s)(\tau)\|^2d\tau +\nu_0\int_0^t\|\hbar\Delta\rho(\tau)\|^2d\tau \\
\leq & C{|||}(\rho,u,\theta)(0)|||_0^2 +{C\hbar^4}\int_0^t\|\Delta u\|^2ds,
\end{split}
\end{equation}
for some constant $C>0$ independent of $t$.
\end{lemma}
\begin{proof}
Under the transform of \eqref{equ3}, the system \eqref{QF} is transformed into the following system for $(\rho,u,s)$

\begin{subequations}\label{QF'}
\begin{numcases}{}
\partial_t\rho+u\cdot\nabla\rho+(1+\rho)\text{div}u=0,\label{QF'1}\\
\partial_tu+\frac{\nabla((1+\rho)^{\gamma}(1+s))}{1+\rho} -\frac{\mu}{\rho+1}\Delta u-\frac{\mu+\lambda}{\rho+1}\nabla\text{div}u +u\cdot\nabla u \nonumber\\
\ \ \ \ \ \ \ \ =\frac{\hbar^2}{12}\frac{\Delta\nabla\rho}{\rho+1} -\frac{\hbar^2}{3}\frac{\text{div} (\nabla{\sqrt{\rho+1}}\otimes\nabla{\sqrt{\rho+1}})} {\rho+1}=:g_{\hbar}\\
\partial_ts+u\cdot\nabla\theta -\frac{2\kappa}{3(1+\rho)}\text{div}\left\{\frac{\nabla s}{1+\rho}+\frac{2(1+s)\nabla\rho}{3(1+\rho)^2}\right\} \nonumber\\ \ \ \ \ \ \ \ \ \ \ \ -\kappa\gamma(\gamma-1)\left\{\frac{\nabla s}{(1+\rho)^2}+\frac{2(1+s)}{3(1+\rho^3)}\nabla\rho\right\} \nabla\rho \nonumber\\
\ \ \ \ \ \ \ \ \ \ \ +\frac{2}{3(1+\rho)^{\gamma}} \left\{\frac{\mu}{2}|\nabla{u}+(\nabla{u})^T|^2 +\lambda(\text{div}{u})^2\right\}\\
\ \ \ \ \ \ \ \ =-\frac{\hbar^2}{36(1+\rho)^{\gamma}} \text{div}((1+\rho)\Delta{u})=:h_{\hbar}.
\end{numcases}
\end{subequations}

Recall that $E^0(\rho,u,s)$ is given in \eqref{equ4}. We compute
\begin{equation}\label{equ5}
\begin{split}
\partial_tE^0=&(1+\rho)u\cdot u_t +\{\frac{u^2}{2} +\frac{\gamma}{\gamma-1}(1+s((1+\rho)^{\gamma-1}-1))+s \\
&+\frac{1}{2(\gamma-1)}s^2\}\rho_t +\{\frac{1}{\gamma-1}((1+\rho)^{\gamma}-1-\gamma\rho) +\frac{(1+\rho)s}{\gamma-1}\}s_t\\
=&\text{div}\{\cdots\}-\mu|\nabla u|^2-(\mu+\lambda)|\nabla\cdot u|^2 -2\kappa(\gamma-1)\nabla s\cdot\nabla\rho\\
&-\kappa(\gamma-1)^2|\nabla\rho|^2-\kappa|\nabla s|^2+O(E)|D(\rho,u,s)|^2\\
&+\underbrace{(1+\rho)u\cdot g_{\hbar}}_{I}-\underbrace{\{\frac{1}{\gamma-1}((1+\rho)^{\gamma}-1-\gamma\rho) +\frac{(1+\rho)s}{\gamma-1}\}h_{\hbar}}_{II}.
\end{split}
\end{equation}
Now, we consider the integration in space of the last two terms $I$ and $II$ on the RHS of \eqref{equ5}. For the first term $I$, by integration by parts, and using \eqref{QF'1}, we obtain
\begin{equation}\label{equ40}
\begin{split}
\int Idx=&-\frac{\hbar^2}{12}\int\nabla\cdot u\Delta\rho +\frac{\hbar^2}{12}\int\nabla u(\nabla\rho\otimes\nabla\rho/(1+\rho)).
\end{split}
\end{equation}
The last term on the RHS is easy to be bounded by
\[
\left|\frac{\hbar^2}{12}\int\nabla u(\nabla\rho\otimes\nabla\rho/(1+\rho))\right|\leq C\|\nabla u\|_{L^{\infty}}\|\nabla\rho\|^2\leq C\hbar^2E\|\nabla\rho\|^2.
\]
For the first term on the RHS, we use \eqref{QF1} to obtain
\begin{equation*}
\begin{split}
-\frac{\hbar^2}{12}\int\nabla\cdot u\Delta^2\rho =& \frac{\hbar^2}{12}\int\nabla\nabla\cdot u\nabla\rho\\
= &-\frac{\hbar^2}{12}\int\frac{1}{1+\rho}\nabla(\rho_t+u\cdot\nabla\rho)\nabla\rho\\
=&-\frac{\hbar^2}{24}\frac{d}{dt}\int\frac{|\nabla\rho|^2}{1+\rho} -\frac{\hbar^2}{24}\int\frac{\rho_t|\nabla\rho|^2}{(1+\rho)^2} -\frac{\hbar^2}{24}\int\frac{\nabla(u\cdot\nabla\rho)}{1+\rho}\nabla\rho.
\end{split}
\end{equation*}
But from \eqref{QF1}, it is easy to know that
\[
\|\rho_t\|_{L^{\infty}}\leq C\|\text{div}(\rho u)\|_{L^{\infty}}\leq CE^2 \leq CE,
\]
and by integration by parts,
\begin{equation*}
\begin{split}
\int\frac{\nabla(u\cdot\nabla\rho)}{1+\rho}\nabla\rho =& -\int\frac{\partial_iu\cdot\nabla\rho}{1+\rho}\partial_i\rho -\int\frac{u_i\cdot\partial_i\nabla\rho}{1+\rho}\nabla\rho\\
=& -\int\frac{\partial_iu\cdot\nabla\rho}{1+\rho}\partial_i\rho +\int\frac{\nabla\cdot u |\nabla\rho|^2}{1+\rho} -\int\frac{u\cdot\nabla\rho}{(1+\rho)^2}|\nabla\rho|^2\\
\leq & C(\|\nabla u\|_{L^{\infty}}+\|u\|_{\infty}\|\nabla\rho\|_{L^{\infty}})\|\nabla\rho\|^2\\
\leq &CE\|\nabla\rho\|^2.
\end{split}
\end{equation*}
Therefore it is easy to see from \eqref{equ40} that
\begin{equation*}
\begin{split}
\int Idx \geq & -\frac{\hbar^2}{24}\frac{d}{dt} \int\frac{|\nabla\rho|^2}{1+\rho} -C\hbar^2E\|D(\rho,u,s)\|^2.
\end{split}
\end{equation*}
On the other hand, for the second term $II$ in \eqref{equ5}, we have
\begin{equation*}
\begin{split}
\int IIdx\leq \delta_0\|(\nabla\rho,\nabla s)\|+\frac{C\hbar^4}{\delta_0}\|\Delta u\|^2.
\end{split}
\end{equation*}

In addition to \eqref{equ5}, we compute
\begin{equation}\label{equ6}
\begin{split}
&\frac{\partial}{\partial t}\left\{\frac12|\nabla\rho|^2 +\frac{(1+\rho)^2}{2\mu+\lambda}\nabla\rho\cdot u\right\}\\
=&\left\{\nabla\rho+\frac{(1+\rho)^2} {2\mu+\lambda}u\right\}\cdot\nabla\rho_t +\frac{(1+\rho)^2}{2\mu+\lambda}\nabla\rho\cdot u_t+\frac{2(1+\rho)}{2\mu+\lambda}u \cdot\nabla\rho\rho_t\\
=& \text{div}\{\cdots\}-\frac{1}{2\mu+\lambda} \nabla\rho\cdot\nabla s -\frac{\gamma}{2\mu+\lambda}|\nabla\rho|^2\\
&+\frac{(\text{div}u)^2}{2\mu+\lambda} +O(E)|D(\rho,u,s)|^2 +\underbrace{\frac{(1+\rho)^2}{2\mu+\lambda}\nabla\rho\cdot g_{\hbar}}_{J}.
\end{split}
\end{equation}
After integration in space we obtain for the last term,
\begin{equation*}
\begin{split}
\int Jdx=&\frac{\hbar^2}{12} \int\frac{(1+\rho)}{2\mu+\lambda}\nabla\rho\cdot \Delta\nabla\rho +\frac{\hbar^2}{3} \frac{(1+\rho)}{2\mu+\lambda}\nabla\rho\cdot \text{div}(\nabla\sqrt{\rho+1} \otimes\nabla\sqrt{\rho+1})\\
=& :J_1+J_2.
\end{split}
\end{equation*}
For the term $J_1$, we have
\begin{equation*}
\begin{split}
J_1=&-\frac{\hbar^2}{12} \int\frac{(1+\rho)}{2\mu+\lambda}|\Delta\rho|^2 -\frac{\hbar^2}{12} \int\frac{(1+\rho)}{2\mu+\lambda}|\nabla\rho|^2\Delta\rho \\
\geq & -\frac{\hbar^2}{24} \int\frac{(1+\rho)}{2\mu+\lambda}|\Delta\rho|^2-C\hbar^2E^2\|\nabla\rho\|^2.
\end{split}
\end{equation*}
For the term $J_2$, we have
\begin{equation*}
\begin{split}
J_2\leq & \frac{\hbar^2}{48} \int\frac{(1+\rho)}{2\mu+\lambda}|\Delta\rho|^2+C\hbar^2E^2\|\nabla\rho\|^2.
\end{split}
\end{equation*}

Multiplying \eqref{equ6} with a constant $\beta$ and integration in space, and then add the resultant to \eqref{equ5} integrated in space, we obtain
\begin{equation}\label{equ7}
\begin{split}
\frac{d}{dt}&\int E^0(\rho,u,s)+\frac{\hbar^2}{24}\frac{|\nabla\rho|^2}{1+\rho}+\beta\left(\frac12|\nabla\rho|^2 +\frac{(1+\rho)^2}{2\mu+\lambda}\nabla\rho\cdot u\right)dx\\
&+\int\mu|\nabla u|^2+(\mu+\lambda)|\nabla\cdot u|^2 +\frac43\kappa\nabla s\cdot\nabla\rho +\frac49\kappa|\nabla\rho|^2+\kappa|\nabla s|^2dx\\
&+\beta\int\frac{\nabla\rho\cdot\nabla s}{2\mu+\lambda} +\frac53\frac{1}{2\mu+\lambda}|\nabla\rho|^2 -\frac{(\text{div}u)^2}{2\mu+\lambda} +\frac{\hbar^2}{48}\frac{1+\rho}{2\mu+\lambda}|\Delta\rho|^2dx\\
\leq & O(E)\|D(\rho,u,s)\|^2+\delta_0\|(\nabla\rho,\nabla s)\|^2 +\frac{C\hbar^4}{\delta_0}\|\Delta u\|^2.
\end{split}
\end{equation}

Note that as in \cite{MN80}, if we take $\beta$ small such that
\begin{equation*}
0<\beta<\min\left\{\frac{(2\mu+\lambda)^2}{8(1+\rho_2)^4},(\mu+\lambda)(2\mu+\lambda), {4\kappa(2\mu+\lambda)}\right\},
\end{equation*}
where $\rho_2$ is given in Lemma \ref{lem1}, then
\begin{equation*}
\begin{split} E^0(\rho,u,s) +\beta\left(\frac12|\nabla\rho|^2 +\frac{(1+\rho)^2}{2\mu+\lambda}\nabla\rho\cdot u\right)\geq \frac18(\rho^2+\frac{7}{9}s^2+|u|^2)+\frac\beta4|\nabla\rho|^2,
\end{split}
\end{equation*}
and
\begin{equation*}
\begin{split}
&\mu|\nabla u|^2+(\mu+\lambda)|\nabla\cdot u|^2 +\frac43\kappa\nabla s\cdot\nabla\rho +\frac49\kappa|\nabla\rho|^2+\kappa|\nabla s|^2\\
&+\beta\left(\frac{\nabla\rho\cdot\nabla s}{2\mu+\lambda} +\frac53\frac{1}{2\mu+\lambda}|\nabla\rho|^2 -\frac{(\text{div}u)^2}{2\mu+\lambda}\right)\geq \mu|\nabla u|^2+\frac{5\kappa}{27}|\nabla s|^2+\frac53\kappa|\nabla\rho|^2.
\end{split}
\end{equation*}
Integrating in time over $[0,t]$ and taking $\delta_0$ and $E$ sufficiently small (say, $\delta_0=1/{20}$), we obtain
\begin{equation}\label{equ21}
\begin{split}
\int_{\Bbb R^3}&\left\{\frac18(\rho^2+\frac{\hbar^2}{36}|\nabla\rho|^2+\frac{7}{9}s^2+|u|^2) +\frac\beta4|\nabla\rho|^2\right\}dx\\
&+\frac12\int_0^t\int_{\Bbb R^3}\left\{\mu|\nabla u|^2+\frac{5\kappa}{27}|\nabla s|^2+\frac53\kappa|\nabla\rho|^2 +\frac{\hbar^2}{96}\frac{1}{2\mu+\lambda}|\Delta\rho|^2\right\}\\
\leq & \left.\int_{\Bbb R^3}E^0(\rho,u,s)+\beta\left(\frac12|\nabla\rho|^2 +\frac{(1+\rho)^2}{2\mu+\lambda}\nabla\rho\cdot u\right)dx\right|_{t=0} +{C\hbar^4}\int_0^t\|\Delta u\|^2ds\\
\leq & C{|||}(\rho,u,\theta)(0)|||_0^2 +{C\hbar^4}\int_0^t\|\Delta u\|^2ds.
\end{split}
\end{equation}
Now, properly choose the constant $\nu_0$ and $C>0$, we finish the proof.
\end{proof}

%
%

\begin{lemma}\label{lem3}
Under the same condition in Lemma \ref{lem2}, we have
\begin{equation}\label{equ20}
\begin{split}
& \hbar^2\left\{\|\nabla u(t)\|^2 +\hbar^2\|\Delta\rho(t)\|^2\right\} +\int_0^t\|\hbar\Delta u(s),\hbar\nabla\nabla\cdot u(s)\|^2ds\\
\leq & C\hbar^2\|\nabla u_0,\hbar\Delta\rho_0\|^2 +C\hbar^2E\int_0^t\|\nabla\rho\|^2ds +\delta_1\int_0^t\|\nabla\theta,\nabla\rho\|^2ds +\frac{C\hbar^4}{\delta_1}\int_0^t\|\Delta u\|^2ds,
\end{split}
\end{equation}
for any positive constants $\delta_1>0$ and $t>0$.
\end{lemma}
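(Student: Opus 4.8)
The plan is to extract both the $\hbar^2\|\nabla u\|^2$ and the $\hbar^4\|\Delta\rho\|^2$ bounds from a single weighted energy identity for the momentum equation \eqref{QF2}. Concretely, I would take the $L^2$ inner product of \eqref{QF2} with $-\hbar^2\Delta u$ and integrate over $\mathbb{R}^3$. The time-derivative term becomes $\frac{\hbar^2}{2}\frac{d}{dt}\|\nabla u\|^2$ after one integration by parts. The two viscous terms, using $\mu/(1+\rho)\geq c>0$, $\mu+\lambda>0$ (which follows from $\mu>0$ and $2\mu+3\lambda>0$), and the identity $\int\nabla\mathrm{div}\,u\cdot\Delta u=\|\nabla\mathrm{div}\,u\|^2$ (the curl--curl cross term integrates to zero on $\mathbb{R}^3$), produce the dissipation $\int_0^t\|\hbar\Delta u,\hbar\nabla\nabla\cdot u\|^2$ up to variable-coefficient errors of size $O(E)$ times the dissipation, which are absorbable since $E\leq\varepsilon_0$. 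Integrating in time then already yields the left-hand side of \eqref{equ20} together with the initial term $C\hbar^2\|\nabla u_0\|^2$.

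The heart of the matter, and the main obstacle, is the leading quantum term $\frac{\hbar^2}{12}(1+\rho)^{-1}\Delta\nabla\rho$, which upon pairing with $-\hbar^2\Delta u$ gives $-\frac{\hbar^4}{12}\int(1+\rho)^{-1}\nabla\Delta\rho\cdot\Delta u$; this couples three derivatives of $\rho$ to two of $u$ and is a priori uncontrolled. I would resolve it by integrating by parts to move $\nabla$ off $\Delta\rho$, producing $\frac{\hbar^4}{12}\int(1+\rho)^{-1}\Delta\rho\,\Delta\mathrm{div}\,u$ plus a lower-order commutator carrying a factor $\nabla(1+\rho)^{-1}\sim\nabla\rho$, and then eliminating $\Delta\mathrm{div}\,u$ through the continuity equation \eqref{QF1}. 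Writing $\mathrm{div}\,u=-(1+\rho)^{-1}(\rho_t+u\cdot\nabla\rho)$ and applying $\Delta$, the leading part is $-(1+\rho)^{-1}\partial_t\Delta\rho$, and substituting it gives
\[
\tfrac{\hbar^4}{12}\int\tfrac{\Delta\rho\,\Delta\mathrm{div}\,u}{1+\rho}=-\tfrac{\hbar^4}{24}\tfrac{d}{dt}\int\tfrac{|\Delta\rho|^2}{(1+\rho)^2}+(\text{remainders}).
\]
Moving this exact time derivative to the left-hand side furnishes precisely the $\hbar^4\|\Delta\rho(t)\|^2$ energy, since $(1+\rho)^{-2}\geq c>0$ by \eqref{equ12}, with initial datum $C\hbar^2\|\hbar\Delta\rho_0\|^2$.

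It then remains to control the numerous remainder terms by smallness. The quantum commutators, and the pieces where $\Delta$ falls on $(1+\rho)^{-1}$ or on $u\cdot\nabla\rho$, all carry at least one factor of $\nabla\rho$, $\nabla u$, $\rho_t$ or $\mathrm{div}\,u$, hence by \eqref{equ10}--\eqref{equ11} they are $O(E)$; via Young's inequality they split into a small multiple of the dissipation $\hbar^2\|\Delta u\|^2$, a term $C\hbar^4E\|\Delta\rho\|^2$ that is a small multiple of the $\hbar^4\|\Delta\rho\|^2$ energy, and remainders of the type $C\hbar^2E\int_0^t\|\nabla\rho\|^2$. The second quantum term, being quadratic in $\nabla\rho$, contributes only such lower-order quantities. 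For the classical forcing, the convection $\hbar^2\int(u\cdot\nabla u)\cdot\Delta u$ is bounded by $\eta\hbar^2\|\Delta u\|^2+C\hbar^2E^2\|\nabla u\|^2$, while $\hbar^2\int\nabla\theta\cdot\Delta u$ and the leading part of $\hbar^2\int\frac{\theta+1}{1+\rho}\nabla\rho\cdot\Delta u$ are handled by the parameter-balanced estimate $\hbar^2|\int\nabla w\cdot\Delta u|\leq\delta_1\|\nabla w\|^2+\frac{C\hbar^4}{\delta_1}\|\Delta u\|^2$, producing exactly the terms $\delta_1\int_0^t\|\nabla\theta,\nabla\rho\|^2$ and $\frac{C\hbar^4}{\delta_1}\int_0^t\|\Delta u\|^2$; the $O(E)$ part of the density coefficient $\frac{\theta+1}{1+\rho}-1$ yields the remaining $C\hbar^2E\int_0^t\|\nabla\rho\|^2$. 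Note that the $\frac{C\hbar^4}{\delta_1}\int_0^t\|\Delta u\|^2$ term is deliberately not absorbed here but carried forward, to be dominated by the dissipation $\int_0^t\|\hbar\Delta u\|^2$ using the smallness of $\hbar\leq h_0$ when the estimates of Lemmas \ref{lem2} and \ref{lem3} are combined. Collecting all contributions and choosing $\varepsilon_0$ and $h_0$ small completes the proof of \eqref{equ20}.
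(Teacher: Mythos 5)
Your proposal is correct and follows essentially the same route as the paper: testing \eqref{QF2} with $-\hbar^2\Delta u$, converting the quantum term $-\frac{\hbar^4}{12}\int\frac{\nabla\Delta\rho}{1+\rho}\cdot\Delta u$ via integration by parts and the continuity equation \eqref{QF1} into the exact time derivative $-\frac{\hbar^4}{24}\frac{d}{dt}\int\frac{|\Delta\rho|^2}{(1+\rho)^2}$, splitting the $\nabla\theta$ and $\nabla\rho$ forcings with the $\delta_1$--$\hbar^4/\delta_1$ Young inequality, and carrying the $\frac{C\hbar^4}{\delta_1}\int_0^t\|\Delta u\|^2$ term forward to be absorbed later. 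The only cosmetic difference is that the paper handles the variable-coefficient cross term in the $\nabla\mathrm{div}\,u$ dissipation by two explicit integrations by parts rather than invoking the constant-coefficient identity, but you correctly account for the resulting $O(E)$ errors.
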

\begin{proof}
We now take the inner product of \eqref{QF2} with $-\hbar^2\Delta u$ to obtain
\begin{equation}\label{equ8}
\begin{split}
\sum_{i=1}^3L_i:=& \frac{\hbar^2}{2}\frac{d}{dt}\int|\nabla u|^2 +\mu\hbar^2\int\frac{|\Delta u|^2}{1+\rho} +(\mu+\lambda)\hbar^2\int\frac{(\nabla\nabla\cdot u)}{1+\rho}\cdot\Delta u\\
=& \hbar^2\int(u\cdot\nabla u)\Delta u +\hbar^2\int\nabla\theta\cdot\Delta u +\hbar^2\int\frac{\theta+1}{\rho+1}\nabla\rho\cdot\Delta u -\frac{\hbar^4}{12}\int\frac{\Delta\nabla\rho}{1+\rho}\Delta u \\
&+\frac{\hbar^4}{3} \int\frac{\text{div}(\nabla\sqrt{\rho+1} \otimes\nabla\sqrt{\rho+1})}{1+\rho}\Delta u =:\sum_{i=1}^5R_i
\end{split}
\end{equation}
For $L_3$, we have by integration by parts twice
\begin{equation*}
\begin{split}
L_3=&(\mu+\lambda)\hbar^2\int\frac{|\nabla\nabla\cdot u|^2}{(1+\rho)}\\
&-(\mu+\lambda)\hbar^2\int\frac{\partial_i\rho} {(1+\rho)^2}\partial_k\nabla\cdot u\partial_ku^i +(\mu+\lambda)\hbar^2\int\frac{\partial_k\rho} {(1+\rho)^2}\partial_i\nabla\cdot u\partial_ku^i\\
\geq & (\mu+\lambda)\hbar^2\int\frac{|\nabla\nabla\cdot u|^2}{(1+\rho)}-C\hbar^2 E\|\nabla\rho,\Delta u\|^2,
\end{split}
\end{equation*}
since $\|\nabla u\|_{L^{\infty}}\leq C\|u\|_{H^3}\leq CE$.

For $R_1$, after integration by parts twice, we obtain
\begin{equation*}
\begin{split}
R_1=-\hbar^2\int(u\cdot\nabla u)\Delta u-2\hbar^2\int\partial_ku^i\partial_ku^j\partial_iu^j +\hbar^2\int\nabla\cdot u|\nabla u|^2,
\end{split}
\end{equation*}
which implies that
\begin{equation*}
\begin{split}
R_1=&-\hbar^2\int\partial_ku^i\partial_ku^j\partial_iu^j +\frac{\hbar^2}2\int\nabla\cdot u|\nabla u|^2 \leq C\hbar^2E\|\nabla u\|^2.
\end{split}
\end{equation*}
For $R_2$ and $R_3$, by H\"older inequality we obtain
\begin{equation*}
\begin{split}
R_2+R_3\leq \delta_1(\|\nabla\theta,\nabla\rho\|^2 +\frac{C\hbar^4}{\delta_1}\|\Delta u\|^2,\ \ \ \forall\delta_1>0.
\end{split}
\end{equation*}
For $R_4$, we have by integration by parts and \eqref{QF1}
\begin{equation*}
\begin{split}
R_4=& -\frac{\hbar^4}{12}\int\frac{\Delta\rho\nabla\rho}{(1+\rho)^2}\cdot\Delta u +\frac{\hbar^4}{12}\int\frac{\Delta\rho}{1+\rho}\Delta\nabla\cdot u\\
=&-\frac{\hbar^4}{12}\int\frac{\Delta\rho\nabla\rho}{(1+\rho)^2}\cdot\Delta u \\
& -\frac{\hbar^4}{12}\int\frac{\Delta\rho}{(1+\rho)^2}\{\partial_t\Delta\rho +[\Delta,1+\rho]\text{div}u+[\Delta,u]\nabla\rho+u\cdot\nabla\Delta\rho\} =\sum_{i=1}^5 R_{4i}.
\end{split}
\end{equation*}
It is easy to show the following estimates
\begin{equation*}
\begin{split}
R_{41}\leq C\hbar^4\|\nabla\rho\|_{L^{\infty}}\|\Delta\rho,\Delta u\|^2 \leq CE\hbar^2\|\hbar\Delta\rho,\hbar\Delta u\|^2,
\end{split}
\end{equation*}
\begin{equation*}
\begin{split}
R_{42}=& -\frac{\hbar^4}{24}\frac{d}{dt}\int\frac{|\Delta\rho|^2}{(1+\rho)^2} -\frac{\hbar^4}{12}\int\frac{\partial_t\rho}{(1+\rho)^3}|\Delta\rho|^2\ \ \ \ \ \\
\leq & -\frac{\hbar^4}{24}\frac{d}{dt}\int\frac{|\Delta\rho|^2}{(1+\rho)^2} +CE\hbar^2\|\hbar\Delta\rho\|^2,
\end{split}
\end{equation*}
thanks to \eqref{equ11} and
\begin{equation*}
\begin{split}
R_{43}+R_{44} \leq & C\hbar^4\|\Delta\rho\|(\|[\Delta,1+\rho]\text{div}u\| +\|[\Delta,u]\nabla\rho\|)\\
\leq & C\hbar^4\|\Delta\rho\|(\|\Delta u\|\|\nabla\rho\|_{L^{\infty}}+\|\Delta\rho\|\|\nabla u\|_{L^{\infty}})\\
\leq & CE\hbar^2\|\hbar\Delta\rho,\hbar\Delta u\|^2,
\end{split}
\end{equation*}
thanks to Lemma \ref{Le-inequ}, and by integration by parts
\begin{equation*}
\begin{split}
R_{45}= & \frac{\hbar^4}{12}\int\frac{\nabla\Delta\rho}{(1+\rho)^2}\cdot u\Delta\rho +\frac{\hbar^4}{12}\int\frac{|\Delta\rho|^2}{(1+\rho)^2}\text{div}u -\frac{\hbar^4}{6}\int\frac{|\Delta\rho|^2}{(1+\rho)^3}u\cdot\nabla\rho\\
= & \frac{\hbar^4}{24}\int\frac{|\Delta\rho|^2}{(1+\rho)^2}\text{div}u -\frac{\hbar^4}{4}\int\frac{|\Delta\rho|^2}{(1+\rho)^3}u\cdot\nabla\rho \\
\leq & CE\hbar^2\|\hbar\Delta\rho\|^2.
\end{split}
\end{equation*}
Therefore, we obtain
\begin{equation*}
\begin{split}
R_{4}\leq & -\frac{\hbar^4}{24}\frac{d}{dt}\int\frac{|\Delta\rho|^2}{(1+\rho)^2} +CE\hbar^2\|\hbar\Delta\rho\|^2.
\end{split}
\end{equation*}
For the term $R_5$, it is easy to show that
\begin{equation*}
\begin{split}
R_{5}\leq & CE\hbar^2\|\hbar\Delta\rho\|^2.
\end{split}
\end{equation*}
Hence, putting all the estimates together, we have from \eqref{equ8} that
\begin{equation*}
\begin{split}
& \frac{\hbar^2}{2}\frac{d}{dt}\int|\nabla u|^2 +\frac{\hbar^4}{24}\frac{d}{dt}\int\frac{|\Delta\rho|^2}{(1+\rho)^2} +\mu\hbar^2\int\frac{|\Delta u|^2}{1+\rho} +(\mu+\lambda)\hbar^2\int\frac{|\nabla\nabla\cdot u|^2}{(1+\rho)}\\
\leq & C\hbar^2 E\|\nabla\rho,\Delta u\|^2 +\delta_1\|\nabla(\rho,\theta)\|^2 +\frac{C\hbar^4}{\delta_1}\|\Delta u\|^2 +CE\hbar^2\|\hbar\Delta\rho,\hbar\Delta u\|^2.
\end{split}
\end{equation*}
Take $\varepsilon_0$ and $\hbar_0$ small, then for any $\varepsilon\leq \varepsilon_0$ and $\hbar\leq \hbar_0$, integration in time over $[0,t]$ yields the result for any positive constant $\delta_1>0$.
\end{proof}

\begin{proof}[Proof of Proposition of \ref{prop1}] By \eqref{equ3} and Lemma \ref{lem1}, the left hand side of \eqref{equ26} is equivalent to the norm
\begin{equation*}
\|(\rho,u,\theta)(t)\|^2+\beta\|\nabla\rho(t)\|^2+\nu_0\int_0^t\|(\nabla\rho,\nabla u,\nabla\theta,\hbar\Delta\rho)(s)\|^2ds,
\end{equation*}
for some another suitable constant $\nu_0>0$. Hence \eqref{equ26} implies that
\begin{equation}\label{equ22}
\begin{split}
&\|(\rho,u,\theta)(t)\|^2+\beta\|\nabla\rho(t)\|^2+\nu_0\int_0^t\|(\nabla\rho,\nabla u,\nabla\theta,\hbar\Delta\rho)(s)\|^2ds\\
\leq & C{|||}(\rho,u,\theta)(0)|||_0^2 +{C\hbar^4}\int_0^t\|\Delta u\|^2ds.
\end{split}
\end{equation}
Now, taking $\delta_1$ and $\hbar_0$ in \eqref{equ20} sufficiently small, say, $\delta_1=\nu_0/4$ and $\hbar_0^2<{\nu_0}/{4C\varepsilon_0}$, we then obtain
\begin{equation}\label{equ23}
\begin{split}
& \hbar^2\left\{\|\nabla u(t)\|^2 +\hbar^2\|\Delta\rho(t)\|^2\right\} +\int_0^t\|(\hbar\Delta u,\hbar\nabla\nabla\cdot u)(s)\|^2ds\\
\leq & C\hbar^2\|\nabla u_0,\hbar\Delta\rho_0\|^2 +\frac{\nu_0}{2}\int_0^t\|(\nabla\rho,\nabla\theta)\|^2ds +\frac{4C\hbar^4}{\nu_0}\int_0^t\|\Delta u\|^2ds.
\end{split}
\end{equation}
Add \eqref{equ22} and \eqref{equ23} together, and then taking $\hbar_0$ even smaller such that $C\hbar_0^2+4C\hbar_0^2/\nu_0<1/2$, we then obtain
\begin{equation}\label{equ25}
\begin{split}
{|||}(\rho,u,\theta)(t)|||^2+ \nu_0\int_0^t\|(\nabla\rho,\nabla u,\nabla\theta,\hbar\Delta\rho,\hbar\Delta u)(s)\|^2ds\leq C{|||}(\rho,u,\theta)(0)|||_0^2,
\end{split}
\end{equation}
for some positive constant $\nu_0>0$ depends only on $\mu,\lambda$ and $\kappa$. in particular, $\nu_0$ and $C$ are both independent of $t$.
\end{proof}

\subsection{Higher order estimates}
In the following, we denote $\partial^{\alpha}=\partial_{x_1}^{\alpha_1} \partial_{x_2}^{\alpha_2}\partial_{x_3}^{\alpha_3}$ the partial differential derivative operator with multi-index $\alpha=(\alpha_1,\alpha_2,\alpha_3)$. For our purpose, $|\alpha|\leq 3$ suffices. We sometimes abuse the notation to use $\alpha\pm1$ to stand for $\alpha\pm\beta$ for a multi-index with $|\beta|=1$ and $\alpha\geq\beta$ in the case of $\alpha-\beta$. We will prove the following
\begin{proposition}\label{prop3}
Let $\alpha$ be any multi-index with $1\leq |\alpha|\leq 3$ and $s=|\alpha|$. There exist some constants $\varepsilon_0>0$ and $\hbar_0>0$ such that if $E\leq \varepsilon_0$ and $\hbar\leq\hbar_0$, then the following \emph{a priori} estimates hold for all $t\in[0,T]$,
\begin{equation}\label{equ43}
\begin{split}
&{|||}\partial^{\alpha}(\rho,u,\theta)(t)|||^2 +\nu_0\int_0^t\|\partial^{\alpha}\nabla(\rho,u,\theta,\hbar\nabla\rho,\hbar\nabla u)(\tau)\|^2d\tau \\
\leq & C{|||}\partial^{\alpha}(\rho,u,\theta)(0)|||^2
+C\int_0^t\|\nabla(\rho,u,\theta)\|^2_{H^{s-1}}d\tau +C\int_0^t\|\hbar\Delta(\rho,u)(\tau)\|_{H^{s-1}}^2d\tau,
\end{split}
\end{equation}
for some $\nu_0>0$ and $C=C(\varepsilon_0)$ independent of $t$.
\end{proposition}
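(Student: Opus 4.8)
The plan is to reproduce, at the differentiated level $\partial^{\alpha}$ with $s=|\alpha|$, exactly the three-step scheme already used for the zeroth-order estimate in Lemmas \ref{lem2} and \ref{lem3} and Proposition \ref{prop1}. I would apply $\partial^{\alpha}$ to the three equations of \eqref{QF}, working (as in the zeroth-order case) in the entropy variables $(\rho,u,s)$ of \eqref{QF'} so that the acoustic pressure--density--entropy coupling can be symmetrised, and then transfer back via \eqref{equ3} and the norm equivalence of Lemma \ref{lem1}. The target energy at order $\alpha$ is
\[
\|\partial^{\alpha}(\rho,u,\theta)\|^2+\beta\|\partial^{\alpha}\nabla\rho\|^2+\hbar^2\|\partial^{\alpha}\nabla u\|^2+\hbar^2\|\partial^{\alpha}\nabla\rho\|^2+\hbar^4\|\partial^{\alpha}\Delta\rho\|^2,
\]
which is comparable to ${|||}\partial^{\alpha}(\rho,u,\theta)|||^2$, and each of the three multipliers below is designed to produce one block of it.

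First I would test the $\partial^{\alpha}$-differentiated continuity, momentum and entropy equations against $\partial^{\alpha}\rho$, $\partial^{\alpha}u$ and $\partial^{\alpha}s$, with a symmetrising weight on the continuity equation (the linearised analogue of the Hessian of $E^0$) so that the top-order pressure terms $\nabla\partial^{\alpha}\rho$, $\nabla\partial^{\alpha}s$ cancel against $\text{div}\,\partial^{\alpha}u$. The viscous and heat-conduction terms then furnish, after integration by parts exactly as in \eqref{equ5}, the dissipation $\mu\|\partial^{\alpha}\nabla u\|^2+(\mu+\lambda)\|\partial^{\alpha}\nabla\cdot u\|^2+\kappa\|\partial^{\alpha}\nabla s\|^2$. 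The quantum term $\frac{\hbar^2}{12}\frac{\Delta\nabla\rho}{1+\rho}$, integrated by parts against $\partial^{\alpha}u$ and combined with the differentiated \eqref{QF1} as in the treatment of the term $I$ in Lemma \ref{lem2}, produces the favourable energy $-\frac{\hbar^2}{24}\frac{d}{dt}\int\frac{|\partial^{\alpha}\nabla\rho|^2}{1+\rho}$ together with dissipation of $\hbar\partial^{\alpha}\Delta\rho$. Since the continuity equation gives no direct dissipation of $\rho$, I would add, as in \eqref{equ6}, a small multiple $\beta$ of the cross functional $\frac12|\partial^{\alpha}\nabla\rho|^2+\frac{(1+\rho)^2}{2\mu+\lambda}\partial^{\alpha}\nabla\rho\cdot\partial^{\alpha}u$ to recover the missing $\|\partial^{\alpha}\nabla\rho\|^2$; the energy-equation quantum term $\frac{\hbar^2}{36}\text{div}((1+\rho)\Delta u)$ is absorbed here by $\delta\|\partial^{\alpha}\nabla\theta\|^2+C\hbar^4\delta^{-1}\|\partial^{\alpha}\Delta u\|^2$, precisely as the term $II$ was.

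Next, following Lemma \ref{lem3}, I would test the $\partial^{\alpha}$-differentiated momentum equation against $-\hbar^2\partial^{\alpha}\Delta u$. This supplies the remaining $\hbar$-weighted energy $\frac{\hbar^2}{2}\frac{d}{dt}\|\partial^{\alpha}\nabla u\|^2+\frac{\hbar^4}{24}\frac{d}{dt}\|\partial^{\alpha}\Delta\rho\|^2$ and the dissipation $\hbar^2\|\partial^{\alpha}\Delta u\|^2+\hbar^2\|\partial^{\alpha}\nabla\nabla\cdot u\|^2$, while the fourth-order self-interaction $-\frac{\hbar^4}{12}\frac{\Delta\nabla\rho}{1+\rho}$ is handled by integration by parts, the continuity equation, and commutator splitting exactly as the terms $R_4,R_5$ there. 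Summing the three families and choosing $\beta,\delta_0,\delta_1,\varepsilon_0,\hbar_0$ small, the left side of \eqref{equ43} closes, provided every remaining error is either absorbed into the dissipation or routed into the two $H^{s-1}$ integrals on the right; the $\hbar^4\|\partial^{\alpha}\Delta u\|^2$ leftover is swallowed by the $\hbar^2\|\partial^{\alpha}\Delta u\|^2$ dissipation once $\hbar\leq\hbar_0$.

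The technical heart, and the main obstacle, is the bookkeeping of the commutators and nonlinear products generated by $\partial^{\alpha}$. Each variable coefficient such as $\frac{1}{1+\rho}$ or $\frac{\theta+1}{1+\rho}$ produces commutators $[\partial^{\alpha},\cdot]$ that I would estimate by the Kato--Ponce inequality (Lemma \ref{Le-inequ}) together with Lemmas \ref{GN} and \ref{lem0}; the crucial accounting is that the genuinely top-order pieces carry a factor $\|(\rho,u,\theta)\|_{H^3}\leq E$ and are absorbed by the dissipation, while every remaining piece loses one derivative and lands in $\int_0^t\|\nabla(\rho,u,\theta)\|_{H^{s-1}}^2+\|\hbar\Delta(\rho,u)\|_{H^{s-1}}^2$, so that the estimate can later be closed by induction on $s$ against Proposition \ref{prop1}. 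The most delicate terms are the high-derivative $\hbar$-weighted couplings: the momentum quantum term $\hbar^2\partial^{\alpha}\Delta\nabla\rho$ tested against $\hbar^2\partial^{\alpha}\Delta u$ (order $|\alpha|+3$ in $\rho$), and the energy-equation term $\hbar^2\partial^{\alpha}\text{div}((1+\rho)\Delta u)$, which sees three derivatives of $u$. Keeping these within the hierarchy $H^{k+2}\times H^{k+1}\times H^{k}$ — so that $\theta$ never demands as many derivatives as $u$, nor $u$ as many as $\rho$ — is exactly what forces the $\hbar$-weighted components of ${|||}\cdot|||_k$ and is where the balance must be orchestrated most carefully.
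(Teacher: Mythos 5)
Your proposal is correct in its overall architecture and would very plausibly close, but it is not the route the paper takes, and the difference is worth recording. The paper abandons the entropy variable $s$ entirely for $|\alpha|\geq 1$: Lemmas \ref{lem8}--\ref{lem10} test the $\partial^{\alpha}$-differentiated equations \eqref{QF1}--\eqref{QF3} directly against $\partial^{\alpha}\theta$, $\partial^{\alpha}u$ and $-\hbar^2\Delta\partial^{\alpha}u$ in the original variables. There is no symmetrisation of the pressure--temperature coupling at order $\alpha$; instead the term $\int\partial^{\alpha}\nabla\theta\,\partial^{\alpha}u$ is integrated by parts and dumped, via Cauchy--Schwarz, into $\delta_0\mu\|\nabla\partial^{\alpha}u\|^2+C\delta_0^{-1}\|\nabla\theta\|_{\dot H^{s-1}}^2$, while the density--pressure coupling $\int\frac{\theta+1}{\rho+1}\partial^{\alpha}\rho\,\partial^{\alpha}\mathrm{div}\,u$ is converted by the differentiated continuity equation into the energy $-\frac12\frac{d}{dt}\int\frac{\theta+1}{(1+\rho)^2}|\partial^{\alpha}\rho|^2$. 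This deliberate one-derivative loss is exactly why \eqref{equ43} carries the lower-order integrals $\int_0^t\|\nabla(\rho,u,\theta)\|_{H^{s-1}}^2$ and $\int_0^t\|\hbar\Delta(\rho,u)\|_{H^{s-1}}^2$ on the right, and why the proposition only closes after the telescoping over $s$ in Theorem \ref{thm3}; your full order-$\alpha$ symmetrisation buys exact cancellation of the coupling but forces you to differentiate the much messier entropy equation in \eqref{QF'}, whose heat-conduction and quantum source terms generate a substantially heavier commutator load. The second divergence is the recovery of the density dissipation: you propose the order-$\alpha$ Matsumura--Nishida cross functional $\beta\bigl(\tfrac12|\partial^{\alpha}\nabla\rho|^2+\frac{(1+\rho)^2}{2\mu+\lambda}\partial^{\alpha}\nabla\rho\cdot\partial^{\alpha}u\bigr)$, whereas the paper uses a fourth multiplier, testing $\partial^{\gamma}$ of the momentum equation (with $|\gamma|=|\alpha|-1$) against $-\beta\partial^{\gamma}\nabla\Delta\rho$ (Lemma \ref{lem11}), which requires an integration by parts in time for the $\partial^{\gamma}u_t$ term but yields $\|\partial^{\gamma}\Delta\rho\|^2$-dissipation with only order-$s{+}1$ commutators. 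Both variants recover $\|\nabla^{s+1}\rho\|^2$; the one point you must be careful about in your version is the pairing $\partial^{\alpha}\nabla\rho\cdot\partial^{\alpha}\Delta u$ arising from substituting $u_t$: since $\|\partial^{\alpha}\Delta u\|$ enters the dissipation only with an $\hbar$ weight, this pairing cannot be estimated by Cauchy--Schwarz and must be cancelled against the corresponding term from $\frac{d}{dt}\tfrac12|\partial^{\alpha}\nabla\rho|^2$ via the continuity equation, exactly as in \eqref{equ6} at zeroth order --- your ``exactly as in \eqref{equ6}'' implicitly carries this, but it is the step on which the whole functional hinges and deserves to be made explicit.
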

This proposition is proved as a direct sequence of the following lemmas.
\begin{lemma}\label{lem8}
Under the assumptions in Proposition \ref{prop3}, there exists some constants $\delta_0<1$ and $\varepsilon_0<1$ sufficiently small, such that
\begin{equation}\label{equ27}
\begin{split}
&\|\partial^{\alpha}\theta(t)\|^2 +\kappa\int_0^t\|\partial^{\alpha}\nabla\theta(\tau)\|^2d\tau\\
\leq & \|\partial^{\alpha}\theta_0\|^2 +\delta_0(\mu+\lambda)\int_0^t\|\partial^{\alpha}\nabla\cdot u(\tau)\|^2d\tau +\frac{C\hbar^4}{\delta_0\kappa}\int_0^t\|\partial^{\alpha}\Delta u(\tau)\|^2 d\tau \\
&+\frac{C}{\delta_0}\int_0^t\|\nabla(\rho,u,\theta)(\tau)\|_{\dot H^{s-1}}^2d\tau +\frac{CE^2\hbar^2}{\delta_0\kappa}\int_0^t\|\Delta u(\tau)\|_{\dot H^{s-1}}^2d\tau,
\end{split}
\end{equation}
for all $\delta\leq \delta_0$ and $E\leq \varepsilon\leq \varepsilon_0$, where $C$ is independent of $t$.
\end{lemma}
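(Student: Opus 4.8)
The plan is to differentiate \eqref{QF3} by $\partial^{\alpha}$ (with $|\alpha|=s$, $1\le s\le 3$), test the result against $\partial^{\alpha}\theta$ in $L^2(\mathbb R^3)$, and integrate over $[0,t]$. The first task is to extract the parabolic gain from the leading term $\frac{2\kappa}{3(1+\rho)}\Delta\theta$. Writing $a=\frac{2\kappa}{3(1+\rho)}$ and using Leibniz, $\partial^{\alpha}(a\Delta\theta)=a\,\Delta\partial^{\alpha}\theta+\sum_{0<\beta\le\alpha}\binom{\alpha}{\beta}\partial^{\beta}a\,\Delta\partial^{\alpha-\beta}\theta$; the main piece, after one integration by parts, gives $-\int a\,|\nabla\partial^{\alpha}\theta|^2$, which by \eqref{equ12} (so that $a\ge\frac{4\kappa}{9}$) is coercive and produces the dissipation $\kappa\int_0^t\|\partial^{\alpha}\nabla\theta\|^2$ on the left. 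The stray factor $\nabla a\cdot\nabla\partial^{\alpha}\theta\,\partial^{\alpha}\theta$ and the commutator sum are controlled with Lemma \ref{lem0} (so that $\|\partial^{\beta}a\|\sim\|\nabla^{|\beta|}\rho\|$), Lemma \ref{GN}, and the $L^{\infty}$ bounds \eqref{equ10}: the top $\theta$-order term ($|\beta|=1$) is $\le CE\|\nabla^{s+1}\theta\|\,\|\partial^{\alpha}\theta\|$ and is split by Young's inequality into a small multiple of $\|\nabla\partial^{\alpha}\theta\|^2$ (absorbed by the dissipation) plus $CE^2\|\partial^{\alpha}\theta\|^2$, while the remaining terms are genuinely of order $\le s$ and fall under $\frac{C}{\delta_0}\|\nabla(\rho,u,\theta)\|_{\dot H^{s-1}}^2$. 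I use repeatedly that for $|\alpha|=s\ge1$ one has $\|\partial^{\alpha}\theta\|\le\|\nabla\theta\|_{\dot H^{s-1}}$, so every leftover $\|\partial^{\alpha}\theta\|^2$ lands in the fourth term on the right of \eqref{equ27}.

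Next I would treat the two first-order right-hand terms. For the convection $-u\cdot\nabla\theta$, I split $\partial^{\alpha}(u\cdot\nabla\theta)=u\cdot\nabla\partial^{\alpha}\theta+[\partial^{\alpha},u\cdot\nabla]\theta$; the first integrates by parts to $\frac12\int(\nabla\cdot u)|\partial^{\alpha}\theta|^2\le CE\|\partial^{\alpha}\theta\|^2$, and the commutator is bounded by Lemma \ref{Le-inequ} and \eqref{equ10} by $CE(\|\nabla\theta\|_{\dot H^{s-1}}+\|\nabla u\|_{\dot H^{s-1}})$, both of which sit in the fourth right-hand term. For the pressure-work term $-\frac23(\theta+1)\nabla\cdot u$ I deliberately keep the top derivative on $u$: the principal piece $-\frac23(\theta+1)\partial^{\alpha}\nabla\cdot u$ tested against $\partial^{\alpha}\theta$ is bounded, by Young's inequality, by $\delta_0(\mu+\lambda)\|\partial^{\alpha}\nabla\cdot u\|^2+\frac{C}{\delta_0(\mu+\lambda)}\|\partial^{\alpha}\theta\|^2$ — the first being exactly the second term of \eqref{equ27}, to be absorbed later against the $u$-dissipation in Proposition \ref{prop3}; the commutator $[\partial^{\alpha},\theta]\nabla\cdot u$ again goes into the fourth term via Lemma \ref{Le-inequ}.

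The heart of the estimate is the viscous-heating term and the quantum term, and here the decisive device is to spend the single parabolic derivative of $\theta$. For the heating term, writing $b=\frac{2}{3(1+\rho)}$, $Q=\frac{\mu}{2}|\nabla u+(\nabla u)^T|^2+\lambda(\nabla\cdot u)^2$ and $\alpha=\alpha'+e_j$, I integrate by parts to obtain $-\int\partial^{\alpha'}(bQ)\,\partial_j\partial^{\alpha}\theta$; since $Q$ is quadratic in $\nabla u$, the Moser-type bound (Lemmas \ref{lem0} and \ref{GN}) gives $\|\partial^{\alpha'}(bQ)\|\le CE\|\nabla^{s}u\|+\text{l.o.t.}$, so the whole term is $\le\delta\|\nabla\partial^{\alpha}\theta\|^2+\frac{CE^2}{\delta}\|\nabla u\|_{\dot H^{s-1}}^2$, i.e. dissipation plus the fourth right-hand term. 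This integration by parts is essential: without it the naive bound would produce $\|\nabla^{s+1}u\|$ with no $\hbar$ and nowhere to put it. For the quantum term I first expand $\frac{1}{1+\rho}\text{div}((1+\rho)\Delta u)=\Delta\nabla\cdot u+\frac{\nabla\rho\cdot\Delta u}{1+\rho}$. The principal piece $\frac{\hbar^2}{36}\partial^{\alpha}\Delta\nabla\cdot u$, after shifting one derivative onto $\theta$, is estimated by $C\hbar^2\|\partial^{\alpha}\Delta u\|\,\|\nabla\partial^{\alpha}\theta\|\le\delta\|\nabla\partial^{\alpha}\theta\|^2+\frac{C\hbar^4}{\delta}\|\partial^{\alpha}\Delta u\|^2$, producing the third term of \eqref{equ27}; the lower piece carries the small factor $\nabla\rho$, so after shifting one derivative onto $\theta$ and using \eqref{equ10} it is bounded by $CE\hbar^2\|\Delta u\|_{\dot H^{s-1}}\|\nabla\partial^{\alpha}\theta\|\le\delta\|\nabla\partial^{\alpha}\theta\|^2+\frac{CE^2\hbar^4}{\delta}\|\Delta u\|_{\dot H^{s-1}}^2$, which (as $\hbar<1$) is dominated by the fifth term.

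Finally I would collect all contributions, choosing every Young parameter proportional to $\delta_0\kappa$ so that the accumulated multiples of $\|\nabla\partial^{\alpha}\theta\|^2$ total at most half the coercive constant and can be absorbed on the left, and then taking $\varepsilon_0$ and $\hbar_0$ small so that the $O(E)$ and $O(\hbar)$ prefactors are harmless; integrating in time over $[0,t]$ then yields \eqref{equ27}. The main obstacle I anticipate is precisely the bookkeeping of the top-order velocity derivatives: the quantum and heating terms carry one or two more derivatives of $u$ than the parabolic regularity of $\theta$ can directly absorb, and the estimate closes only because (i) one derivative can always be transferred onto $\partial^{\alpha}\theta$ against the heat dissipation, and (ii) the residual highest-order velocity norms appear either with the gain $\hbar^4$ (hence in the third term) or with a small factor $E$ or $\hbar^2$ (hence in the fifth term), so that none of them needs a velocity dissipation unavailable at this stage — that being deferred to the combined estimate of Proposition \ref{prop3}.
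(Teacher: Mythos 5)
Your proposal is correct and follows essentially the same strategy as the paper: apply $\partial^{\alpha}$ to \eqref{QF3}, test against $\partial^{\alpha}\theta$, extract the dissipation $\kappa\|\partial^{\alpha}\nabla\theta\|^2$ from the heat-conduction term, keep the top derivative on $u$ in the pressure-work term so that it lands in $\delta_0(\mu+\lambda)\|\partial^{\alpha}\nabla\cdot u\|^2$, and shift one derivative onto $\theta$ in the quantum and viscous-heating terms so that the leftover velocity norms carry either the factor $\hbar^4$ or $E^2\hbar^2$. The only differences (integration by parts plus commutator for the convection term instead of the paper's $L^{6/5}$--$L^{6}$ duality bound, and a direct expansion of $\tfrac{1}{1+\rho}\mathrm{div}((1+\rho)\Delta u)$ instead of the paper's commutator with $\tfrac{1}{1+\rho}$) are cosmetic.
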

\begin{proof}
Applying $\partial^{\alpha}$ to \eqref{QF3} and then taking inner product of the resultant with $\partial^{\alpha}\theta$ to obtain
\begin{equation*}
\begin{split}
&\frac12\frac{d}{dt}\|\partial^{\alpha}\theta\|^2 +\frac{2\kappa}{3}\int\frac{|\partial^{\alpha}\nabla\theta|^2}{(1+\rho)} =\frac{2\kappa}{3}\int\frac{\nabla\partial^{\alpha}\theta\cdot\nabla\rho}{(1+\rho)^2} \partial^{\alpha}\theta\\
-&\int\partial^{\alpha}(u\cdot\nabla\theta) \partial^{\alpha}\theta -\frac23\int\partial^{\alpha}\left((\theta+1)\nabla\cdot u\right) \partial^{\alpha}\theta +\frac{\hbar^2}{36}\int\partial^{\alpha}\left(\frac{\text{div}((1+\rho) \Delta u)}{(1+\rho)}\right)\partial^{\alpha}\theta \\
&+\frac{2}{3}\int\partial^{\alpha}\left(\frac{\mu|\nabla u+(\nabla u)^T|^2+2\lambda(\text{div} u)^2}{(1+\rho)}\right)\partial^{\alpha}\theta =\sum R_i.
\end{split}
\end{equation*}
For the first term $R_1$, we have
\begin{equation*}
\begin{split}
R_1\leq & C\kappa\|\partial^{\alpha}\nabla\theta\| \|\nabla\rho\|_{L^3}\|\partial^{\alpha}\theta\|_{L^6}
\leq C\kappa E\|\partial^{\alpha}\nabla\theta\|^2.
\end{split}
\end{equation*}
For the term $R_2$, we have
\begin{equation*}
\begin{split}
R_2\leq & \|\partial^{\alpha}(u\cdot\nabla\theta)\|_{L^{6/5}}\|\partial^{\alpha}\theta\|_{L^6}\\
\leq &(\|u\|_{L^3}\|\partial^{\alpha}\nabla\theta\|_{L^2} +\|\partial^{\alpha}u\|_{L^6}\|\nabla\theta\|_{L^{3/2}}) \|\nabla\partial^{\alpha}\theta\|_{L^2} \\
\leq & \delta_0\kappa\|\nabla\partial^{\alpha}\theta\|_{L^2}^2 +\frac{CE^2}{\delta_0\kappa}\|\nabla\partial^{\alpha}u\|_{L^2}^2,
\end{split}
\end{equation*}
thanks to Lemma \ref{Le-inequ}. For the term $R_3$, since $|\theta+1|\leq 3/2$ by \eqref{equ11}, we have
\begin{equation*}
\begin{split}
R_3\leq &(\|\theta+1\|_{L^{\infty}}\|\partial^{\alpha}\nabla\cdot u\|_{L^2}+\|\partial^{\alpha}\theta\|_{L^{6}}\|\nabla\cdot u\|_{L^3})\|\partial^{\alpha}\theta\|_{L^2}\\
\leq & \delta_0\kappa\|\nabla\partial^{\alpha}\theta\|_{L^2}^2 +\delta_0(\mu+\lambda)\|\partial^{\alpha}\nabla\cdot u\|_{L^2}^2 +\frac{C(1+E^2)}{\delta_0}\|\partial^{\alpha}\theta\|_{L^2}^2.
\end{split}
\end{equation*}
For the term $R_4$, by integration by parts,
\begin{equation*}
\begin{split}
R_4=& -\frac{\hbar^2}{36}\int\partial^{\alpha-1}\left(\frac{\text{div}((1+\rho) \Delta u)}{(1+\rho)}\right)\partial^{\alpha+1}\theta\\
\leq & C\hbar^2\|\partial^{\alpha+1}\theta\|\left\{\|\partial^{\alpha-1}\text{div}((1+\rho) \Delta u)\| +\|[\partial^{\alpha-1},\frac{1}{1+\rho}]\text{div}((1+\rho)\Delta u)\|\right\}\\
\leq & C\hbar^2\|\partial^{\alpha+1}\theta\| \left\{\|\partial^{\alpha-1}\text{div}\Delta u\|+\|\Delta u\|_{L^{\infty}}\|\partial^{\alpha-1}\nabla\rho\|_{L^2}\right\} +C\hbar^2\|\partial^{\alpha+1}\theta\|\times\\ &\left\{\|\nabla(\frac{1}{1+\rho})\|_{L^{\infty}}\|\partial^{\alpha-2} \text{div}((1+\rho)\Delta u)\|_{L^2} +\|\frac{1}{1+\rho}\|_{\dot H^{s-1,6}}\|\text{div}((1+\rho)\Delta u)\|_{L^3}\right\},
\end{split}
\end{equation*}
where $s=|\alpha|$. Then making use of Lemma \ref{GN}, \ref{lem0} and \ref{Le-inequ} and \eqref{equ9}-\eqref{equ12}, one obtains
\begin{equation*}
\begin{split}
R_4 \leq & C\hbar^2\|\partial^{\alpha+1}\theta\| \left(\|\Delta u\|_{\dot H^{s}}+\|\Delta u\|_{L^{\infty}}\|\rho\|_{\dot H^s}\right) \\
& +C\hbar^2\|\partial^{\alpha+1}\theta\|\|\nabla\rho\|_{L^{\infty}} (\|\Delta u\|_{\dot H^{s-1}} +\|\rho\|_{\dot H^{s-1,6}}\|\Delta u\|_{L^{3}})\\
&+C\hbar^2\|\partial^{\alpha+1}\theta\|\|\rho\|_{\dot H^{s-1,6}}(\|\nabla\Delta u\|_{L^3}+\|\nabla\rho\|_{L^{\infty}}\|\Delta u\|_{L^3})\\
\leq & \delta_0\kappa\|\partial^{\alpha+1}\theta\|^2 +\frac{C\hbar^4}{\delta_0\kappa}\|\Delta u\|_{\dot H^s}^2 +\frac{CE^2\hbar^2}{\delta_0\kappa}\|\rho\|_{\dot H^s}^2 +\frac{CE^2\hbar^2}{\delta_0\kappa}\|\Delta u\|_{\dot H^{s-1}}^2.
\end{split}
\end{equation*}
Similar to $R_4$, we have for the term $R_5$ that
\begin{equation*}
\begin{split}
R_5 =& \frac{2}{3}\int\partial^{\alpha}\left(\frac{\mu|\nabla u+(\nabla u)^T|^2+2\lambda(\text{div} u)^2}{(1+\rho)}\right)\partial^{\alpha}\theta\\
\leq & \delta_0\kappa\|\partial^{\alpha+1}\theta\|^2 +\frac{CE^2}{\delta_0\kappa}\|(\nabla\rho,\nabla u)\|_{\dot H^{s-1}}^2.
\end{split}
\end{equation*}
Putting these estimates together, we obtain
\begin{equation*}
\begin{split}
\frac12\frac{d}{dt}\|\partial^{\alpha}\theta\|^2 &+\frac{2\kappa}{3}\int\frac{|\partial^{\alpha}\nabla\theta|^2}{(1+\rho)} \leq \left(C\kappa E+\delta_0\kappa\right) \|\partial^{\alpha}\nabla\theta\|^2 +\delta_0(\mu+\lambda)\|\partial^{\alpha}\nabla\cdot u\|_{L^2}^2\\
&+\frac{C}{\delta_0}\|\partial^{\alpha}\theta\|_{L^2}^2 +\frac{C\hbar^4}{\delta_0\kappa}\|\partial^{\alpha}\Delta u\|^2 +\frac{CE^2}{\delta_0\kappa}\|(\nabla\rho,\nabla u)\|_{\dot H^{s-1}}^2 +\frac{CE^2\hbar^2}{\delta_0\kappa}\|\Delta u\|_{\dot H^{s-1}}^2.
\end{split}
\end{equation*}
Integrating this inequality in time over $[0,t]$ and noting $1/2<1+\rho<3/2$, we know that there exists some constants $\delta_0<1$ and $\varepsilon_0<1$ sufficiently small, such that \eqref{equ27} holds for all $\delta\leq\delta_0$ and $E\leq \varepsilon\leq \varepsilon_0$.
\end{proof}

\begin{lemma}\label{lem9}
Under the assumptions in Proposition \ref{prop3}, there exists some constant $\varepsilon_0<1$ sufficiently small and $\hbar_0<1$, such that
\begin{equation}\label{equ28}
\begin{split}
&\|\partial^{\alpha}(\rho,u,\hbar\nabla\rho)(t)\|^2 +\nu_0\int_0^t\|\partial^{\alpha}\nabla u(\tau)\|^2d\tau\\
\leq & C\|\partial^{\alpha}(\rho,u,\hbar\nabla\rho)(0)\|^2 +CE\int_0^t\|\nabla(\rho,u)(\tau)\|_{\dot H^s}^2d\tau +C\int_0^t\|\nabla(\rho,u,\theta)(\tau)\|_{\dot H^{s-1}}^2d\tau.
\end{split}
\end{equation}
for all $E\leq \varepsilon\leq \varepsilon_0$ and $\hbar<\hbar_0$, where $C$ is independent of $t$.
\end{lemma}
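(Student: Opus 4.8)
The plan is to run a weighted energy estimate on the $\partial^{\alpha}$-differentiated continuity equation \eqref{QF1} and momentum equation \eqref{QF2}, and then to add the two resulting identities so that the principal density--velocity coupling cancels. Concretely, I would apply $\partial^{\alpha}$ to \eqref{QF1} and pair it in $L^2$ with $\partial^{\alpha}\rho$, and apply $\partial^{\alpha}$ to \eqref{QF2} and pair it with $\partial^{\alpha}u$. Unlike the zeroth-order estimate of Lemma \ref{lem2}, here the left-hand side of \eqref{equ28} carries no $\|\partial^{\alpha}\nabla\rho\|^2$ dissipation, so the Matsumura--Nishida cross term $\nabla\rho\cdot u$ is \emph{not} needed; the density dissipation at this order is recovered separately and assembled in Proposition \ref{prop3}. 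The velocity pairing produces $\tfrac12\frac{d}{dt}\|\partial^{\alpha}u\|^2$ together with the viscous dissipation $\mu\int|\partial^{\alpha}\nabla u|^2/(1+\rho)+(\mu+\lambda)\int|\partial^{\alpha}\text{div}\,u|^2/(1+\rho)$, while the continuity pairing produces $\tfrac12\frac{d}{dt}\|\partial^{\alpha}\rho\|^2$; the $\hbar$-weighted energy $\|\hbar\partial^{\alpha}\nabla\rho\|^2$ will emerge from the quantum term as explained below.

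For the velocity pairing I would treat the three classical right-hand sides as follows. The variable-coefficient viscosity is split as $\partial^{\alpha}(\tfrac{\mu}{1+\rho}\Delta u)=\tfrac{\mu}{1+\rho}\partial^{\alpha}\Delta u+[\partial^{\alpha},\tfrac{\mu}{1+\rho}]\Delta u$; integrating the first piece by parts gives the stated dissipation plus an $O(E)$ term, and the commutator is bounded by Lemma \ref{Le-inequ} together with Lemma \ref{lem0}, falling into the $CE\int\|\nabla(\rho,u)\|_{\dot H^{s}}^2$ and $C\int\|\nabla(\rho,u,\theta)\|_{\dot H^{s-1}}^2$ remainders. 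The convection term $-\int\partial^{\alpha}(u\cdot\nabla u)\cdot\partial^{\alpha}u$ is estimated in the standard way after commuting $\partial^{\alpha}$ past $u$, yielding $CE$-small contributions. The term $-\int\partial^{\alpha}\nabla\theta\cdot\partial^{\alpha}u=\int\partial^{\alpha}\theta\,\partial^{\alpha}\text{div}\,u$ is split by Young's inequality into $\delta\|\partial^{\alpha}\text{div}\,u\|^2$ (absorbed into the dissipation) and $C_\delta\|\partial^{\alpha}\theta\|^2\le C_\delta\|\nabla\theta\|_{\dot H^{s-1}}^2$, a lower-order remainder controlled inductively via Lemma \ref{lem8} at order $s-1$. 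Finally, the pressure--density term $-\int\partial^{\alpha}(\tfrac{\theta+1}{1+\rho}\nabla\rho)\cdot\partial^{\alpha}u$ has principal part $-\int\partial^{\alpha}\nabla\rho\cdot\partial^{\alpha}u=\int\partial^{\alpha}\rho\,\partial^{\alpha}\text{div}\,u$, which exactly cancels the principal coupling $-\int(1+\rho)\partial^{\alpha}\text{div}\,u\,\partial^{\alpha}\rho$ coming from the continuity pairing, the weight mismatch between $\tfrac{\theta+1}{1+\rho}$ and $(1+\rho)$ being $O(E)$ and hence absorbable.

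The heart of the argument is the quantum dispersion term $\tfrac{\hbar^2}{12}\partial^{\alpha}(\Delta\nabla\rho/(1+\rho))$ paired with $\partial^{\alpha}u$, which I would handle exactly as term $I$ in \eqref{equ5}: integrating by parts to move a derivative onto $\partial^{\alpha}\text{div}\,u$, and then replacing $\partial^{\alpha}\text{div}\,u$ by $-\partial_t\partial^{\alpha}\rho/(1+\rho)$ through the continuity equation \eqref{QF1}, which turns the leading contribution into (something equivalent to) $-\tfrac{\hbar^2}{24}\frac{d}{dt}\int|\partial^{\alpha}\nabla\rho|^2/(1+\rho)^2$, i.e.\ a favorably-signed $\frac{d}{dt}\|\hbar\partial^{\alpha}\nabla\rho\|^2$ on the left-hand side. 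The quadratic Bohm-type term $-\tfrac{\hbar^2}{3}\text{div}(\nabla\sqrt{\rho+1}\otimes\nabla\sqrt{\rho+1})/(1+\rho)$ is genuinely nonlinear in $\nabla\rho$ and contributes only $CE\hbar^2$-small terms, as for $J_2$ in Lemma \ref{lem2}. The price of these manipulations is a family of $\hbar^2$- and $\hbar^4$-weighted commutator errors, coming from $[\partial^{\alpha},(1+\rho)^{-1}]\Delta\nabla\rho$ and from the lower-order terms of the continuity substitution, which at the top order $s=3$ reach up to $\hbar^2\nabla^{5}\rho$; these are precisely the quantities controlled by the weighted components $\|\hbar^2\nabla^{k+2}\rho\|$ built into $|||\cdot|||_3$, which is the structural reason for the choice of Sobolev spaces.

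I expect the main obstacle to be exactly this quantum dispersion term: one must extract the $\|\hbar\partial^{\alpha}\nabla\rho\|^2$ energy with the correct sign while showing that every resulting high-order error --- in particular the $\hbar^4$-weighted ones, analogous to $R_4$ in Lemma \ref{lem3} --- is either $O(E)$-small, or of size $O(\hbar^2)$ and thus absorbable for $\hbar\le\hbar_0$, or a lower-order $\dot H^{s-1}$ quantity. Once all terms are collected, I would integrate in time over $[0,t]$, choose $\delta$, $\varepsilon_0$ and $\hbar_0$ small enough to absorb the dissipation-level and $O(E)$ contributions into the left-hand side, and read off \eqref{equ28} with the surviving terms grouped into the $CE\int\|\nabla(\rho,u)\|_{\dot H^{s}}^2$ and $C\int\|\nabla(\rho,u,\theta)\|_{\dot H^{s-1}}^2$ bounds on the right.
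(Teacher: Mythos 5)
Your proposal is correct and follows the same energy-method skeleton as the paper's proof of Lemma \ref{lem9}: the viscous terms are handled by peeling off the principal part and estimating the commutators $[\partial^{\alpha},\tfrac{1}{1+\rho}]\Delta u$ via Lemmas \ref{lem0} and \ref{Le-inequ}; the $\nabla\theta$ term goes by Young's inequality into $\delta\|\partial^{\alpha}\nabla u\|^2+C_{\delta}\|\nabla\theta\|^2_{\dot H^{s-1}}$; and, most importantly, the quantum dispersion term is converted, exactly as you describe, by integration by parts followed by substitution of the differentiated continuity equation into $-\tfrac{\hbar^2}{24}\tfrac{d}{dt}\int|\partial^{\alpha}\nabla\rho|^2/(1+\rho)^2$, which is the source of the $\|\hbar\partial^{\alpha}\nabla\rho\|^2$ energy on the left of \eqref{equ28}. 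The one place where you genuinely deviate is the treatment of the density energy: you pair $\partial^{\alpha}$\eqref{QF1} with $\partial^{\alpha}\rho$ separately and cancel the principal coupling $\int\partial^{\alpha}\rho\,\partial^{\alpha}\mathrm{div}\,u$ against the integrated-by-parts pressure term, absorbing the $O(E)$ mismatch between the weights $\tfrac{\theta+1}{1+\rho}$ and $1+\rho$. The paper never pairs the continuity equation separately; instead it substitutes $(1+\rho)\partial^{\alpha}\mathrm{div}\,u=-\partial_t\partial^{\alpha}\rho-\partial^{\alpha}(u\cdot\nabla\rho)-[\partial^{\alpha},1+\rho]\mathrm{div}\,u$ directly inside the pressure term $R_3$, which produces $-\tfrac12\tfrac{d}{dt}\int\tfrac{\theta+1}{(1+\rho)^2}|\partial^{\alpha}\rho|^2$ with its natural positive weight and avoids the weight-mismatch bookkeeping. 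The two routes are equivalent and lead to the same remainder structure; yours is arguably more symmetric, the paper's slightly more economical. One small inaccuracy: the commutator errors from the quantum term do not actually reach $\hbar^2\nabla^{5}\rho$ at $s=3$ --- after the two integrations by parts and the continuity substitution everything is bounded by $C\hbar^2E\|\nabla(\rho,u)\|_{\dot H^{s}}^2$ involving at most $\nabla^4\rho$ --- so the appeal to the $\|\hbar^2\nabla^{k+2}\rho\|$ component of $|||\cdot|||_3$ is not needed here (it matters elsewhere, e.g.\ in Lemmas \ref{lem10} and \ref{lem11}); this is an overcautious remark, not a gap.
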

\begin{proof}
Applying $\partial^{\alpha}$ to \eqref{QF2} and then taking inner product of the resultant with $\partial^{\alpha}u$, we obtain
\begin{equation}\label{equ13}
\begin{split}
\sum L_i=&\frac12\frac{d}{dt}\|\partial^{\alpha}u\|^2-\mu\int\partial^{\alpha}(\frac{\Delta u}{\rho+1})\partial^{\alpha}u -(\mu+\lambda)\int\partial^{\alpha}(\frac{\nabla\text{div}u}{\rho+1})\partial^{\alpha}u\\
=& -\int\partial^{\alpha}(u\cdot\nabla u)\partial^{\alpha}u -\int\partial^{\alpha}\nabla\theta\partial^{\alpha}u -\int\partial^{\alpha}(\frac{\theta+1}{\rho+1}\nabla\rho)\partial^{\alpha}u\\
&+\frac{\hbar^2}{12}\int\partial^{\alpha}(\frac{\Delta\nabla\rho}{\rho+1})\partial^{\alpha}u -\frac{\hbar^2}{3}\int\partial^{\alpha}(\frac{\text{div}\{\cdots\}}{\rho+1}) \partial^{\alpha}u =\sum R_i
\end{split}
\end{equation}
Now, for the term $L_2$, we have by integration by parts
\begin{equation*}
\begin{split}
L_2= & -\mu\int\frac{\partial^{\alpha}\Delta u}{1+\rho}\partial^{\alpha}u -\mu\int[\partial^{\alpha},\frac{1}{1+\rho}]\Delta u\partial^{\alpha}u\\
= & \mu\int\frac{|\partial^{\alpha}\nabla u|^2}{1+\rho} -\mu\int\frac{\partial^{\alpha}\nabla u}{(1+\rho)^2}\nabla\rho\partial^{\alpha}u -\mu\int[\partial^{\alpha},\frac{1}{1+\rho}]\Delta u\partial^{\alpha}u.
\end{split}
\end{equation*}
Invoking Lemma \ref{lem0} and \ref{Le-inequ}, we obtain
\begin{equation*}
\begin{split}
\|[\partial^{\alpha},\frac{1}{1+\rho}]\Delta u\|_{L^{6/5}}\leq & C\|\nabla\rho\|_{L^3}\|\partial^{\alpha-1}\Delta u\|_{L^2} +C\|\Delta u\|_{L^3}\|\partial^{\alpha}(\frac{1}{1+\rho})\|_{L^2}\\
\leq & CE\|\partial^{\alpha-1}\Delta u\|_{L^2} +CE\|\nabla\rho\|_{\dot H^{s-1}}.
\end{split}
\end{equation*}
Hence
\begin{equation*}
\begin{split}
L_2\geq & \frac{2\mu}{3}\|\partial^{\alpha}\nabla u\|^2
-C\mu\|\partial^{\alpha}\nabla u\|_{L^2}\|\nabla\rho\|_{L^3}\|\partial^{\alpha}u\|_{L^6}
-C\|[\partial^{\alpha},\frac{1}{1+\rho}]\Delta u\|_{L^{6/5}}\|\partial^{\alpha}u\|_{L^6}\\
\geq &\frac{2\mu}{3}\|\partial^{\alpha}\nabla u\|^2 -CE(1+\mu)\|\partial^{\alpha}\nabla u\|^2 -CE\|\nabla\rho\|_{\dot H^{s-1}}^2\\
\geq & \frac{\mu}{2}\|\partial^{\alpha}\nabla u\|^2-CE\|\nabla\rho\|_{\dot H^{s-1}}^2
\end{split}
\end{equation*}
by taking $E\leq \mu/6C(1+\mu)$. Similarly, we have for $L_3$ that
\begin{equation*}
\begin{split}
L_3 \geq & \frac{\mu+\lambda}{2}\|\partial^{\alpha}\nabla\cdot u\|^2-CE\|\nabla\rho\|_{\dot H^{s-1}}^2
\end{split}
\end{equation*}
For the RHS term $R_1$, we have by integration by parts that
\begin{equation*}
\begin{split}
R_1 
=& \frac12\int \nabla\cdot u|\partial^{\alpha}u|^2 -\int[\partial^{\alpha},u]\nabla u\partial^{\alpha}u\\
\leq & C\|\nabla u\|_{L^{\infty}}\|\partial^{\alpha}u\|^2 \leq CE\|\nabla u\|_{\dot H^{s-1}}^2.
\end{split}
\end{equation*}
For the term $R_2$, we have for any $\delta_0>0$ that
\begin{equation*}
\begin{split}
R_2\leq \delta_0\mu\|\nabla\partial^{\alpha}u\|^2 +\frac{C}{\delta_0\mu}\|\nabla\theta\|_{\dot H^{s-1}}^2.
\end{split}
\end{equation*}
The term $R_3$ will be treated with much more effort, from which some good terms will appear. By integration by parts,
\begin{equation*}
\begin{split}
R_3= & -\int\frac{\theta+1}{\rho+1}\nabla\partial^{\alpha}\rho\partial^{\alpha}u -\int\partial^{\alpha}u[\partial^{\alpha},\frac{\theta+1}{\rho+1}]\nabla\rho\\
= & \int\frac{\theta+1}{\rho+1}\partial^{\alpha}\rho\partial^{\alpha}\text{div}u +\int\nabla(\frac{\theta+1}{\rho+1})\partial^{\alpha}\rho\partial^{\alpha}u -\int\partial^{\alpha}u[\partial^{\alpha},\frac{\theta+1}{\rho+1}]\nabla\rho =\sum R_{3i}.
\end{split}
\end{equation*}
It is easy to show that
\begin{equation*}
\begin{split}
R_{32} \leq C\|\nabla(\theta,\rho)\|_{L^{\infty}}\|\partial^{\alpha}(\rho,u)\|^2 \leq CE\|\nabla(\rho,u)\|_{\dot H^{s-1}}^2,
\end{split}
\end{equation*}
and
\begin{equation*}
\begin{split}
R_{33} \leq & C\|\partial^{\alpha}u\|\left(\|\nabla(\frac{1+\theta}{1+\rho})\|_{L^{\infty}} \|\nabla\rho\|_{\dot H^{s-1}} +\|\nabla\rho\|_{L^{\infty}}\|\frac{1+\theta}{1+\rho}\|_{\dot H^s}\right)\\
\leq & CE\|\nabla(\rho,u,\theta)\|_{\dot H^{s-1}}^2,
\end{split}
\end{equation*}
thanks to Lemma \ref{lem0}. Differentiating the continuity equation \eqref{QF1} with $\partial^{\alpha}$ yields
\begin{equation*}
\begin{split}
(1+\rho)\partial^{\alpha}\text{div}u=-\partial_t\partial^{\alpha}\rho -\partial^{\alpha}(u\cdot\nabla\rho) -[\partial^{\alpha},1+\rho]\text{div}u,
\end{split}
\end{equation*}
which implies that
\begin{equation*}
\begin{split}
R_{31}= & -\int\frac{\theta+1}{(1+\rho)^2}\partial^{\alpha}\rho \{\partial_t\partial^{\alpha}\rho +\partial^{\alpha}(u\cdot\nabla\rho) +[\partial^{\alpha},1+\rho]\text{div}u\}=\sum R_{31i}.
\end{split}
\end{equation*}
It is immediately from \eqref{equ11} and \eqref{equ29} with $p=3/2$ that
\begin{equation*}
\begin{split}
R_{311}= & -\frac12\frac{d}{dt}\int\frac{\theta+1}{(1+\rho)^2}|\partial^{\alpha}\rho|^2 +\frac12\int\partial_t(\frac{\theta+1}{(1+\rho)^2})|\partial^{\alpha}\rho|^2\\
\leq & -\frac12\frac{d}{dt}\int\frac{\theta+1}{(1+\rho)^2}|\partial^{\alpha}\rho|^2 +C\|\partial_t\theta,\partial_t\rho\|_{L^{3/2}} \|\partial^{\alpha}\rho\|_{L^6}^2\\
\leq & -\frac12\frac{d}{dt}\int\frac{\theta+1}{(1+\rho)^2}|\partial^{\alpha}\rho|^2 +CE\|\partial^{\alpha}\nabla\rho\|^2.
\end{split}
\end{equation*}
For the term $R_{312}$, we have by integration by parts that
\begin{equation*}
\begin{split}
R_{312} =&  \frac12\int\text{div}(\frac{(\theta+1)u}{(1+\rho)^2})|\partial^{\alpha}\rho|^2 -\int\frac{(\theta+1)}{(1+\rho)^2}\partial^{\alpha}\rho[\partial^{\alpha},u] \nabla\rho \\
\leq & C\|\nabla(\rho,u,\theta)\|_{L^{\infty}}\|\partial^{\alpha}\rho\|^2 +\|\partial^{\alpha}\rho\|\left(\|\nabla u\|_{L^{\infty}}\|\nabla\rho\|_{\dot H^{s-1}} \|\nabla\rho\|_{L^{\infty}}\|u\|_{\dot H^{s}}\right)\\
\leq & CE\|\nabla(\rho,u)\|_{\dot H^{s-1}}^2.
\end{split}
\end{equation*}
The same estimate hold for $R_{313}$. Combining all the estimates for $R_3$, we obtain
\begin{equation*}
\begin{split}
R_{3} \leq & -\frac12\frac{d}{dt}\int\frac{\theta+1}{(1+\rho)^2}|\partial^{\alpha}\rho|^2 +CE\|\partial^{\alpha}\nabla\rho\|^2 +CE\|\nabla(\rho,u,\theta)\|_{\dot H^{s-1}}^2.
\end{split}
\end{equation*}
Now, we consider the estimate of $R_4$. By integration by parts, we obtain
\begin{equation*}
\begin{split}
R_4=&\frac{\hbar^2}{12}\int\partial^{\alpha}\partial_i\partial_j (\frac{\partial_i\rho}{\rho+1}) \partial^{\alpha}u^j -\frac{\hbar^2}{12}\int\partial^{\alpha}\left([\partial_i\partial_j, \frac{1}{\rho+1}]\partial_i\rho\right)\partial^{\alpha}u^j\\
= &\frac{\hbar^2}{12}\int\partial^{\alpha}(\frac{\nabla\rho}{\rho+1}) \partial^{\alpha}\nabla\text{div}u +\frac{\hbar^2}{12}\int\partial^{\alpha-1}\left([\nabla^2, \frac{1}{\rho+1}]\nabla\rho\right)\partial^{\alpha+1}u\\
=& \frac{\hbar^2}{12}\int\frac{\partial^{\alpha}\nabla\rho}{\rho+1} \partial^{\alpha}\nabla\text{div}u +\frac{\hbar^2}{12}\int[\partial^{\alpha},\frac{1}{\rho+1}]\nabla\rho \partial^{\alpha}\nabla\text{div}u\\
& +\frac{\hbar^2}{12}\int\partial^{\alpha-1}\left([\nabla^2, \frac{1}{\rho+1}]\nabla\rho\right)\partial^{\alpha+1}u =:\sum R_{4i}.
\end{split}
\end{equation*}
Using the continuity equation \eqref{QF1} and similar to the term $R_{31}$, it can be shown that
\begin{equation*}
\begin{split}
R_{41}= & -\frac{\hbar^2}{12}\int\frac{\partial^{\alpha}\nabla\rho}{(1+\rho)^2} \left\{\partial_t\partial^{\alpha}\nabla\rho +\partial^{\alpha}\nabla(u\cdot\nabla\rho) +[\partial^{\alpha}\nabla,1+\rho]\text{div}u\right\}=\sum R_{41i}.
\end{split}
\end{equation*}
For $R_{411}$, we obtain
\begin{equation*}
\begin{split}
R_{411}= & -\frac{\hbar^2}{24}\frac{d}{dt}\int\frac{|\partial^{\alpha}\nabla\rho|^2}{(1+\rho)^2} -\frac{\hbar^2}{12}\frac{d}{dt}\int\frac{\partial_t\rho|\partial^{\alpha}\nabla\rho|^2} {(1+\rho)^3}\\
\leq &-\frac{\hbar^2}{24}\frac{d}{dt}\int\frac{|\partial^{\alpha}\nabla\rho|^2}{(1+\rho)^2} +C\hbar^2E\|\partial^{\alpha}\nabla\rho\|^2.
\end{split}
\end{equation*}
By integration by parts,
\begin{equation*}
\begin{split}
R_{412}= & -\frac{\hbar^2}{12}\int\frac{\partial^{\alpha}\nabla\rho}{(1+\rho)^2} u\cdot\partial^{\alpha}\nabla^2\rho -\frac{\hbar^2}{12}\int\frac{\partial^{\alpha}\nabla\rho}{(1+\rho)^2} [\partial^{\alpha}\nabla,u]\nabla\rho\\
= & \frac{\hbar^2}{24}\int\text{div}\left(\frac{u}{(1+\rho)^2}\right) |\partial^{\alpha}\nabla\rho|^2 -\frac{\hbar^2}{12}\int\frac{\partial^{\alpha}\nabla\rho}{(1+\rho)^2} [\partial^{\alpha}\nabla,u]\nabla\rho,
\end{split}
\end{equation*}
and hence by Lemma \ref{Le-inequ} and \eqref{equ10}
\begin{equation*}
\begin{split}
R_{412}\leq  & C\hbar^2\|\nabla(\rho,u)\|_{L^{\infty}}\|\partial^{\alpha}\nabla\rho\|^2 \\
&+C\hbar^2\|\partial^{\alpha}\nabla\rho\|(\|\partial^{\alpha}\nabla\rho\|\|\nabla u\|_{L^{\infty}}+\|\nabla\rho\|_{L^{\infty}}\|\partial^{\alpha}\nabla u\|)\\
\leq & C\hbar^2E\|\partial^{\alpha}\nabla(\rho,u)\|^2.
\end{split}
\end{equation*}
Similarly, by Lemma \ref{Le-inequ} and \eqref{equ10},
\begin{equation*}
\begin{split}
R_{413} \leq & C\hbar^2E\|\partial^{\alpha}\nabla(\rho,u)\|^2.
\end{split}
\end{equation*}
For the term $R_{42}$, we have
\begin{equation*}
\begin{split}
R_{42}= & -\frac{\hbar^2}{12}\int\nabla([\partial^{\alpha},\frac{1}{\rho+1}]\nabla\rho) \partial^{\alpha}\text{div}u\\
=&\frac{\hbar^2}{12}\int([\partial^{\alpha},\frac{\nabla\rho}{(\rho+1)^2}]\nabla\rho) \partial^{\alpha}\text{div}u -\frac{\hbar^2}{12}\int[\partial^{\alpha},\frac{1}{\rho+1}]\nabla^2\rho \partial^{\alpha}\text{div}u
\end{split}
\end{equation*}
But by the commutator estimates, we have
\begin{equation*}
\begin{split}
\|[\partial^{\alpha},\frac{\nabla\rho}{(\rho+1)^2}]\nabla\rho\|_{L^2}\leq & \|\nabla\rho\|_{\dot H^{s-1,6}}\|\frac{\nabla\rho}{(\rho+1)^2}\|_{L^3} +\|\nabla\rho\|_{L^{\infty}}\|\frac{\nabla\rho}{(\rho+1)^2}\|_{\dot H^s}\\
\leq & CE\|\nabla\rho\|_{\dot H^s},
\end{split}
\end{equation*}
and
\begin{equation*}
\begin{split}
\|[\partial^{\alpha},\frac{1}{\rho+1}]\nabla^2\rho \|_{L^2}\leq & \|\nabla^2\rho\|_{\dot H^{s-1}}\|\nabla(\frac{1}{\rho+1})\|_{L^\infty} +\|\nabla^2\rho\|_{L^{3}}\|\frac{1}{\rho+1}\|_{\dot H^{s,6}}\\
\leq & CE\|\nabla\rho\|_{\dot H^s},
\end{split}
\end{equation*}
hence
\begin{equation*}
\begin{split}
R_{42}\leq C\hbar^2E\|\nabla(\rho,u)\|_{\dot H^s}^2.
\end{split}
\end{equation*}
Similarly, for $R_{43}$, we obtain
\begin{equation*}
\begin{split}
R_{43}\leq C\hbar^2E\|\nabla(\rho,u)\|_{\dot H^s}^2.
\end{split}
\end{equation*}
Putting all the estimates for $R_4$ together, we obtain
\begin{equation*}
\begin{split}
R_{4} \leq &-\frac{\hbar^2}{24}\frac{d}{dt}\int\frac{|\partial^{\alpha}\nabla\rho|^2}{(1+\rho)^2} +C\hbar^2E\|\nabla(\rho,u)\|_{\dot H^s}^2.
\end{split}
\end{equation*}
Finally, for $R_5$, it is easy to show
\begin{equation*}
\begin{split}
R_5= & \frac{\hbar^2}{12}\int\partial^{\alpha-1}(\frac{\text{div} \{\nabla\rho\otimes\nabla\rho/(1+\rho)\}}{\rho+1}) \partial^{\alpha+1}u\\
\leq & C\hbar^2E\|\partial^{\alpha}\nabla(\rho,u)\|^2 +C\hbar^2E\|\nabla\rho\|_{\dot H^{s-1}}^2.
\end{split}
\end{equation*}
Now, putting all these estimates for \eqref{equ13} together, and taking $\delta_0=1/4$, we obtain,
\begin{equation*}
\begin{split}
\frac12\frac{d}{dt}\|\partial^{\alpha}u\|^2 &+\frac12\frac{d}{dt}\int\frac{\theta+1}{(1+\rho)^2}|\partial^{\alpha}\rho|^2 +\frac{\hbar^2}{24}\frac{d}{dt}\int\frac{|\partial^{\alpha}\nabla\rho|^2}{(1+\rho)^2}\\
&+\frac{\mu}{4}\|\partial^{\alpha}\nabla u\|^2 +\frac{\mu+\lambda}{2}\|\partial^{\alpha}\nabla\cdot u\|^2\\
\leq & C\|\partial^{\alpha}\theta\|^2 +CE\|\partial^{\alpha}\nabla\rho\|^2 +CE\|\nabla(\rho,u,\theta)\|_{\dot H^{s-1}}^2 +C\hbar^2E\|\partial^{\alpha}\nabla u\|^2.
\end{split}
\end{equation*}
Integrating in time over $[0,t]$ completes the proof, thanks to $\hbar<1$.
\end{proof}

\begin{lemma}\label{lem10}
Under the assumptions in Proposition \ref{prop3}, there exists some constant $\varepsilon_0<1$ sufficiently small and $\hbar_0<1$, such that
\begin{equation}\label{equ30}
\begin{split}
& \hbar^2\|\nabla\partial^{\alpha}(u,\rho,\hbar\nabla\rho)(t)\|^2 +\nu_0\hbar^2\int_0^t\|\Delta\partial^{\alpha}u(\tau)\|^2d\tau\\
\leq & C\hbar^2\|\nabla\partial^{\alpha}(u,\rho,\hbar\nabla\rho)(0)\|^2 +C\hbar^4E\int_0^t\|\Delta\partial^{\alpha}u(\tau)\|^2d\tau +C\hbar^2E\int_0^t\|\Delta\partial^{\alpha}\rho(\tau)\|^2d\tau \\
& +C\hbar^2\int_0^t\|\partial^{\alpha}\nabla\theta(\tau)\|^2d\tau +C\hbar^2E\int_0^t\|\nabla(\rho,u,\theta)(\tau)\|_{\dot H^{s}}^2d\tau.
\end{split}
\end{equation}
for all $E\leq \varepsilon\leq \varepsilon_0$ and $\hbar<\hbar_0$, where $C$ is independent of $t$.
\end{lemma}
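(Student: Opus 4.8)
The plan is to repeat the energy argument of Lemma \ref{lem3}, which is exactly the case $\alpha=0$, but now at derivative level $\alpha$. Concretely, I would apply $\partial^{\alpha}$ to the momentum equation \eqref{QF2} and pair the result in $L^2$ with the multiplier $-\hbar^2\Delta\partial^{\alpha}u$. The time-derivative term then yields $\frac{\hbar^2}{2}\frac{d}{dt}\|\nabla\partial^{\alpha}u\|^2$, while the two viscous terms give the coercive contributions $\mu\hbar^2\int|\Delta\partial^{\alpha}u|^2/(1+\rho)$ and $(\mu+\lambda)\hbar^2\int|\nabla\nabla\cdot\partial^{\alpha}u|^2/(1+\rho)$; as in the treatment of $L_3$ in Lemma \ref{lem3}, the weight $1/(1+\rho)$ is moved outside by integration by parts and is comparable to $1$ by \eqref{equ12}, so that after absorbing commutator errors of size $C\hbar^2E\|\nabla\partial^{\alpha}(\rho,u)\|^2$ these terms supply the dissipation $\nu_0\hbar^2\int\|\Delta\partial^{\alpha}u\|^2$ on the left of \eqref{equ30}.

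The two density energies $\hbar^2\|\nabla\partial^{\alpha}\rho\|^2$ and $\hbar^4\|\nabla^2\partial^{\alpha}\rho\|^2$ on the left come, as in Lemma \ref{lem9}, from two of the right-hand side terms after using the continuity equation. The pressure-coupling term $R_3=\hbar^2\int\partial^{\alpha}\big(\frac{\theta+1}{\rho+1}\nabla\rho\big)\cdot\Delta\partial^{\alpha}u$ is integrated by parts and combined with the $\partial^{\alpha}$-differentiated equation \eqref{QF1} (exactly as $R_{31}$ was handled in Lemma \ref{lem9}) to produce the good term $\frac{\hbar^2}{2}\frac{d}{dt}\int\frac{\theta+1}{(1+\rho)^2}|\partial^{\alpha}\nabla\rho|^2$, modulo errors controlled by $C\hbar^2E\|\nabla(\rho,u,\theta)\|_{\dot H^s}^2$. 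The remaining lower-order terms are disposed of by integration by parts and Young's inequality: the convection term $R_1$ gives $C\hbar^4E\int\|\Delta\partial^{\alpha}u\|^2$, the pressure gradient $R_2$ yields $C\hbar^2\int\|\partial^{\alpha}\nabla\theta\|^2$ after splitting off a small multiple $\delta\hbar^2\|\Delta\partial^{\alpha}u\|^2$ of the dissipation, and the second quantum term $R_5$, being quadratic in $\nabla\sqrt{\rho+1}$, is bounded by $C\hbar^2E\|\nabla(\rho,u)\|_{\dot H^s}^2$. Throughout, the composition estimate (Lemma \ref{lem0}), Gagliardo--Nirenberg (Lemma \ref{GN}) and the Kato--Ponce commutator estimate (Lemma \ref{Le-inequ}) are the workhorses.

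The decisive term is the first quantum term $R_4=-\frac{\hbar^4}{12}\int\partial^{\alpha}\big(\frac{\Delta\nabla\rho}{1+\rho}\big)\cdot\Delta\partial^{\alpha}u$. Mimicking $R_4$ of Lemma \ref{lem3} and $R_{41}$ of Lemma \ref{lem9}, I would integrate by parts to shift a derivative from $\rho$ onto $u$, isolate the leading piece built from $\partial^{\alpha}\nabla\rho$ and $\partial^{\alpha}\nabla\nabla\cdot u$, and then invoke the continuity equation \eqref{QF1}, differentiated by $\partial^{\alpha}\nabla$, to replace the top-order factor $(1+\rho)\partial^{\alpha}\nabla\nabla\cdot u$ by $-\partial_t\partial^{\alpha}\nabla\rho$ up to convection and commutator terms. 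This produces the remaining good energy $-\frac{\hbar^4}{24}\frac{d}{dt}\int\frac{|\partial^{\alpha}\nabla^2\rho|^2}{(1+\rho)^2}$. The surviving pieces, namely the commutators $[\partial^{\alpha}\nabla,1+\rho]\text{div}\,u$ and $[\nabla^2,\frac{1}{\rho+1}]\nabla\rho$ together with the convective contributions, are genuinely lower order and, via Lemma \ref{Le-inequ}, are controlled by $C\hbar^2E\|\Delta\partial^{\alpha}\rho\|^2$, $C\hbar^4E\|\Delta\partial^{\alpha}u\|^2$ and $C\hbar^2E\|\nabla(\rho,u,\theta)\|_{\dot H^s}^2$. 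I expect $R_4$ to be the main obstacle: the difficulty is organizational rather than conceptual, namely keeping exact track of the powers of $\hbar$ through the repeated integrations by parts and ensuring that every error is either absorbable or already displayed on the right of \eqref{equ30}.

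Finally I would collect all the contributions, choose $\delta$, $\varepsilon_0$ and $\hbar_0$ small enough that the small multiples of $\hbar^2\|\Delta\partial^{\alpha}u\|^2$ and the $E$-weighted errors $\hbar^2E\|\nabla\partial^{\alpha}(\rho,u)\|^2$ are absorbed into the coercive left-hand side, use $1/2<1+\rho<3/2$ from \eqref{equ12} to render the weighted energies equivalent to the plain ones, and integrate in time over $[0,t]$ to arrive at \eqref{equ30}.
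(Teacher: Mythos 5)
Your proposal follows essentially the same route as the paper's proof: the multiplier $-\hbar^2\Delta\partial^{\alpha}u$, the coercive viscous terms with the weight $1/(1+\rho)$ handled by commutator estimates, the generation of the energies $\hbar^2\|\nabla\partial^{\alpha}\rho\|^2$ and $\hbar^4\|\Delta\partial^{\alpha}\rho\|^2$ from the pressure-coupling term $R_3$ and the quantum term $R_4$ via the $\partial^{\alpha}$- (resp.\ $\partial^{\alpha}\nabla$-) differentiated continuity equation, and the final absorption argument. The minor differences in how the errors from $R_1$ and $R_5$ are packaged (e.g.\ $C\hbar^2E\|u\|_{\dot H^{s+1}}^2$ versus $C\hbar^4E\|\Delta\partial^{\alpha}u\|^2$) are immaterial since all such terms already appear on the right of \eqref{equ30}.
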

\begin{proof}
Applying $\partial^{\alpha}$ to \eqref{QF2} and then taking inner product of the resultant with $-\hbar^2\Delta\partial^{\alpha}u$, we obtain
\begin{equation*}
\begin{split}
\sum L_i=&\frac{\hbar^2}2\frac{d}{dt}\|\nabla\partial^{\alpha}u\|^2 +\mu\hbar^2\int\partial^{\alpha}(\frac{\Delta u}{\rho+1})\Delta\partial^{\alpha}u +(\mu+\lambda)\hbar^2\int\partial^{\alpha}(\frac{\nabla\text{div}u}{\rho+1}) \Delta\partial^{\alpha}u\\
=& \hbar^2\int\partial^{\alpha}(u\cdot\nabla u)\Delta\partial^{\alpha}u +\hbar^2\int\partial^{\alpha}\nabla\theta\Delta\partial^{\alpha}u +\hbar^2\int\partial^{\alpha}(\frac{\theta+1}{\rho+1}\nabla\rho) \Delta\partial^{\alpha}u\\
&-\frac{\hbar^4}{12}\int\partial^{\alpha}(\frac{\Delta\nabla\rho}{\rho+1}) \Delta\partial^{\alpha}u +\frac{\hbar^4}{3}\int\partial^{\alpha}(\frac{\text{div}\{\cdots\}}{\rho+1}) \Delta\partial^{\alpha}u =\sum R_i
\end{split}
\end{equation*}
Now, for the term $L_2$, we have by integration by parts
\begin{equation*}
\begin{split}
L_2= & \mu\hbar^2\int\frac{|\Delta\partial^{\alpha}u|^2}{1+\rho} +\mu\hbar^2\int[\partial^{\alpha},\frac{1}{1+\rho}]\Delta u\Delta\partial^{\alpha}u.
\end{split}
\end{equation*}
Since
\begin{equation*}
\begin{split}
\|[\partial^{\alpha},\frac{1}{1+\rho}]\Delta u\|_{L^{2}}\leq & C\|\nabla\rho\|_{L^{\infty}}\|\partial^{\alpha-1}\Delta u\|_{L^2} +C\|\Delta u\|_{L^{3}}\|\partial^{\alpha}(\frac{1}{1+\rho})\|_{L^6}\\
\leq & CE\|\Delta u\|_{\dot H^{s-1}} +C\|\Delta u\|_{H^1}\|\rho\|_{\dot H^{s,6}}\\
\leq & CE\|(\rho,u)\|_{\dot H^{s+1}},
\end{split}
\end{equation*}
thanks to Lemma \ref{lem0} and \eqref{Le-inequ}, we have
\begin{equation*}
\begin{split}
L_2\geq \frac{\mu\hbar^2}{2}\|\Delta\partial^{\alpha}u\|^2 -C\hbar^2E^2\|(\rho,u)\|_{\dot H^{s+1}}^2.
\end{split}
\end{equation*}
Similarly, for the $L_3$, we have
\begin{equation*}
\begin{split}
L_3\geq \frac{(\mu+\lambda)\hbar^2}{2}\|\partial^{\alpha}\nabla\text{div}u\|^2 -C\hbar^2E^2\|(\rho,u)\|_{\dot H^{s+1}}^2.
\end{split}
\end{equation*}
For the RHS term $R_1$, we have by integration by parts twice that
\begin{equation*}
\begin{split}
R_1 = &-\hbar^2\int\partial_j\partial^{\alpha}(u^i\cdot\partial_i u)\partial_j\partial^{\alpha}u\\
= &-\hbar^2\int u^i\partial_j\partial^{\alpha}\partial_iu\cdot\partial_j\partial^{\alpha}u -\hbar^2\int[\partial_j\partial^{\alpha},u^i]\partial_iu\partial_j\partial^{\alpha}u\\
= & \frac{\hbar^2}{2}\int \partial_iu^i|\partial_j\partial^{\alpha}u|^2 -\hbar^2\int[\partial_j\partial^{\alpha},u^i]\partial_iu\partial_j\partial^{\alpha}u\\
\leq & C\hbar^2E\|u\|_{\dot H^{s+1}}^2.
\end{split}
\end{equation*}
For the term $R_2$, we have
\begin{equation*}
\begin{split}
R_2=\hbar^2\int\partial^{\alpha}\nabla\theta\Delta\partial^{\alpha}u \leq \delta_0\mu\hbar^2\|\Delta\partial^{\alpha}u\|^2 +\frac{C\hbar^2}{\delta_0\mu}\|\partial^{\alpha}\nabla\theta\|^2.
\end{split}
\end{equation*}
By integration by parts,
\begin{equation*}
\begin{split}
R_3= & -\hbar^2\int\nabla\partial^{\alpha}(\frac{\theta+1}{\rho+1}\nabla\rho) \nabla\partial^{\alpha}u\\
=& -\hbar^2\int\frac{\theta+1}{\rho+1}\nabla\partial^{\alpha}\nabla\rho\nabla \partial^{\alpha}u -\hbar^2\int\nabla\partial^{\alpha}u[\nabla\partial^{\alpha},\frac{\theta+1}{\rho+1}]\nabla\rho\\
= & \hbar^2\int\frac{\theta+1}{\rho+1}\nabla\partial^{\alpha}\rho\nabla\partial^{\alpha}\text{div}u +\hbar^2\int\nabla(\frac{\theta+1}{\rho+1})\nabla\partial^{\alpha}\rho\nabla\partial^{\alpha}u\\
& -\hbar^2\int\nabla\partial^{\alpha}u[\nabla\partial^{\alpha},\frac{\theta+1}{\rho+1}]\nabla\rho =\sum R_{3i}.
\end{split}
\end{equation*}
It is easy to show that
\begin{equation*}
\begin{split}
R_{32} \leq \hbar^2\|\nabla(\theta,\rho)\|_{L^{\infty}}\|\partial^{\alpha}\nabla(\rho,u)\|^2 \leq C\hbar^2E\|\nabla\partial^{\alpha}(\rho,u)\|^2
\end{split}
\end{equation*}
and by Lemma \ref{lem0} and \ref{Le-inequ}
\begin{equation*}
\begin{split}
R_{33} \leq & C\hbar^2\|\nabla\partial^{\alpha}u\| \|[\nabla\partial^{\alpha},\frac{\theta+1}{\rho+1}]\nabla\rho\|\\
\leq & C\hbar^2\|\nabla\partial^{\alpha}u\| (\|\nabla\rho\|_{\dot H^s}\|\nabla(\rho,\theta)\|_{L^{\infty}} +\|\nabla\rho\|_{L^{\infty}}\|(\rho,\theta)\|_{\dot H^{s+1}})\\
\leq & C\hbar^2E\|\nabla(\rho,u,\theta)\|_{\dot H^{s}}^2.
\end{split}
\end{equation*}
Differentiating the continuity equation \eqref{QF1} with $\partial^{\alpha}$ and then inserting the resultant to $R_{31}$, we obtain
\begin{equation*}
\begin{split}
R_{31}= & -\hbar^2\int\frac{\theta+1}{(1+\rho)^2}\nabla\partial^{\alpha}\rho \{\partial_t\nabla\partial^{\alpha}\rho +\nabla\partial^{\alpha}(u\cdot\nabla\rho) +[\nabla\partial^{\alpha},1+\rho]\text{div}u\}=\sum R_{31i}.
\end{split}
\end{equation*}
It is immediately that
\begin{equation*}
\begin{split}
R_{311}= & -\frac{\hbar^2}2\frac{d}{dt}\int\frac{\theta+1}{(1+\rho)^2}|\nabla\partial^{\alpha}\rho|^2 +\frac{\hbar^2}2\int\partial_t(\frac{\theta+1}{(1+\rho)^2})|\nabla\partial^{\alpha}\rho|^2
\\ \leq & -\frac{\hbar^2}2\frac{d}{dt}\int\frac{\theta+1}{(1+\rho)^2}|\nabla\partial^{\alpha}\rho|^2 +C\hbar^2E\|\partial^{\alpha}\Delta\rho\|^2,
\end{split}
\end{equation*}
thanks to \eqref{equ11} and \eqref{equ29} again. For the term $R_{312}$, we have by integration by parts that
\begin{equation*}
\begin{split}
R_{312} =&  \frac{\hbar^2}2\int\text{div}(\frac{(\theta+1)u}{(1+\rho)^2})|\nabla\partial^{\alpha}\rho|^2 -\hbar^2\int\frac{(\theta+1)}{(1+\rho)^2}\nabla\partial^{\alpha}\rho[\nabla\partial^{\alpha},u] \nabla\rho \\
\leq & C\hbar^2E\|\nabla\partial^{\alpha}(\rho,u)\|^2.
\end{split}
\end{equation*}
The same estimate hold for $R_{313}$. Now, we consider the estimate of $R_4$. By integration by parts, we obtain
\begin{equation*}
\begin{split}
R_4=& -\frac{\hbar^4}{12}\int\frac{\Delta\partial^{\alpha}\nabla\rho}{\rho+1} \Delta\partial^{\alpha}u -\frac{\hbar^4}{12}\int[\partial^{\alpha},\frac{1}{\rho+1}]\Delta\nabla\rho \Delta\partial^{\alpha}u\\
= & \frac{\hbar^4}{12}\int\frac{\Delta\partial^{\alpha}\rho}{\rho+1} \Delta\partial^{\alpha}\text{div}u -\frac{\hbar^4}{12}\int\frac{\Delta\partial^{\alpha}\rho\Delta\partial^{\alpha} u}{(\rho+1)^2}\nabla\rho -\frac{\hbar^4}{12}\int[\partial^{\alpha},\frac{1}{\rho+1}]\Delta\nabla\rho \Delta\partial^{\alpha}u\\
= & \sum R_{4i}.
\end{split}
\end{equation*}
For the last two terms, it can be shown that
\begin{equation*}
\begin{split}
R_{42}+R_{43}\leq & C\hbar^4\|\nabla\rho\|_{\infty}\|\Delta\partial^{\alpha}(\rho,u)\|^2 +C\hbar^4\|\Delta\partial^{\alpha}u\|\|\Delta\nabla\rho\|_{L^3} \|\partial^{\alpha}(1/(1+\rho))\|_{L^6}\\
\leq & C\hbar^4E\|\Delta\partial^{\alpha}(\rho,u)\|^2  +C\hbar^4E\|\nabla\rho\|_{\dot H^{s}}^2.
\end{split}
\end{equation*}
Using the continuity equation \eqref{QF1} and similar to the term $R_{31}$, it can be shown that
\begin{equation*}
\begin{split}
R_{41}= & -\frac{\hbar^4}{12}\int\frac{\Delta\partial^{\alpha}\rho}{(1+\rho)^2} \{\partial_t\partial^{\alpha}\Delta\rho +\partial^{\alpha}\Delta(u\cdot\nabla\rho) +[\partial^{\alpha}\Delta,1+\rho]\text{div}u\}\\
\leq & -\frac{\hbar^4}{24}\frac{d}{dt}\int\frac{|\partial^{\alpha}\Delta\rho|^2}{(1+\rho)^2} +C\hbar^4E\|\partial^{\alpha}\Delta(\rho,u)\|^2.
\end{split}
\end{equation*}
Finally, for $R_5$, we have
\begin{equation*}
\begin{split}
R_5= & \frac{\hbar^4}{12}\int\partial^{\alpha}(\frac{\text{div} \{\nabla\rho\otimes\nabla\rho/(1+\rho)\}}{\rho+1})\Delta\partial^{\alpha}u\\
\leq & C\hbar^4E\|\partial^{\alpha}\Delta(\rho,u)\|^2 +C\hbar^4E\|\nabla\rho\|_{\dot H^{s}}^2.
\end{split}
\end{equation*}
Now, putting all these estimates together, and taking $\delta_0=1/4$, we obtain,
\begin{equation*}
\begin{split}
\frac{\hbar^2}2\frac{d}{dt}&\|\nabla\partial^{\alpha}u\|^2 +\frac{\mu\hbar^2}{2}\|\Delta\partial^{\alpha}u\|^2 +\frac{(\mu+\lambda)\hbar^2}{2}\|\partial^{\alpha}\nabla\text{div}u\|^2\\
&+\frac{\hbar^2}2\frac{d}{dt}\int\frac{\theta+1}{(1+\rho)^2}|\nabla\partial^{\alpha}\rho|^2 +\frac{\hbar^4}{24}\frac{d}{dt}\int\frac{|\partial^{\alpha}\Delta\rho|^2}{(1+\rho)^2}\\
\leq & C\hbar^4E\|\Delta\partial^{\alpha}u\|^2 +C\hbar^2E\|\Delta\partial^{\alpha}\rho\|^2 +C\hbar^2\|\partial^{\alpha}\nabla\theta\|^2 +C\hbar^2E\|\nabla(\rho,u,\theta)\|_{\dot H^{s}}^2.
\end{split}
\end{equation*}
Integrating in time over $[0,t]$ completes the proof.
\end{proof}

\begin{lemma}\label{lem11}
Under the assumptions in Proposition \ref{prop3}, there exists some constant $\varepsilon_0<1$ sufficiently small and $\hbar_0<1$, such that
\begin{equation*}
\begin{split}
&(2\mu+\lambda)\beta\|\partial^{\gamma}\Delta\rho(t)\|^2 +\beta\int_0^t\|\partial^{\gamma}\Delta(\rho,\hbar\nabla\rho)(\tau)\|^2d\tau\\
\leq &C\beta\|\partial^{\gamma}\Delta\rho_0\|^2 +\beta\|\partial^{\gamma}\nabla u_0\|^2 +C\beta\|\partial^{\gamma}\text{div}u(t)\|^2\\
&+C\beta\int_0^t\|\partial^{\gamma}\Delta\theta\|^2d\tau +C\beta\int_0^t\|\partial^{\alpha}\nabla u\|_{L^2}^2d\tau
+C\beta\int_0^t\|\nabla(\rho,u,\theta)\|^2_{\dot H^{s-1}}d\tau,
\end{split}
\end{equation*}
for any $\beta>0$, $E\leq \varepsilon\leq \varepsilon_0$ and $\hbar\leq\hbar_0$, where $C$ depends only on $\mu,\lambda$ and some Sobolev constants. In particular, $C$ does not depend on $\hbar>0$ or $t>0$.
\end{lemma}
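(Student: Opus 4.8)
The plan is to recover the top-order dissipation of the density --- which the parabolic part of \eqref{QF} does not supply directly --- by playing the momentum equation \eqref{QF2} against the continuity equation \eqref{QF1}, in the spirit of the Matsumura--Nishida device already used in \eqref{equ6}, but carried out one order higher and augmented by the quantum dispersion. Throughout, $\gamma$ is a multi-index with $|\gamma|=s-1$, so that $\partial^{\gamma}\Delta\rho$ is of order $s+1$. First I apply $\partial^{\gamma}\text{div}$ to \eqref{QF2}: freezing the coefficient $\tfrac{1}{1+\rho}$ at its leading value $1$ (the deviations being carried as commutators), the viscous part collapses to $(2\mu+\lambda)\partial^{\gamma}\Delta\text{div}u$, the pressure gradient $-\tfrac{\theta+1}{\rho+1}\nabla\rho$ produces $-\partial^{\gamma}\Delta\rho$ to leading order, and the quantum term $\tfrac{\hbar^2}{12}\tfrac{\Delta\nabla\rho}{\rho+1}$ contributes $\tfrac{\hbar^2}{12}\partial^{\gamma}\Delta^2\rho$. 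Differentiating \eqref{QF1} by $\partial^{\gamma}\Delta$ gives $\partial_t\partial^{\gamma}\Delta\rho=-\partial^{\gamma}\Delta\text{div}u+(\text{lower order})$, which I substitute into the viscous term so that $-(2\mu+\lambda)\partial^{\gamma}\Delta\text{div}u$ turns into $+(2\mu+\lambda)\partial_t\partial^{\gamma}\Delta\rho$.

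I then take the $L^2$ inner product of the resulting identity with $\partial^{\gamma}\Delta\rho$ and integrate over $[0,t]$. The viscous contribution $(2\mu+\lambda)\int\partial_t\partial^{\gamma}\Delta\rho\,\partial^{\gamma}\Delta\rho$ generates the pointwise quantity $\tfrac{2\mu+\lambda}{2}\tfrac{d}{dt}\|\partial^{\gamma}\Delta\rho\|^2$, which is precisely the boundary term on the left of the claim; the remaining time-derivative term $\int\partial_t\partial^{\gamma}\text{div}u\,\partial^{\gamma}\Delta\rho$ is written as $\tfrac{d}{dt}\int\partial^{\gamma}\text{div}u\,\partial^{\gamma}\Delta\rho$ minus a term that, upon using continuity once more, equals $-\|\partial^{\gamma}\nabla\text{div}u\|^2+(\text{l.o.t.})$. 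The pressure gradient supplies the coercive dissipation $\|\partial^{\gamma}\Delta\rho\|^2$, the quantum term supplies $\tfrac{\hbar^2}{12}\|\partial^{\gamma}\nabla\Delta\rho\|^2$ after one integration by parts, and the temperature term gives $-\int\partial^{\gamma}\Delta\theta\,\partial^{\gamma}\Delta\rho$. Schematically,
\begin{equation*}
\begin{split}
&\frac{d}{dt}\left\{\frac{2\mu+\lambda}{2}\|\partial^{\gamma}\Delta\rho\|^2+\int\partial^{\gamma}\text{div}u\,\partial^{\gamma}\Delta\rho\right\}+\|\partial^{\gamma}\Delta\rho\|^2+\frac{\hbar^2}{12}\|\partial^{\gamma}\nabla\Delta\rho\|^2\\
&\qquad\leq \|\partial^{\gamma}\nabla\text{div}u\|^2-\int\partial^{\gamma}\Delta\theta\,\partial^{\gamma}\Delta\rho+(\text{l.o.t.}).
\end{split}
\end{equation*}

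Integrating in time and applying Young's inequality to the cross term $\int\partial^{\gamma}\text{div}u(t)\,\partial^{\gamma}\Delta\rho(t)$, I absorb $\tfrac{2\mu+\lambda}{4}\|\partial^{\gamma}\Delta\rho(t)\|^2$ into the left-hand side while the remainder becomes $C\|\partial^{\gamma}\text{div}u(t)\|^2$; the initial cross term is likewise bounded by $C\|\partial^{\gamma}\Delta\rho_0\|^2+\|\partial^{\gamma}\nabla u_0\|^2$. The source $\int_0^t\|\partial^{\gamma}\nabla\text{div}u\|^2$ is dominated by $\int_0^t\|\nabla u\|_{\dot H^s}^2$, i.e. by the $\partial^{\alpha}\nabla u$ term, since $|\gamma|+2=s+1=|\alpha|+1$; the $\Delta\theta$ cross term is split by Young into an absorbable $\tfrac14\int_0^t\|\partial^{\gamma}\Delta\rho\|^2$ and $C\int_0^t\|\partial^{\gamma}\Delta\theta\|^2$; and every lower-order and nonlinear remainder is estimated by $C\int_0^t\|\nabla(\rho,u,\theta)\|^2_{\dot H^{s-1}}$ via Lemmas \ref{lem0}, \ref{GN} and \ref{Le-inequ} together with the smallness $E\leq\varepsilon_0$. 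Multiplying the whole estimate by $\beta$ produces exactly the asserted inequality.

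The main obstacle is the rigorous treatment of the variable coefficients $\tfrac{1}{1+\rho}$ and $\tfrac{\theta+1}{1+\rho}$ once $\partial^{\gamma}\text{div}$ is distributed, and in particular the quantum commutator $[\partial^{\gamma},\tfrac{1}{1+\rho}]\Delta\nabla\rho$ tested against $\partial^{\gamma}\Delta\rho$, which lives at the very top order $s+2$. This must be handled by the Kato--Ponce estimate of Lemma \ref{Le-inequ} and the composition estimate of Lemma \ref{lem0}, extracting a factor $E$ so that the $\hbar^2$-weighted contributions are absorbed into the coercive term $\tfrac{\hbar^2}{12}\|\partial^{\gamma}\nabla\Delta\rho\|^2$ after $\varepsilon_0$ and $\hbar_0$ are taken small; one has to verify that no negative power of $\hbar$ is created in these absorptions. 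A secondary technical point is that substituting the continuity equation into the viscous term creates error terms carrying $\partial_t\rho$ and $u\cdot\nabla\rho$, whose time regularity is furnished by \eqref{equ11} and \eqref{equ29}, exactly as in the treatment of the terms $R_{311}$ and $R_{41}$ in the preceding lemmas.
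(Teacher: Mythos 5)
Your proposal is correct and follows essentially the same route as the paper: the paper applies $\partial^{\gamma}$ to \eqref{QF2} and tests against $-\beta\partial^{\gamma}\nabla\Delta\rho$, which after one integration by parts is exactly your $\partial^{\gamma}\mathrm{div}$-then-test-against-$\partial^{\gamma}\Delta\rho$ scheme, and it likewise produces the boundary term $(2\mu+\lambda)\beta\|\partial^{\gamma}\Delta\rho(t)\|^2$ by substituting the continuity equation into the viscous terms, the endpoint cross terms $\int\partial^{\gamma}\mathrm{div}u\,\partial^{\gamma}\Delta\rho$ from the $u_t$ term, coercivity from the pressure and $\hbar^2$-dispersive terms, and Kato--Ponce for the variable-coefficient commutators. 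The only discrepancy is a harmless sign slip in your prose describing the $\partial_t\partial^{\gamma}\Delta\rho$ substitution; your displayed schematic inequality and the subsequent absorptions match the paper's estimates of $L_1$--$L_8$.
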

\begin{proof}
Let $\gamma$ be a multi-index such that $|\gamma|=|\alpha|-1$ and $\gamma\leq \alpha$. We apply $\partial^{\gamma}$ to \eqref{QF2} and then take the inner product of the resultant with $-\beta\partial^{\gamma}\nabla\Delta\rho$ to obtain
\begin{equation}\label{equ41}
\begin{split}
\sum_{i=1}^8L_i=&-\beta\int \partial^{\gamma}u_t\cdot\partial^{\gamma}\nabla\Delta\rho +\beta\mu\int\partial^{\gamma}\left(\frac{\Delta u}{1+\rho}\right)\partial^{\gamma}\nabla\Delta\rho \\
&+\beta(\mu+\lambda)\int\partial^{\gamma}\left(\frac{\nabla\text{div}u} {1+\rho}\right)\partial^{\gamma}\nabla\Delta\rho -\beta\int\partial^{\gamma}(u\cdot\nabla u)\partial^{\gamma}\nabla\Delta\rho \\
& -\beta\int\partial^{\gamma}\nabla\theta\partial^{\gamma}\nabla\Delta\rho -\beta\int\partial^{\gamma}\left(\frac{\theta+1}{1+\rho}\nabla\rho\right) \partial^{\gamma}\nabla\Delta\rho \\
& +\frac{\beta\hbar^2}{12}\int \partial^{\gamma}\left(\frac{\nabla\Delta\rho}{1+\rho}\right)\partial^{\gamma}\nabla\Delta\rho -\frac{\beta\hbar^2}{3}\int\partial^{\gamma}\left(\frac{\text{div}\{\cdots\}} {1+\rho}\right)\partial^{\gamma}\nabla\Delta\rho=0.
\end{split}
\end{equation}
We first note that by H\"older inequality
\begin{equation*}
\begin{split}
L_7= & \frac{\beta\hbar^2}{12}\int\frac{|\partial^{\gamma}\nabla\Delta\rho|^2}{1+\rho} +\frac{\beta\hbar^2}{12}\int[\partial^{\gamma},\frac{1}{1+\rho}]\nabla\Delta\rho \partial^{\gamma}\nabla\Delta\rho\\
\geq & \frac{\beta\hbar^2}{18}\|\partial^{\gamma}\nabla\Delta\rho\|^2 -C\beta\hbar^2\|\partial^{\gamma}\nabla\Delta\rho\| (\|\nabla\Delta\rho\|_{\dot H^{s-2}}\|\nabla\rho\|_{L^{\infty}} +\|\nabla\Delta\rho\|_{L^3}\|\partial^{\gamma}(\frac{1}{1+\rho})\|_{L^6})\\
\geq & \frac{\beta\hbar^2}{36}\|\partial^{\gamma}\nabla\Delta\rho\|^2 -C\beta\hbar^2E\|\nabla\rho\|_{\dot H^s}^2 -C\beta\hbar^2E\|\nabla\rho\|_{\dot H^{s-1}}^2.
\end{split}
\end{equation*}
and by integration by parts and H\"older inequality
\begin{equation*}
\begin{split}
L_6= &\beta\int\frac{1+\theta}{1+\rho}\partial^{\gamma}\Delta\rho \partial^{\gamma}\Delta\rho  +\beta\int[\partial^{\gamma}\nabla\cdot,\frac{1+\theta}{1+\rho}]\nabla\rho \partial^{\gamma}\Delta\rho\\
\geq & \frac{\beta}{18}\|\partial^{\gamma}\Delta\rho\|^2 -C\beta\|\partial^{\gamma}\Delta\rho\| \left(\|\nabla\rho\|_{\dot H^{s-1}}\|\nabla(\rho,\theta)\|_{L^{\infty}} +\|\nabla\rho\|_{L^{\infty}}\|\partial^{\gamma}\nabla(\frac{1+\theta}{1+\rho})\|\right) \\
\geq & \frac{\beta}{36}\|\partial^{\gamma}\Delta\rho\|^2 -C\beta E^2\|\nabla(\rho,\theta)\|^2_{\dot H^{s-1}}.
\end{split}
\end{equation*}
For the term $L_8$, we have
\begin{equation*}
\begin{split}
|L_8|\leq \delta_1\beta\hbar^2\|\partial^{\gamma}\nabla\Delta\rho\|^2 +\frac{C\beta\hbar^2E^2}{\delta_1}\|\nabla\rho\|_{\dot H^{s}}^2.
\end{split}
\end{equation*}
For the term $L_5$, we have by integration by parts
\begin{equation*}
\begin{split}
|L_5|=\left|\beta\int\partial^{\gamma}\Delta\theta\partial^{\gamma}\Delta\rho\right| \leq \beta\delta_1\|\partial^{\gamma}\Delta\rho\|^2 +\frac{\beta}{4\delta_1}\|\partial^{\gamma}\Delta\theta\|^2.
\end{split}
\end{equation*}
For the term $L_4$, we have by integration by parts
\begin{equation*}
\begin{split}
|L_4|=& \left|\beta\int\nabla\partial^{\gamma}(u\cdot\nabla u)\partial^{\gamma}\Delta\rho\right|\\
\leq & \beta\|\partial^{\gamma}\Delta\rho\|(\|u\|_{L^{\infty}}\|\nabla u\|_{\dot H^{s}} +\|\nabla u\|_{\dot H^{s-1}}\|\nabla u\|_{L^{\infty}})\\
\leq & \beta\delta_1\|\partial^{\gamma}\Delta\rho\|^2 +\frac{C\beta E^2}{\delta_1}\|\nabla u\|_{\dot H^{s-1}}^2.
\end{split}
\end{equation*}
For the term $L_3$, we have by integration by parts,
\begin{equation*}
\begin{split}
(\mu+\lambda)^{-1}L_3 =&-\beta\int\partial^{\gamma}\nabla\cdot(\frac{\nabla\text{div}u} {1+\rho})\partial^{\gamma}\Delta\rho\\
=&-\beta\int\frac{\partial^{\gamma}\Delta\text{div}u} {1+\rho}\partial^{\gamma}\Delta\rho -\beta\int[\partial^{\gamma}\nabla,\frac{1}{1+\rho}]\nabla\text{div}u \partial^{\gamma}\Delta\rho\\
= & -\beta\int\frac{\partial^{\gamma}\Delta((1+\rho)\text{div}u)} {(1+\rho)^2}\partial^{\gamma}\Delta\rho +\beta\int\frac{[\partial^{\gamma}\Delta,1+\rho]\text{div}u} {(1+\rho)^2}\partial^{\gamma}\Delta\rho\\
& -\beta\int[\partial^{\gamma}\nabla,\frac{1}{1+\rho}]\nabla\text{div}u \partial^{\gamma}\Delta\rho=:L_{31}+L_{32}+L_{33}.
\end{split}
\end{equation*}
For the first term $L_{31}$, using \eqref{QF1}, it is easy to show by integration by parts,
\begin{equation*}
\begin{split}
L_{31}=& \frac{\beta}{2}\frac{d}{dt}\int\frac{|\partial^{\gamma}\Delta\rho|^2} {(1+\rho)^2} +\beta\int\frac{\rho_t|\partial^{\gamma}\Delta\rho|^2}{(1+\rho)^3} -\frac{\beta}{2}\int\text{div}(\frac{u}{(1+\rho)^2})|\partial^{\gamma}\Delta\rho|^2\\
& +\beta\int\frac{[\partial^{\gamma}\Delta,u]\nabla\rho}{(1+\rho)^2} \partial^{\gamma}\Delta\rho\\
\geq & \frac{\beta}{2}\frac{d}{dt}\int\frac{|\partial^{\gamma}\Delta\rho|^2} {(1+\rho)^2} -C\beta E\|\partial^{\gamma}\Delta(\rho,u)\|^2,
\end{split}
\end{equation*}
thanks to \eqref{equ11} with $p=\infty$ and Lemma \ref{Le-inequ}. By commutator estimates,
\begin{equation*}
\begin{split}
|L_{32}|\leq & C\beta\|\partial^{\gamma}\Delta u\|(\|\nabla\rho\|_{L^{\infty}}\|\text{div}u\|_{\dot H^{s}} +\|\partial^{\gamma}\Delta\rho\|\|\text{div}u\|_{L^{\infty}})\\
\leq & C\beta E\|\nabla(\rho,u)\|_{\dot H^s}^2,
\end{split}
\end{equation*}
and
\begin{equation*}
\begin{split}
|L_{33}|\leq & C\beta\|\partial^{\gamma}\Delta\rho\| (\|\nabla\text{div}u\|_{\dot H^{s-1}}\|\nabla\rho\|_{L^{\infty}} +\|\nabla\text{div}u\|_{L^3}\|\rho\|_{\dot H^{s,6}})\\
\leq & C\beta E\|\nabla(\rho,u)\|_{\dot H^s}^2.
\end{split}
\end{equation*}
The term $L_2$ can be treated similarly and will lead to
\begin{equation*}
\begin{split}
L_2 \geq & \frac{\beta\mu}{2}\frac{d}{dt}\int\frac{|\partial^{\gamma}\Delta\rho|^2} {(1+\rho)^2} -C\beta E\|\nabla(\rho,u)\|_{\dot H^s}^2.
\end{split}
\end{equation*}
Now, we focus on $L_1$. Integrating $L_1$ in time over $[0,t]$, we have
\begin{equation*}
\begin{split}
\beta\int_0^tL_1(s)ds= &\beta\int_0^t\int \partial^{\gamma}\text{div}u_t\partial^{\gamma}\Delta\rho d\tau\\
=& \left.\beta\int_{\Bbb R^3}\partial^{\gamma}\text{div}u\partial^{\gamma}\Delta\rho dx\right|_0^t -\beta\int_0^t\int_{\Bbb R^3}\partial^{\gamma}\text{div}u\partial^{\gamma}\Delta\rho_tdxd\tau\\
=& : I(t)-I(0)-\beta\int_0^tL_{12}(s)ds.
\end{split}
\end{equation*}
By direct estimates, we have
\begin{equation*}
\begin{split}
|I(t)|\leq & \delta_1\beta\|\partial^{\gamma}\Delta\rho(t)\|^2 +\frac{\beta}{\delta_1}\|\partial^{\gamma}\text{div}u(t)\|^2,\\
|I(0)|\leq & \beta\|\partial^{\gamma}\Delta\rho_0\|^2 +\beta\|\partial^{\gamma}\nabla u_0\|^2.
\end{split}
\end{equation*}
For $\beta\int_0^tL_{12}(s)ds$, we have by integration by parts and \eqref{QF1}
\begin{equation*}
\begin{split}
-\beta\int_0^tL_{12}(s)ds
= & -\beta\int_0^t\int_{\Bbb R^3}\nabla\partial^{\gamma}\text{div}u\nabla\partial^{\gamma}(u\cdot\nabla\rho)dxds\\
& -\beta\int_0^t\int_{\Bbb R^3}\nabla\partial^{\gamma}\text{div}u \nabla\partial^{\gamma}[(1+\rho)\text{div}u]dxds\\
\leq & C\beta\int_0^t\|\nabla\partial^{\gamma}\text{div}u\|(\|u\|_{L^{\infty}} \|\nabla\rho\|_{\dot H^{s}} +\|u\|_{\dot H^{s,6}}\|\nabla\rho\|_{L^{3}})d\tau\\
&+C\beta\int_0^t\|\nabla\partial^{\gamma}\text{div}u\| (\|\text{div}u\|_{\dot H^s} +\|\text{div}u\|_{L^{3}}\|\rho\|_{\dot H^{s,6}})d\tau\\
\leq & C\beta\int_0^t\|\nabla\partial^{\gamma}\text{div}u\|^2d\tau +C\beta E^2\int_0^t\|\nabla(\rho,u)\|_{\dot H^s}^2d\tau.
\end{split}
\end{equation*}
Now, we fix some $\delta_1$ small, depending on $\mu$ and $\lambda$, such that $\delta_1=\min\{(2\mu+\lambda)/18,1/142\}$, where $C$ is the constant appearing in the above estimates. Then for such $\delta_1$, fix some small $\varepsilon_0$ such that $\varepsilon_0<\min\{1,1/(72C)\}$. By integrating \eqref{equ41} in time over $[0,t]$, and then taking $E<\varepsilon_0$ small, we complete the proof.
\end{proof}

Now, we prove Proposition \ref{prop3}.
\begin{proof}[Proof of Proposition \ref{prop3}]
Adding the estimates in Lemmas \ref{lem8}-\ref{lem11} together. First, we take $\delta_0={\nu_0}/{4(\mu+\lambda)}$, and then $\hbar_0$ small such that $\hbar_0^2=\min\{\delta_0\nu_0\kappa/4C,\kappa/4C,1\}$, then choose $\beta$ small such that $C\beta\leq C\beta_0:=\min\{\kappa/4,\nu_0/4,1/4\}$, and then for such fixed $\beta$, we choose $\varepsilon_0$ small such that $C\varepsilon_0\leq \min\{\nu_0/4,\kappa/4,\beta/4\}$, then there holds,
\begin{equation*}
\begin{split}
&\|\partial^{\alpha}(\rho,u,\theta,\hbar\nabla\rho)(t)\|^2 +\hbar^2\|\nabla\partial^{\alpha}(u,\hbar\nabla\rho)(t)\|^2 +\beta\|\partial^{\gamma}\Delta\rho(t)\|^2 \\
&+\nu_0\int_0^t\|\partial^{\alpha}\nabla(u,\theta,\hbar\nabla u)(\tau)\|^2d\tau +\beta\int_0^t\|\partial^{\gamma}\Delta(\rho,\hbar\nabla\rho)(\tau)\|^2d\tau \\
\leq & C\|\partial^{\alpha}(\rho,u,\theta,\hbar\nabla\rho)(0)\|^2 +C\hbar^2\|\nabla\partial^{\alpha}(u,\hbar\nabla\rho)(0)\|^2 +C\|\partial^{\gamma}\Delta\rho_0\|^2 \\
&
+C\int_0^t\|\nabla(\rho,u,\theta)\|^2_{\dot H^{s-1}} +C\hbar^2\int_0^t\|\nabla(\rho,u)(\tau)\|_{\dot H^{s}}^2d\tau,
\end{split}
\end{equation*}
for all $E\leq\varepsilon_0$, where $C$ is independent of $t$. Rephrasing this in the ${|||}\cdot|||$ norm, we obtain
\begin{equation*}
\begin{split}
&{|||}\partial^{\alpha}(\rho,u,\theta)(t)|||^2 +\nu_0\int_0^t\|\partial^{\alpha}\nabla(\rho,u,\theta,\hbar\nabla\rho,\hbar\nabla u)(\tau)\|^2d\tau \\
\leq & C{|||}\partial^{\alpha}(\rho,u,\theta)(0)|||^2
+C\int_0^t\|\nabla(\rho,u,\theta)\|^2_{H^{s-1}} +C\hbar^2\int_0^t\|\nabla(\rho,u)(\tau)\|_{\dot H^{s}}^2d\tau.
\end{split}
\end{equation*}
The proof is complete.
\end{proof}

Now, we are ready to obtain \emph{a priori} estimate for the solution of \eqref{QF}.
\begin{theorem}\label{thm3}
Suppose that for some $T>0$, $(\rho, u, \theta)\in\mathcal E_3(0,T)$ is a solution of \eqref{QF} satisfying $E\leq \max_{0\leq t\leq T}{|||}(\rho,u,\theta)(\tau)|||_3\leq \varepsilon$. Then there exists some $\varepsilon_0>0$, $\nu_0=\nu_0(\varepsilon_0)>0$ and $C_0=C_0(\varepsilon_0,\nu_0)$ such that
\begin{equation*}
\begin{split}
&{|||}(\rho,u,\theta)(t)|||_3^2 +\nu_0\int_0^t\|D(\rho,u,\theta,\hbar\nabla\rho,\hbar\nabla u)(\tau)\|_{H^3}^2d\tau \leq C{|||}(\rho,u,\theta)(0)|||_3^2,
\end{split}
\end{equation*}
for any $(\rho,u,\theta)(t)$ satisfying $E\leq \varepsilon_0$.
\end{theorem}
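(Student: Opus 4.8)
The plan is to assemble the complete third-order bound from the zeroth-order estimate of Proposition \ref{prop1} and the order-by-order higher estimates of Proposition \ref{prop3} by an induction on the differentiation order $s=|\alpha|$, keeping throughout the smallness hypotheses $E\le\varepsilon_0$ and $\hbar\le\hbar_0$ supplied by those propositions (shrinking $\varepsilon_0,\hbar_0$ finitely often if needed). Summing \eqref{equ43} over all multi-indices of fixed length $|\alpha|=s$ for $s=1,2,3$ and rewriting the left side in terms of the homogeneous pieces of the $|||\cdot|||$ norm, I obtain for each $s$ an inequality of the schematic form
\begin{equation*}
\mathcal N_s(t)+\nu_0\int_0^t\mathcal D_s(\tau)\,d\tau\le C\mathcal N_s(0)+C\int_0^t\Big(\|\nabla(\rho,u,\theta)\|_{H^{s-1}}^2+\|\hbar\Delta(\rho,u)\|_{H^{s-1}}^2\Big)\,d\tau,
\end{equation*}
where $\mathcal N_s$ collects the order-$s$ part of $|||\cdot|||$ and $\mathcal D_s=\|\partial^\alpha\nabla(\rho,u,\theta,\hbar\nabla\rho,\hbar\nabla u)\|^2$ is the order-$s$ dissipation.

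The decisive structural observation is that the two remainder integrals on the right are strictly one derivative below the dissipation $\mathcal D_s$ on the left: $\|\nabla(\rho,u,\theta)\|_{H^{s-1}}^2$ reaches only order $s$ in $(\rho,u,\theta)$ whereas $\mathcal D_s$ controls order $s+1$, and likewise $\|\hbar\Delta(\rho,u)\|_{H^{s-1}}^2$ reaches only order $s+1$ in the $\hbar$-weighted $(\rho,u)$ whereas $\mathcal D_s$ controls order $s+2$. Hence these remainders coincide exactly with the accumulated dissipation $\int_0^t\sum_{j\le s-1}\mathcal D_j$ already produced at the lower levels. In particular, at the first level $s=1$ the remainder is precisely $\int_0^t\big(\|\nabla(\rho,u,\theta)\|^2+\|\hbar\Delta(\rho,u)\|^2\big)$, which is the dissipation integral bounded by Proposition \ref{prop1}.

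I would then close the estimate inductively. The base case $s=0$ is Proposition \ref{prop1}, which carries no remainder. Assuming the closed bound $|||(\rho,u,\theta)(t)|||_{s-1}^2+\nu_0\int_0^t\|D(\rho,u,\theta,\hbar\nabla\rho,\hbar\nabla u)\|_{H^{s-1}}^2\,d\tau\le C|||(\rho,u,\theta)(0)|||_{s-1}^2$ has been established at order $s-1$, the two remainder integrals at level $s$ are dominated by $\int_0^t\|D(\cdots)\|_{H^{s-1}}^2\,d\tau$ and hence by $C|||(\rho,u,\theta)(0)|||_{s-1}^2\le C|||(\rho,u,\theta)(0)|||_{s}^2$; moving them to the right and adding the level-$s$ inequality to the level-$(s-1)$ one produces the closed bound at order $s$. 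Iterating up to $s=3$ and recombining the homogeneous pieces into $|||\cdot|||_3$ gives the claim of Theorem \ref{thm3}. Equivalently, one may run a single weighted sum $\sum_{s=0}^{3}\kappa_s(\cdots)$ with $\kappa_0\gg\kappa_1\gg\kappa_2\gg\kappa_3$ chosen so that $C\kappa_s$ is smaller than the retained coefficient $\nu_0\kappa_{s-1}$ of the previous level's dissipation, which absorbs every remainder at once.

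Since Propositions \ref{prop1} and \ref{prop3} already contain all the analytic work — the commutator and Gagliardo–Nirenberg estimates, the treatment of the quantum-correction terms, and the generation of the $\hbar$-weighted dissipation — the only genuine point to verify here is the derivative-counting that renders the remainders strictly subordinate to the dissipation. This bookkeeping, namely matching the orders of $(\rho,u,\theta)$ and of the $\hbar$-weighted $(\rho,u)$ level by level, is the one place where care is needed, and it is exactly what the anisotropic choice $(\rho,u,\theta)\in H^{k+2}\times H^{k+1}\times H^{k}$ is designed to make consistent.
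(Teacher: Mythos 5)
Your proposal is correct and follows essentially the same route as the paper: the paper closes the estimate by adding the level-$s$ inequality \eqref{equ43} to $(\tfrac{C}{\nu_0}+1)$ times the already-closed level-$(s-1)$ bound for $s=1,2,3$, starting from Proposition \ref{prop1}, which is exactly your inductive/weighted-sum absorption of the remainder integrals into the lower-order dissipation. Your derivative-counting check that $\|\nabla(\rho,u,\theta)\|_{H^{s-1}}^2$ and $\|\hbar\Delta(\rho,u)\|_{H^{s-1}}^2$ are dominated by the accumulated dissipation through level $s-1$ is precisely the bookkeeping that makes the paper's telescoping work.
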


\begin{proof}
Add \eqref{equ43} for $|\alpha|=1$ to $(\frac{C}{\nu_0}+1)$ times \eqref{equ42} to obtain
\begin{equation}\label{equ44}
\begin{split}
{|||}(\rho,u,\theta)(t)|||_1^2 +\nu_0\int_0^t\|D(\rho,u,\theta,\hbar\nabla u,\hbar\nabla\rho)(s)\|_{H^1}^2ds \leq C{|||}(\rho,u,\theta)(0)|||_1^2.
\end{split}
\end{equation}
Then, we add \eqref{equ43} for $|\alpha|=2$ to $(\frac{C}{\nu_0}+1)$ times \eqref{equ44} to obtain
\begin{equation}\label{equ45}
\begin{split}
{|||}(\rho,u,\theta)(t)|||_2^2 +\nu_0\int_0^t\|D(\rho,u,\theta,\hbar\nabla u,\hbar\nabla\rho)(s)\|_{H^2}^2ds \leq C{|||}(\rho,u,\theta)(0)|||_2^2.
\end{split}
\end{equation}
Again, we add \eqref{equ43} for $|\alpha|=3$ to $(\frac{C}{\nu_0}+1)$ times \eqref{equ45}, to obtain
\begin{equation}\label{equ46}
\begin{split}
{|||}(\rho,u,\theta)(t)|||_3^2 +\nu_0\int_0^t\|D(\rho,u,\theta,\hbar\nabla u,\hbar\nabla\rho)(s)\|_{H^3}^2ds \leq C{|||}(\rho,u,\theta)(0)|||_3^2,
\end{split}
\end{equation}
which is the desired estimate, completing the proof.
\end{proof}

Now, we prove Theorem \ref{thm1}.
\begin{proof}[Proof of Theorem \ref{thm1}]
The proof is easy by combining the estimates in Theorem \ref{thm3}, continuity method and the local existence result.
\end{proof}

\section{Proof of Theorem \ref{thm2}}

\subsection{An improved estimate of Matsumura and Nishida \cite{MN80}}
The following estimate was obtained in \cite{MN80}. Under the smallness assumption,
\begin{equation}\label{equ47}
\|(\rho,u,\theta)(t)\|_3^2+\nu_0\int_0^t\|D\rho(\tau)\|_2^2 +\|D(u,\theta)(\tau)\|_3^2d\tau \leq C_0\|(\rho,u,\theta)(0)\|_3^2
\end{equation}
and the following density estimates
\begin{equation}\label{equ48}
\begin{split}
\|D^4\rho(t)\|^2&-C\|D^{3}u(t)\|^2+\nu_0\int_0^t\|D^3\rho(\tau)\|^2d\tau\\
\leq & C\|\rho_0\|_4^2+C\|u_0\|_3^2+C\int_0^t\|D^4(u,\theta)(\tau)\|^2 +\|D(\rho,u,\theta)(\tau)\|^2_{2}d\tau.
\end{split}
\end{equation}
Adding \eqref{equ48} to $(C+\frac{C}{\nu_0}+1)$ times \eqref{equ47}, one obtains
\begin{equation}\label{equ14}
\begin{split}
\|(\rho,u,\theta)(t)\|_3^2& +\|D^4\rho(t)\|^2 +\nu_0\int_0^t\|D(\rho,u,\theta)(\tau)\|_3^2d\tau \leq  C_0\|(\rho,u,\theta)(0)\|_3^2 +C\|\rho_0\|_4^2.
\end{split}
\end{equation}

\subsection{Proof of Theorem \ref{thm2}}
To clearly specify the dependence of the solution on the parameter $\hbar$, we denote $(\rho^{\hbar},u^{\hbar},\theta^{\hbar})$ the solution of the system \eqref{QF} and $(\rho^{0},u^{0},\theta^{0})$ the solution to \eqref{zero}. First of all, from Theorem \ref{thm1} and \eqref{equ14}, we have the estimates
\begin{equation}\label{equ15}
\begin{split}
&\|(\rho^{\hbar},u^{\hbar},\theta^{\hbar})(t)\|_3^2 +\|D^4\rho^{\hbar}(t)\|^2 +\hbar^2\|(\rho^{\hbar},u^{\hbar})(t)\|_4^2 +\hbar^4\|\rho^{\hbar}(t)\|_5^2\\
\leq & C(\|(\rho_0,u_0,\theta_0)(0)\|_3^2 +\|(\rho_0,u_0)(0)\|_4^2 +\|\rho_0\|_5^2).
\end{split}
\end{equation}
and
\begin{equation}
\begin{split}
\|(\rho^0,u^0,\theta^0)(t)\|_3^2& +\|D^4\rho^0(t)\|^2 \leq  C(\|(\rho_0,u_0,\theta_0)\|_3^2 +C\|\rho_0\|_4^2).
\end{split}
\end{equation}

Now, we let
\[N=\rho^{\hbar}-\rho^0,\ U=u^{\hbar}-u^0,\ \Theta=\theta^{\hbar}-\theta^0.
\]
Then $(N,U,\Theta)$ satisfy
\begin{subequations}\label{diff}
\begin{numcases}{}
\partial_tN+\nabla((1+\rho^0)U+Nu^0)=0,\label{d1}\\
\partial_tU-\frac{\mu}{\rho^{\hbar}+1}\Delta U -\left(\frac{\mu}{\rho^{\hbar}+1}-\frac{\mu}{\rho^0+1}\right)\Delta u^0-\frac{\mu+\lambda}{\rho^{\hbar}+1}\nabla\text{div}U -\left(\frac{\mu+\lambda}{\rho^{\hbar}+1}-\frac{\mu+\lambda}{\rho^0+1}\right) \nabla\text{div}u^0\ \ \ \ \ \ \ \ \ \ \nonumber\\
\ \ \ \ \ =-u^{\hbar}\cdot\nabla U -U\cdot\nabla u^0 -\nabla\Theta -\frac{\Theta}{\rho^{\hbar}+1}\nabla\rho^{\hbar} -\frac{\theta^0+1}{\rho^0+1}\nabla N -\left(\frac{\theta^0+1}{\rho^{\hbar}+1}-\frac{\theta^0+1}{\rho^0+1}\right) \nabla\rho^0\nonumber\\
\ \ \ \ \ \ \ \ \ +\frac{\hbar^2}{12}\frac{\Delta\nabla\rho^{\hbar}}{\rho^{\hbar}+1} -\frac{\hbar^2}{3}\frac{\text{div} (\nabla{\sqrt{\rho^{\hbar}+1}} \otimes\nabla{\sqrt{\rho^{\hbar}+1}})}{\rho^{\hbar}+1},\ \label{d2}\\
\partial_t\Theta -\frac{2\kappa}{3(1+\rho^{\hbar})}\Delta\Theta -\frac{2\kappa}{3}\left(\frac{1}{1+\rho^{\hbar}}-\frac{1}{1+\rho^0}\right) \Delta\theta^0 =-u^0\cdot\nabla\Theta -U\cdot\nabla\theta^0\nonumber\\
\ \ \ \ \ \ \ \ \ -\frac23(\theta^{\hbar}+1)\nabla\cdot U -\frac23\Theta\nabla\cdot u^0 +\frac{\hbar^2}{36(1+\rho^{\hbar})}\text{div}((1+\rho^{\hbar}) \Delta u^{\hbar})\nonumber\\
\ \ \ \ \ \ \ \ \ +\frac{2}{3(1+\rho^{\hbar})}\left\{\frac{\mu}{2}(\nabla(u^{\hbar}+u^0) +(\nabla(u^{\hbar}+u^0))^T)(\nabla U+(\nabla U)^T) +\lambda(\text{div}(u^{\hbar}+u^0)\text{div}U\right\}\nonumber\\
\ \ \ \ \ \ \ \ \ +\frac{2}{3}\left(\frac{1}{1+\rho^{\hbar}}-\frac{1}{1+\rho^{0}}\right) \left\{\frac{\mu}{2}|\nabla u^0+(\nabla u^0)^T|^2+\lambda(\text{div} u^0)^2\right\}.\label{d3}\ \ \ \ \ \ \ \
\end{numcases}
\end{subequations}
Now, we multiply \eqref{diff} with $N,U$ and $\Theta$, respectively, integrate the resultant over $\Bbb R^3$ and then sum them up to obtain an energy inequality. Among the many terms, we only treat the following three typical thems in the following. First, for the viscosity term, we have
\begin{equation*}
\begin{split}
-\int\frac{\mu}{\rho^{\hbar}+1}\Delta UU= & \int\frac{\mu}{\rho^{\hbar}+1}|\nabla U|^2 -\int\frac{\mu\nabla\rho^{\hbar}}{(\rho^{\hbar}+1)^2}\nabla UU\\
\geq & \frac{\mu}{2}\int|\nabla U|^2 -C\|U\|^2,
\end{split}
\end{equation*}
where the constant $C$ depends on $\mu$ and the $H^3$ norm of $\rho^{\hbar}$. Secondly, for the last term on the left of \eqref{d2}, we have
\begin{equation*}
\begin{split}
-\int\left(\frac{\mu+\lambda}{\rho^{\hbar}+1}-\frac{\mu+\lambda}{\rho^0+1}\right) \nabla\text{div}u^0U = & \int\left(\frac{(\mu+\lambda)N}{(\rho^{\hbar}+1)(\rho^0+1)}\right)\nabla\text{div}u^0U\\
\leq & C\|\nabla\text{div}u^0\|_{L^3}\|N\|_{L^2}\|U\|_{L^6}\\
\leq & \frac{\mu}{8}\|\nabla U\|^2 +C\|(N,U)\|_{L^2}^2,
\end{split}
\end{equation*}
where the constant $C$ depends on $\mu$ and the $H^3$ norm of $(\rho^{\hbar},\rho^0,u^0)$. Thirdly, for the second to the last term on the RHS of \eqref{d2}, we have
\begin{equation*}
\begin{split}
\frac{\hbar^2}{12}\int\frac{\Delta\nabla\rho^{\hbar}}{\rho^{\hbar}+1}U \leq \frac{\mu}{8}\|U\|^2+\frac{C\hbar^4}{\mu}\|\rho^{\hbar}\|_{H^2}^2\leq \frac{\mu}{8}\|U\|^2+C\hbar^4,
\end{split}
\end{equation*}
where the constant $C$ may depend on $\mu$ and the $H^2$ norm of $\rho^{\hbar}$. The other terms, either depending linearly on the difference $N,U$ or $\Theta$, or depending on the small parameter $\hbar^2$, can be estimated similarly. Therefore, we finally obtain after long but standard estimates
\begin{equation}\label{equ49}
\begin{split}
\frac12\frac{d}{dt}\|(N,U,\Theta)(t)\|_{L^2}^2 +\nu\|\nabla(U,\Theta)\|^2\leq C\|(N,U,\Theta)(t)\|^2+C\hbar^4,
\end{split}
\end{equation}
where $\nu$ depends on the parameters $\mu$ and $\kappa$, and $C$ depends on the $H^3$ norm of $\|(\rho^{\hbar},u^{\hbar},\theta^{\hbar})\|$.

Similarly, taking inner product with $\Delta(N,U,\Theta)$, one can obtain
\begin{equation}\label{equ50}
\begin{split}
\frac12\frac{d}{dt}\|\nabla(N,U,\Theta)(t)\|_{L^2}^2 +\nu\|\Delta(U,\Theta)\|^2\leq C\|\nabla(N,U,\Theta)(t)\|^2+C\hbar^4,
\end{split}
\end{equation}
where $\nu$ also depends on the parameters $\mu$ and $\kappa$, and $C$ depends on the $H^3$ norm of $\|(\rho^{\hbar},u^{\hbar},\theta^{\hbar})\|$. In the derivation of this inequality, we have the following term,
\begin{equation*}
\begin{split}
\int\frac{\hbar^2}{36(1+\rho^{\hbar})}\text{div}((1+\rho^{\hbar})\Delta u^{\hbar})\Delta\Theta\leq &\frac{\kappa}{4}\|\Delta\Theta\|^2 +C\hbar^4\|\text{div}((1+\rho^{\hbar})\Delta u^{\hbar})\|^2\\
\leq &\frac{\kappa}{4}\|\Delta\Theta\|^2 +C\hbar^4,
\end{split}
\end{equation*}
where $C$ depends on $\kappa$ and the $H^3$-norm of $\rho$ and $\Theta$. This procedure can not be proceeded into the higher norms, since the estimates then depends on the $H^4$-norm of $u^{\hbar}$.

Combining the two inequalities \eqref{equ49} and \eqref{equ50}, we obtain
\begin{equation*}
\begin{split}
\frac12\frac{d}{dt}\|(N,U,\Theta)(t)\|_{H^1}^2 +\nu\|\nabla(U,\Theta)\|_{H^1}^2\leq c_1\|(N,U,\Theta)(t)\|_{H^1}^2+c_2\hbar^4,
\end{split}
\end{equation*}
which implies, thanks to the Gronwall inequality, that
\begin{equation*}
\begin{split}
\|(N,U,\Theta)(t)\|_{H^1}^2\leq \left[{c_2}e^{c_1t}/{c_1}\right]\hbar^4.
\end{split}
\end{equation*}
In particular, we note that $c_1$ and $c_2$ are independent of $\hbar$.

We also remark that we can improve the convergence to the $H^2$-norm of $(N,U,\Theta)$ at the price of losing the decay rate. To be precise, we take the inner product of the system \eqref{diff} with $\Delta^2(N,U,\Theta)$ to obtain an energy inequality. Among the terms, we consider the typical term
\begin{equation*}
\begin{split}
&\int\frac{\hbar^2\text{div}((1+\rho^{\hbar})\Delta u^{\hbar})}{36(1+\rho^{\hbar})}\Delta^2\Theta = -\int\nabla\left(\frac{\hbar^2\text{div}((1+\rho^{\hbar})\Delta u^{\hbar})}{36(1+\rho^{\hbar})}\right)\nabla\Delta\Theta\\
\leq & \frac{\kappa}{4}\|\nabla\Delta\Theta\|^2+ \hbar^4\Big(\|\nabla\text{div}\Delta u^{\hbar}\|_{L^2}^2 +\|\Delta\rho^{\hbar}\|_{L^{6}}^2\|\Delta u^{\hbar}\|_{L^3}^2\\
& +\|\nabla\rho^{\hbar}\|_{L^{\infty}}^2\|\text{div}\Delta u^{\hbar}\|_{L^2}^2 +\|\nabla\rho^{\hbar}\|_{L^{\infty}}^4\|\Delta u^{\hbar}\|_{L^2}^2 \Big)\\
\leq & \frac{\kappa}{4}\|\nabla\Delta\Theta\|^2 +\hbar^2\Big(\|\hbar\nabla\text{div}\Delta u^{\hbar}\|_{L^2}^2 +(1+\|\rho^{\hbar}\|_{H^3}^4)\|u^{\hbar}\|_{H^3}^2\Big)\\
\leq & \frac{\kappa}{4}\|\nabla\Delta\Theta\|^2 +C\hbar^2,
\end{split}
\end{equation*}
thanks to the estimate in \eqref{equ15}. The other terms, either depending linearly on $(N,U,\Theta)$ or depending on the Planck constant $\hbar$, can be treated similarly by making use of \eqref{equ15}. Finally, we arrive at the inequality
\begin{equation*}
\begin{split}
\frac12\frac{d}{dt}\|(N,U,\Theta)(t)\|_{H^2}^2 +\nu\|\nabla(U,\Theta)\|_{H^2}^2\leq c_1\|(N,U,\Theta)(t)\|_{H^2}^2+c_2\hbar^2,
\end{split}
\end{equation*}
which implies, thanks to the Gronwall inequality, that
\begin{equation*}
\begin{split}
\|(N,U,\Theta)(t)\|_{H^2}^2\leq \left[{c_2}e^{c_1t}/{c_1}\right]\hbar^2.
\end{split}
\end{equation*}
In particular, $c_1$ and $c_2$ are independent of $\hbar$. This completes the proof.

\end{CJK*}
\end{document}